\newtheorem{theorem}{Theorem}[section]
\newtheorem{theoremx}{Theorem}
\newtheorem{definition}[theorem]{Definition}
\newtheorem{proposition}[theorem]{Proposition}
\newtheorem{lemma}[theorem]{Lemma}
\newtheorem{remark}[theorem]{Remark}
\begin{document}

\title{On the reduction criterion for random quantum states} %Title of paper

\author{Maria Anastasia Jivulescu}
\address{M.A.J.: Department of Mathematics,
Politehnica University of Timi\c soara,
Victoriei Square 2, 300006 Timi\c soara, Romania}
\email{maria.jivulescu@upt.ro}

\author{Nicolae Lupa}
\address{N.L.: Department of Mathematics,
Politehnica University of Timi\c soara,
Victoriei Square 2, 300006 Timi\c soara, Romania}
\email{nicolae.lupa@upt.ro}

\author{Ion Nechita}
\address{I.N.: CNRS, Laboratoire de Physique Th\'{e}orique, IRSAMC, Universit\'{e} de Toulouse, UPS, F-31062 Toulouse, France}
\email{nechita@irsamc.ups-tlse.fr}

\date{\today}

\begin{abstract}
In this paper we study the reduction criterion for detecting entanglement of large
dimensional bipartite quantum systems. We first obtain an explicit formula for the moments of a random quantum state to which the reduction criterion has been applied. We show that the empirical eigenvalue distribution of this random matrix converges strongly to a limit that we compute, in three different asymptotic regimes. We then employ tools from free probability theory to study the asymptotic positivity of the reduction operators. Finally, we compare the reduction criterion with other entanglement criteria, via thresholds.
\end{abstract}

%\pacs{03.67.Mn, 02.10.Yn, 03.65.Aa, 02.50.Cw}% insert suggested PACS numbers in braces on next line
\subjclass[2010]{60B20, 81P45}
\keywords{quantum entanglement, reduction criterion, random quantum states, Wishart matrices, Wick calculus, entanglement threshold}

\maketitle %\maketitle must follow title, authors, abstract and \pacs

% Body of paper goes here. Use proper sectioning commands.
% References should be done using the \cite, \ref, and \label commands
\section{Introduction}

Entanglement is a fundamental concept in quantum information
theory, considering its applications in quantum teleportation,
quantum cryptography, superdense coding or quantum computing \cite{nc}. One of the most challenging open problems in
the field is to characterize  and classify entangled
states. In its full generality, the problem of deciding whether a mixed quantum state is separable or entangled has been proved to be NP-hard  \cite{gur}.

This problem can be mathematically related to positive
maps on $C^*$-algebras since a quantum
state $\rho_{AB}$ is separable if and only if for all positive
maps $\varphi$ between matrix algebras we have that $[\mathrm{id}\otimes
\varphi](\rho_{AB})\geq 0$, where $\mathrm{id}$ is the identity map on some matrix algebra with appropriate dimension (throughout the paper we will identify states with their density matrices).
Although it is hard to give a full
description of all positive maps, this tool allows to define a
class of necessary conditions for separability, known as
\emph{separability criteria} (positive partial transposition,
reduction, generalized partial transposition).
In the literature, probably the most used entanglement detection tool is the Peres-Horodecki \emph{positive partial transposition} (PPT) criterion \cite{hhh96,per}. This criterion corresponds to the choice $\varphi =\mathrm{transp}$,  the transposition map, which is positive, but not completely positive. It is known from Horodecki et al. \cite{hhh96} that the PPT criterion is sufficient only for  $\mathbb{C}^n \otimes \mathbb{C}^k$  quantum systems with $nk\leq 6$.

In this paper, we are interested in the \emph{reduction criterion}, which is given by the following choice of the positive map
$$ \varphi:M_k(\mathbb{C})\rightarrow M_k(\mathbb{C}),\quad
\varphi(X):=\mathrm{I}_k\cdot\mathrm{Tr} X-X,$$
where $\mathrm{I}_{k}\in M_k(\mathbb C)$ is the identity matrix of size $k$ and $\mathrm{Tr}$ is the usual, unnormalized, matrix trace (here  $M_k(\mathbb C)$ denotes the space of all $k\times k$ complex matrices).
The reduction criterion is known to be weaker than the PPT criterion, the two criteria being equivalent
for  $\mathbb{C}^n \otimes \mathbb{C}^2$  quantum systems  \cite{cag} (and, hence,  the reduction
criterion becomes sufficient for entanglement detection for  $\mathbb{C}^2 \otimes \mathbb{C}^2$ and $\mathbb{C}^3 \otimes \mathbb{C}^2$ quantum systems).
Its importance  is given by the  connection
to entanglement distillation: all states that violate
the reduction criterion are \emph{distillable}  \cite{hho}. Recall that a bipartite entangled state is called distillable if a pure maximally entangled state can be obtained, by local operations and classical communication, from many copies of that state. Note that
the question of whether all states violating the Peres-Horodecki
partial transposition criterion can be distilled is still open to
this day. A generalized version of the reduction criterion with
two parameters is given by Albeverio et al. \cite{acf} For more details about separability criteria, we refer the reader to Horodecki et al. \cite{hhh}

The (normalized) Wishart matrices are known to be
physically reasonable models for random density matrices on a tensor product
space.  In this work, we study the asymptotic eigenvalue
distribution of the \emph{reduced  matrix}
$$R=W^{red}:=W_A \otimes \mathrm{I}_k - W_{AB},$$
where $W=W_{AB}$ is a Wishart matrix of
parameters $nk$ and $s$ and $W_{A}$ is its partial trace with respect to the second subsystem $B$. The matrix above is, up to normalization, equal to $[\mathrm{id} \otimes \varphi](\rho_{AB})$, for a suitable ensemble of random density matrices $\rho_{AB}$. The terminology ``reduced matrix'' might be confusing for readers with background in quantum information theory: the reduced matrix $R$ is not to be confused with $W_A$. Note also that we shall always apply the reduction map $\varphi$ on the second factor of the tensor product, whereas in the literature, sometimes the authors consider both reductions simultaneously. Here, we shall call this (stronger) criterion, where one asks that both matrices
$[\mathrm{id} \otimes \varphi](\rho_{AB})$ and $[\varphi \otimes \mathrm{id}](\rho_{AB})$
should be positive, the \emph{simultaneous reduction criterion}.

The program of studying entanglement criteria for random density matrices has been carried out for other cases in previous work: the PPT criterion has been investigated by Aubrun \cite{aub}, Banica and Nechita \cite{bne12}, and Fukuda and \'{S}niady \cite{fs}, the realignment criterion by Aubrun and Nechita \cite{ane}, absolute PPT random states by Collins et al. \cite{cny} and different block-modifications of Wishart matrices by  Banica and Nechita \cite{bne13}. The present work can be seen as a continuation of the above mentioned line of work, initiated by  Aubrun \cite{aub}.

Our main objective is to derive a \emph{threshold} for the reduction criterion, in the following sense: Consider a random mixed quantum state $\rho_{AB} \in M_n(\mathbb C) \otimes M_k(\mathbb C)$ obtained by partial tracing over $\mathbb{C}^s$ a uniformly distributed, pure quantum state $x \in \mathbb C^n \otimes \mathbb C^k \otimes \mathbb C^s$, where the $s$-dimensional space is treated like an inaccessible environment. What is the probability that the random quantum state $\rho_{AB}$ satisfies the reduction criterion $[\mathrm{id}\otimes
\varphi](\rho_{AB})\geq 0$ ? When one (or both) of the system dimensions $n$ and $k$ are large, a \emph{threshold phenomenon} occurs: if $s \sim cnk$ for some constant $c>0$, then there is a \emph{threshold value} $c_{red}$ of the scaling parameter, such that the following holds:
\begin{enumerate}
\item for all $c < c_{red}$, as dimension $nk$ grows, the probability that $\rho_{AB}$ satisfies the reduction criterion vanishes;
\item for all $c > c_{red}$, as dimension $nk$ grows, the probability that $\rho_{AB}$ satisfies the reduction criterion converges to one.
\end{enumerate}

The threshold phenomenon was introduced by Aubrun \cite{aub} to study the PPT criterion.
The main result of our work consists in the exact computation of the threshold for the reduction criterion. Our result can be stated informally as follows (for precise statements, see Theorems \ref{thm:balanced-strong}, \ref{thm:unbalanced-A-strong},  \ref{thm:unbalanced-B} and \ref{thm:positivity-mu-kc}):

\begin{theoremx}
The thresholds for the reduction criterion obtained by applying the reduction map on the second factor of a quantum state $\rho_{AB} \in M_n(\mathbb C) \otimes M_k(\mathbb C)$, where $\rho_{AB}$ is the partial trace over $\mathbb{C}^s$ of a random pure quantum state from $\mathbb C^{nk} \otimes \mathbb C^s$, are as follows:
\begin{enumerate}
\item In the balanced case ($n \to \infty$, $k \sim tn$, $s \sim cnk$) and in the first unbalanced case ($n$ fixed, $k \to \infty$, $s \sim cnk$), the threshold is trivial, $c_{red} = 0$: asymptotically, all random quantum states satisfy the reduction criterion;
\item In the second unbalanced case ($n \to \infty$, $k$ fixed, $s \sim cnk$), the threshold is
$$c_{red} = \frac{(1+\sqrt{k+1})^2}{k(k-1)}.$$
\end{enumerate}

The thresholds for the \emph{simultaneous} reduction criterion are as follows:
\begin{enumerate}
\item In the balanced case ($n \to \infty$, $k \sim tn$, $s \sim cnk$), the threshold is trivial, $c_{red} = 0$: asymptotically, all random quantum states satisfy the simultaneous reduction criterion;
\item In the unbalanced case ($\max(n,k) \to \infty$, $m:=\min(n,k)$ fixed, $s \sim cnk$), the threshold is
$$c_{red} = \frac{(1+\sqrt{m+1})^2}{m(m-1)}.$$
\end{enumerate}
\end{theoremx}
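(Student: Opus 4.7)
The plan is to translate the reduction-criterion question into a statement about the smallest eigenvalue of the random matrix $R=W_A\otimes I_k - W_{AB}$, and then read off the threshold from the support of its explicit limit measure. The first observation is that, since $\rho_{AB}$ is a positive scalar multiple of $W_{AB}$, the condition $[\mathrm{id}\otimes\varphi](\rho_{AB})\geq 0$ is equivalent to $R\geq 0$, i.e.\ to $\lambda_{\min}(R)\geq 0$. So the threshold question becomes: in each asymptotic regime, for which values of $c$ does $\lambda_{\min}(R)\geq 0$ hold asymptotically almost surely?

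I would then invoke the strong convergence statements (Theorems \ref{thm:balanced-strong}, \ref{thm:unbalanced-A-strong}, \ref{thm:unbalanced-B}), which give in each regime a deterministic limit measure $\mu$ for the empirical spectral distribution of a suitable normalization of $R$, together with convergence of the extreme eigenvalues of $R$ to the edges of $\operatorname{supp}\mu$. Strong convergence (as opposed to mere weak convergence of moments) is essential here because the latter does not preclude outliers. The payoff is a clean dichotomy: $P(R\geq 0)\to 1$ when $\operatorname{supp}\mu\subset[0,\infty)$ and $P(R\geq 0)\to 0$ when $\operatorname{supp}\mu$ extends strictly below zero, so $c_{red}$ is precisely the value of $c$ at which the left edge of $\operatorname{supp}\mu$ crosses zero.

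Next, the support of $\mu$ must be determined in each regime. For the balanced case and the first unbalanced case, I would use the explicit moment formula (derived earlier in the paper via Wick/Weingarten calculus) to verify that $\operatorname{supp}\mu\subset[0,\infty)$ for every $c>0$, yielding the trivial threshold $c_{red}=0$. For the second unbalanced case, Theorem \ref{thm:positivity-mu-kc} is tailor-made: it states that $\operatorname{supp}\mu\subset[0,\infty)$ if and only if $c\geq (1+\sqrt{k+1})^2/(k(k-1))$. The main obstacle lies in this left-edge computation itself, because $W_A\otimes I_k$ and $W_{AB}$ are strongly correlated (since $W_A=\operatorname{Tr}_B W_{AB}$), so one cannot simply free-convolve their marginal laws; the combinatorial identification of the smallest point of $\operatorname{supp}\mu$ requires either careful Stieltjes-transform analysis or direct manipulation of the generating series of moments produced by the Wick calculus.

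Finally, for the simultaneous reduction criterion the event $[\varphi\otimes\mathrm{id}](\rho_{AB})\geq 0$ is the single-sided reduction event with the two tensor factors interchanged, so it falls under the same analysis with $n$ and $k$ swapped. In the balanced regime both individual thresholds are trivial and hence so is the joint one. In the unbalanced regime with $m=\min(n,k)$ fixed, the reduction applied on the larger factor falls into a regime with trivial threshold (by the first unbalanced case after the swap), while the reduction on the smaller factor falls into the second unbalanced regime with parameter $m$; taking the intersection of the two events, the binding constraint is the non-trivial one, giving $c_{red}=(1+\sqrt{m+1})^2/(m(m-1))$.
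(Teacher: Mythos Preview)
Your proposal is correct and follows essentially the same architecture as the paper: the main theorem is a summary statement, and its proof consists precisely of invoking Theorems~\ref{thm:balanced-strong}, \ref{thm:unbalanced-A-strong}, \ref{thm:unbalanced-B}, and \ref{thm:positivity-mu-kc}, together with the symmetry argument (swapping $n$ and $k$) for the simultaneous criterion, exactly as you outline. Two minor remarks: the paper uses only Wick (Gaussian) calculus, not Weingarten; and in the balanced and first unbalanced regimes the paper's strong convergence theorems are not obtained via the general moment formula plus a support check, but rather by direct operator-norm estimates (Bai--Yin and a concentration bound in the balanced case, a Markov inequality on the second moment in the first unbalanced case), so the limit $\mu=\delta_1$ and its positivity come out in one stroke rather than as two separate steps.
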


The paper is organized as follows:
In Section \ref{sec:reduction-criterion} we recall
some known facts about the reduction criterion. Section \ref{sec:elements} provides the necessary background
from combinatorics (non-crossing partitions, permutations) and free probability theory. In Section \ref{sec:Wick} we review some elements of the graphical calculus introduced by Collins and Nechita \cite{cne10a,cne11b}, whereas in Section \ref{sec:Wishart}
we discuss the Wishart ensemble and the construction of the induced measures. In Section \ref{sec:moments} we
give a combinatorial formula for the moments of the reduced  matrix. The spectral
behavior of the reduction operator under different asymptotic regimes
is studied in Sections \ref{sec:balanced}, \ref{sec:unbalanced-A} and \ref{sec:unbalanced-B}. Then, in Section \ref{sec:positivity},   we study
the positivity of the support of the
limiting spectral measure.  Finally, in Section
\ref{sec:vs-other} we compare the threshold for the reduction criterion  to  the thresholds for other separability criteria.

\medskip

\noindent \textit{Acknowledgments.}
The authors would like to thank David Reeb for some useful remarks on a preliminary version of the work. The authors also acknowledge the hospitality of the Technische Universit\"at M\"unchen,  Politehnica University of Timi\c soara and the Laboratoire de Physique Th\'{e}orique, Universit\'{e} de Toulouse, where this research was conducted.

The work of M.A.J. and N.L. was supported by a grant of the Romanian National Authority for Scientific Research, CNCS-UEFISCDI, project number PN-II-ID-JRP-2011-2.
I.N.'s research has been supported by the ANR projects {OSQPI} {2011 BS01 008 01} and {RMTQIT}  {ANR-12-IS01-0001-01}.

\section{The reduction criterion}
\label{sec:reduction-criterion}

The reduction criterion was introduced by Cerf et al. \cite{cag} and by Horodecki et al. \cite{hho} and it has been recognized as one of the most important ones in entanglement detection. In particular, its connection to entanglement distillation has been put forward by  Horodecki et al. \cite{hho}

Recall that in quantum mechanics, the state of a system is characterized by \emph{density matrices}, i.e. positive semidefinite matrices of unit trace.
For a bipartite system described by a density matrix living in a tensor product space $\rho=\rho_{AB} \in M_{n}(\mathbb{C})\otimes M_{k}(\mathbb{C}) \cong M_{nk}(\mathbb C)$, define its reduction on the second subspace as
\begin{equation}\label{eq:def-rho-red}
\rho^{red} = \rho_A \otimes \mathrm{I}_k - \rho_{AB},
\end{equation}
where $\rho_{A}=[\mathrm{id}\otimes \mathrm{Tr}](\rho_{AB})$ is the partial trace
over the second subspace.
This map can be also written as
\begin{equation}
\rho^{red} = [\mathrm{id} \otimes \varphi](\rho_{AB}),
\end{equation}
where the reduction map $\varphi : M_k(\mathbb C) \to M_k(\mathbb C)$ is given by
\begin{equation}\label{eq:def-phi}
\varphi(X) = \mathrm{I}_k \cdot \mathrm{Tr} X - X.
\end{equation}

The above mapping produces an \emph{entanglement criterion} for the following reason:
\begin{proposition}
Let $\rho=\rho_{AB} \in M_{n}(\mathbb C) \otimes M_{k}(\mathbb
C)$ be a \emph{separable} quantum state. Then $\rho^{red} \geq 0$.
However, if $\rho_{AB}$ is a rank one entangled state, then
$\rho^{red} \ngeq 0$.
\end{proposition}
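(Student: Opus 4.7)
The plan is to split the proof into the two assertions and exploit the linearity of the map $\rho \mapsto \rho^{red}$ together with a direct spectral/Schmidt calculation for the converse.

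For the first statement, I would first show that the reduction map $\varphi$ defined in \eqref{eq:def-phi} is a positive map. Indeed, if $X \in M_k(\mathbb C)$ is positive semidefinite with eigenvalues $\lambda_1, \ldots, \lambda_k \geq 0$, then $\varphi(X)$ is diagonal in the same basis with eigenvalues $\mathrm{Tr}(X) - \lambda_i = \sum_{j \neq i}\lambda_j \geq 0$, so $\varphi(X) \geq 0$. By linearity of $[\mathrm{id}\otimes\varphi]$, it suffices to prove $\rho^{red} \geq 0$ for an arbitrary product state $\rho_{AB} = \alpha \otimes \beta$ with $\alpha,\beta$ density matrices: in that case $\rho_A = \alpha$, so $\rho^{red} = \alpha \otimes \mathrm{I}_k - \alpha \otimes \beta = \alpha \otimes (\mathrm{I}_k - \beta)$, and since the eigenvalues of $\beta$ lie in $[0,1]$, both factors are positive semidefinite, hence so is their tensor product. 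Extending convexly to $\rho_{AB} = \sum_i p_i\, \alpha_i \otimes \beta_i$ finishes the separable case.

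For the second statement, let $\rho_{AB} = |\psi\rangle\langle\psi|$ be a rank-one entangled state with unit vector $|\psi\rangle \in \mathbb C^n \otimes \mathbb C^k$. I would use the Schmidt decomposition
\[
|\psi\rangle = \sum_{i=1}^r \sqrt{\lambda_i}\, |e_i\rangle \otimes |f_i\rangle,
\]
where $r \geq 2$ by the entanglement assumption, $\lambda_i > 0$, and $\sum_i \lambda_i = 1$. A short computation gives $\rho_A = \sum_i \lambda_i |e_i\rangle\langle e_i|$, and then
\[
\langle\psi|(\rho_A \otimes \mathrm{I}_k)|\psi\rangle = \sum_i \lambda_i^2, \qquad \langle\psi|\rho_{AB}|\psi\rangle = 1.
\]
Therefore $\langle\psi|\rho^{red}|\psi\rangle = \sum_i \lambda_i^2 - 1 < 0$, where the strict inequality uses $r \geq 2$ together with $\lambda_i > 0$ and $\sum \lambda_i = 1$. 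This exhibits a vector witnessing non-positivity of $\rho^{red}$.

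No step looks genuinely hard: the main subtlety is simply recognizing that positivity of $\varphi$ together with convex decomposition handles the separable case, and that the natural test vector detecting non-positivity for a rank-one entangled state is $|\psi\rangle$ itself. I would present the two items in the order above.
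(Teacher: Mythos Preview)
Your proof is correct. Both parts are handled cleanly: the separable case via positivity of $\varphi$ on product states and convexity, and the rank-one entangled case via the Schmidt decomposition with $|\psi\rangle$ as the witnessing vector.

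Note, however, that the paper does not actually supply a proof of this proposition; it is stated as a known fact from the literature (the reduction criterion was introduced in \cite{cag} and \cite{hho}, and the paragraph preceding the proposition explains why positive maps yield separability criteria). Your argument is precisely the standard one found in those references, so there is nothing to compare against in the paper itself.

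One minor stylistic remark: in the first part you first prove that $\varphi$ is positive and then separately verify the product-state case by a direct computation. Either of these alone suffices: once $\varphi$ is known to be positive, $[\mathrm{id}\otimes\varphi](\alpha\otimes\beta)=\alpha\otimes\varphi(\beta)\geq 0$ is immediate; conversely, the direct product-state computation does not require first checking positivity of $\varphi$. This is harmless redundancy, not an error.
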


Let us point out that in the literature, the reduction criterion sometimes consists on the positivity of \emph{both} reduction matrices: one asks that both $\rho^{red}$  and
\begin{equation}\label{eq:def-tilde-rho-red}
\tilde\rho^{red} = [\varphi \otimes \mathrm{id}](\rho_{AB}) = \mathrm{I}_n \otimes \rho_B - \rho_{AB}
\end{equation}
should be positive, where $\rho_{B}=[\mathrm{Tr}\otimes \mathrm{id}](\rho_{AB})$ is the partial trace
over the first subspace. In this work, we shall only focus on the case where the reduction map $\varphi$ is applied only on the second subsystem and we shall discuss the other case separately.

At the level of images, the reduction criterion is always satisfied:
\begin{lemma}\label{lem:image-red}
Let $X_{AB} \in M_{n}(\mathbb C) \otimes M_{k}(\mathbb C)$
be a positive semidefinite matrix. Then, $$\operatorname{Im}( X_A
\otimes \mathrm{I}_k) \supseteq \operatorname{Im}(X_{AB}).$$
\end{lemma}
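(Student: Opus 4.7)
The plan is to prove the dual statement about kernels and then take orthogonal complements. Since both $X_{AB}$ and $X_A\otimes \mathrm{I}_k$ are positive semidefinite (the partial trace preserves positivity, so $X_A\geq 0$), their images are the orthogonal complements of their kernels. Hence the claimed inclusion $\operatorname{Im}(X_{AB})\subseteq \operatorname{Im}(X_A\otimes \mathrm{I}_k)$ is equivalent to
$$\ker(X_A\otimes \mathrm{I}_k)\subseteq \ker(X_{AB}).$$
Observing that $\ker(X_A\otimes \mathrm{I}_k)=\ker(X_A)\otimes \mathbb{C}^k$, it suffices to show: whenever $v\in \mathbb{C}^n$ satisfies $X_Av=0$, one has $X_{AB}(v\otimes w)=0$ for every $w\in\mathbb{C}^k$.

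The key identity I would use to bridge the partial trace $X_A$ and $X_{AB}$ is the pointwise formula
$$v^{*}X_A v \;=\; \sum_{j=1}^{k} (v\otimes e_j)^{*}\,X_{AB}\,(v\otimes e_j),$$
where $\{e_j\}_{j=1}^k$ is the canonical basis of $\mathbb{C}^k$. This is immediate from writing $X_A=[\mathrm{id}\otimes\mathrm{Tr}](X_{AB})$ in coordinates, so $(X_A)_{i\ell}=\sum_j (X_{AB})_{(i,j),(\ell,j)}$.

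Now the proof concludes cleanly: if $X_Av=0$, then the left-hand side vanishes, so each summand on the right vanishes too, because $X_{AB}\geq 0$ forces every quadratic form $(v\otimes e_j)^{*}X_{AB}(v\otimes e_j)\geq 0$. Using $X_{AB}=X_{AB}^{1/2}(X_{AB}^{1/2})^{*}$, each vanishing summand yields $X_{AB}^{1/2}(v\otimes e_j)=0$, hence $X_{AB}(v\otimes e_j)=0$ for every $j$. By linearity $X_{AB}(v\otimes w)=0$ for all $w\in\mathbb{C}^k$, which is exactly the required inclusion of kernels.

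I don't anticipate any real obstacle here; the argument is essentially a one-line consequence of the positivity of $X_{AB}$ together with the trace-out formula. The only point to be careful about is the identification $(V\otimes \mathbb{C}^k)^{\perp}=V^{\perp}\otimes\mathbb{C}^k$ used in the passage from kernels to images, which follows because $\mathbb{C}^k$ has no nonzero orthogonal complement. An alternative, slightly slicker formulation would be to write $X_{AB}=\sum_\alpha |\psi_\alpha\rangle\langle \psi_\alpha|$ as a sum of rank-one PSD pieces and check the lemma for each $|\psi_\alpha\rangle\langle \psi_\alpha|$ via its Schmidt decomposition, but the quadratic-form argument above is more direct and does not rely on a spectral decomposition.
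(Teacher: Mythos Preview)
Your proof is correct and takes a genuinely different route from the paper. The paper argues on the image side: it reduces to the rank-one case $X_{AB}=xx^*$ (using that for positive semidefinite summands $\operatorname{Im}(\sum_\alpha Y_\alpha)\supseteq\operatorname{Im}(Y_\alpha)$), writes the Schmidt decomposition $x=\sum_i\sqrt{\lambda_i}\,e_i\otimes f_i$, and explicitly exhibits a preimage $y=\sum_i\lambda_i^{-1/2}e_i\otimes f_i$ with $(X_A\otimes \mathrm{I}_k)y=x$. Your argument is the dual, kernel-side version: using $v^*X_Av=\sum_j(v\otimes e_j)^*X_{AB}(v\otimes e_j)$ and positivity to conclude $\ker(X_A)\otimes\mathbb C^k\subseteq\ker(X_{AB})$. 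Your approach is arguably cleaner in that it needs no spectral or Schmidt decomposition and no reduction to rank one; the paper's approach, on the other hand, is constructive and gives an explicit preimage. Amusingly, the alternative ``slicker'' formulation you mention at the end---decomposing $X_{AB}$ into rank-one pieces and using Schmidt---is exactly what the paper does.
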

\begin{proof}
It is enough to prove that every eigenvector of $X_{AB}$ is in the image of $X_A \otimes \mathrm{I}_k$, so let us assume that $X_{AB} = xx^*$, for some
$$x = \sum_{i=1}^r \sqrt{\lambda_i} e_i \otimes f_i \in \mathbb C^n \otimes \mathbb C^k,$$
with $r\leq \min(n,k)$, $\lambda_i >0$, $\sum_{i=1}^r \lambda_i = 1$ and $\{e_i\}_{i=1}^{n}$, $\{f_j\}_{j=1}^{k}$ being orthogonal families in $\mathbb C^n$ and $\mathbb C^k$, respectively
(here, $x^*$ denotes the adjoint of a vector $x$).
We  have then $$Y:=[\mathrm{id} \otimes \mathrm{Tr}](xx^*) \otimes \mathrm{I}_k = \sum_{i=1}^r \lambda_i e_ie_i^* \otimes \mathrm{I}_k.$$
For $y= \sum_{i=1}^r \lambda_i^{-1/2} e_i \otimes f_i$, we have $Yy = x$ which proves that $x \in \operatorname{Im}(X_A \otimes \mathrm{I}_k)$.
\end{proof}

Recall that for any matrix algebra map $F:M_k(\mathbb C) \to M_k(\mathbb C)$, one defines the \emph{Choi matrix} \cite{cho} of $F$ by
$$C_F = [F \otimes \mathrm{id}](E_k),$$
where $E_k \in M_{k^2}(\mathbb C)$ is the (unnormalized) maximally entangled state
\begin{equation*}
E_k = \sum_{i,j=1}^k e_i e_j^* \otimes e_i e_j^*,
\end{equation*}
with $\{e_i\}_{i=1}^k$ being an orthonormal basis of $\mathbb C^k$. It is known \cite{cho} that the map $F$ is completely positive iff its Choi matrix $C_F$ is positive semidefinite.

Let $C_\varphi$ be the Choi matrix of $\varphi$ defined in \eqref{eq:def-phi}; obviously, $C_\varphi = I_{k^2} - E_k$. This proves that the reduction map $\varphi$ is co-completely positive, i.e. $\psi = \varphi \circ \mathrm{transp}$ is a completely positive map:
\begin{equation}\label{eq:def-psi}
\psi(X) = \mathrm{I}_k \cdot \mathrm{Tr} X - X^t,
\end{equation}
where $X^t=\mathrm{transp}(X)$ is the transpose of the matrix $X$.
Indeed, the Choi matrix of the map $\psi$ is $C_\psi = 2P_{sym} \geq 0$, where $P_{sym}$ is the orthogonal projection on the symmetric subspace of $\mathbb C^k \otimes \mathbb C^k$. The above discussion shows, via Choi matrices, that the map $\psi$ is completely positive, while $\varphi$ is not completely positive.

The reduction criterion is closely related to the famous Peres-Horodecki PPT criterion. For a bipartite  density matrix $\rho=\rho_{AB} \in M_{n}(\mathbb{C})\otimes M_{k}(\mathbb{C})$, one defines its \emph{partial transposition} with respect to the second system $B$,
$$\rho^\Gamma := [\mathrm{id} \otimes \mathrm{transp}](\rho_{AB}).$$
The
reduction criterion is in general weaker than the PPT criterion,
although they are equivalent for  $\mathbb{C}^n \otimes \mathbb{C}^2$ quantum systems \cite{cag}.

Let us now discuss the rank of a reduced matrix.
\begin{proposition}\label{prop:rank}
For a positive semidefinite matrix $X_{AB} \in M_n(\mathbb C) \otimes M_k(\mathbb C)$ of rank $s$, consider $R = [\mathrm{id} \otimes \mathrm{Tr}](X_{AB}) \otimes \mathrm{I}_k - X_{AB}$. Then, the matrix $R$ has rank at most $k^2s$.
\end{proposition}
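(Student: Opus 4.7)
The plan is to bound the rank of $R$ by that of $X_A \otimes \mathrm{I}_k$, and then to control the latter in terms of the rank of $X_A$ alone. The key input is Lemma \ref{lem:image-red}, which gives $\operatorname{Im}(X_{AB}) \subseteq \operatorname{Im}(X_A \otimes \mathrm{I}_k)$. Both matrices being Hermitian, passing to orthogonal complements yields $\ker(X_A \otimes \mathrm{I}_k) \subseteq \ker(X_{AB})$. Hence every vector annihilated by $X_A \otimes \mathrm{I}_k$ is also annihilated by $X_{AB}$, and therefore by their difference $R = X_A \otimes \mathrm{I}_k - X_{AB}$. This gives $\ker(R) \supseteq \ker(X_A \otimes \mathrm{I}_k)$, so
\[ \operatorname{rank}(R) \leq \operatorname{rank}(X_A \otimes \mathrm{I}_k) = k \cdot \operatorname{rank}(X_A). \]

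It remains to show that $\operatorname{rank}(X_A) \leq ks$. I would proceed by spectral decomposition of $X_{AB} \geq 0$: write $X_{AB} = \sum_{i=1}^s v_i v_i^*$ with $v_i \in \mathbb C^n \otimes \mathbb C^k$, and identify each $v_i$ with an $n \times k$ matrix $V_i$ via $v_i = \sum_{j,l} (V_i)_{jl}\, e_j \otimes f_l$. A direct computation shows $[\mathrm{id} \otimes \mathrm{Tr}](v_i v_i^*) = V_i V_i^*$, a matrix of rank at most $k$. Summing over $i$ yields $\operatorname{rank}(X_A) \leq ks$, and combining with the first displayed inequality gives $\operatorname{rank}(R) \leq k^2 s$.

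No serious obstacle is anticipated: the argument is essentially bookkeeping, with all the conceptual work done by the image inclusion from Lemma \ref{lem:image-red} (without which one would only get the trivial bound $\operatorname{rank}(R) \leq nk$). The one point requiring mild care is the rank bound on $X_A$: a naive attempt to bound $\operatorname{rank}(X_A)$ by $\operatorname{rank}(X_{AB}) = s$ would give the sharper but false conclusion $\operatorname{rank}(R) \leq ks$; the correct bound $ks$ on $\operatorname{rank}(X_A)$ comes from the fact that each rank-one summand $v_i v_i^*$ contributes a reduced density of rank up to $\min(n,k)$.
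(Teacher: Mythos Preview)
Your argument is correct and follows the same overall strategy as the paper: use Lemma~\ref{lem:image-red} to obtain $\operatorname{rank}(R) \leq k\,\operatorname{rank}(X_A)$, then bound $\operatorname{rank}(X_A)$ by $ks$. The only difference is in the second step: the paper observes that $X_A = \sum_{i=1}^k (X_{AB})_{ii}$ is a sum of $k$ diagonal $n\times n$ blocks of $X_{AB}$, each of which (as a submatrix of a rank-$s$ matrix) has rank at most $s$; you instead decompose $X_{AB}$ spectrally as a sum of $s$ rank-one pieces, each of whose partial trace has rank at most $k$. These are dual countings of the same bound and equally valid. Your kernel argument in the first step is also a slightly more explicit justification of what the paper states directly.
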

\begin{proof}
First, put $X_A:=[\mathrm{id} \otimes \mathrm{Tr}](X_{AB})$. One has, via Lemma \ref{lem:image-red}, $\mathrm{Im}( X_A \otimes \mathrm{I}_k) \supseteq \mathrm{Im}(X_{AB})$, so that $\operatorname{rank}(R) \leq \operatorname{rank} (X_A \otimes \mathrm{I}_k) = k \operatorname{rank}(X_A)$. Since  $X_A$ is a sum of $k$ sub-matrices of $X_{AB}$, it follows that $\operatorname{rank}(X_A) \leq k\operatorname{rank}(X_{AB})=ks$, proving the result.
\end{proof}

\section{Some elements of combinatorial free probability theory}
\label{sec:elements}

In this section we recall some basic concepts and
results from the free probability theory and related subjects for
the convenience of the reader and in order to make the paper
self-contained. A good treatment of such topics can be found in Nica and Speicher \cite{nsp}.

\subsection{Non-crossing partitions and permutations}

Let $(M,<)$ be a finite totally ordered set. A \emph{partition}
$\pi$ of $M$ is a family  $\left\{V_1,\ldots,V_m\right\}$ of blocks of $\pi$, i.e.
disjoint nonempty subsets of $M$, whose union is $M$.
The number of all these blocks is denoted by $\#\pi$. A partition
where each block consists of exactly two elements is called a
\emph{pair partition} or a \emph{pairing}.

A \emph{non-crossing partition} of $M$ is a partition $\pi$ with
the property that if  $a<b<c<d$ in $M$ such
that $a,c$ belong to the same block of $\pi$ and $b,d$ belong to
the same block of $\pi$, then $a,b,c,d$ belong all to the same
block of $\pi$.
The set of all non-crossing partitions of $M$ is denoted by
$NC(M)$. In the special case that $M$ is $[p]:=\{1,\ldots,p\}$ for
some positive integer $p$, it is denoted by
$NC(p)$. Since $NC(M)$ depends only on the number of elements and on the order of
$M$ we will use the natural identification $NC(M)\cong NC(|M|)$,
where $|M|$ is the cardinal number of $M$.

A partition $\pi$ of $[p]$, $p\geq 2$, is
non-crossing if and only if at least one block $V$ of $\pi$ is an
interval (there exist $0\leq k\leq p-1$ and $r\geq 1$ with
$k+r\leq p$ such that $V=\{k+1,\ldots,k+r\}$) and $\pi\setminus
V\in NC(\{1,\ldots,p\}\setminus \{k+1,\ldots,k+r\})\cong NC(p-r)$.
Therefore, a non-crossing partition $\pi\in NC(p)$ decomposes
canonically (up to a circular permutation) into
$\pi=\hat{1}_{r}\oplus \pi_{0},$
where $\hat{1}_{r}\in NC(r)$ is the contiguous block of size $r$
and $\pi_0\in NC(\{r+1,\ldots,p\})\cong NC(p-r).$

The symmetric group on a finite  set $M$ will be
denoted by $\mathcal S(M)$. Usually, $M$ is the set $[p]$ for some positive integer
$p$. Addition in $[p]$ is understood modulo
$p$. In this case the symmetric group is denoted by $\mathcal S_p$.
For a permutation $\alpha\in \mathcal S_p$, we denote by
$\#\alpha$ the number of cycles of $\alpha$ and by
$|\alpha|$  the length of $\alpha$, defined as the minimal non-negative integer $k\in\mathbb{N}$
such that $\alpha$ can be written as a product of $k$
transposition.

The notations above are  polymorphic, since $\#(\cdot)$ denotes
both the number of blocks of partitions and  the number of cycles
of permutations,  and $|\cdot|$ is used to denote both the
cardinality of sets and the length of permutations, respectively.
If $b$ is a cycle of a permutation $\alpha$, we simply
write $b\in\alpha$.

For any $\alpha\in \mathcal S_p$, we have
\begin{equation}\label{eq.n1}
|\alpha|=p-\# \alpha.
\end{equation}

Also, it is known that if $\alpha\in \mathcal S_p$ is an arbitrary permutation and
$\tau=(i\,j)\in \mathcal S_p$ is a transposition, $1\leq i,\,j\leq p$, $i\neq
j$, then we have
\begin{equation}\label{eq:tr}
\#(\tau \alpha)=\begin{cases} \#\alpha+1, &\text{if $i$ and $j$
belong to the same cycle of $\alpha$},\\
\#\alpha-1, &\text{if $i$ and $j$ belong to different cycles of
$\alpha$}.
\end{cases}
\end{equation}

In the following, we will denote by $\gamma$ the full cycle
permutation
$\gamma=\gamma_p:=(p\ldots 2\,1)\in \mathcal S_p$.
We also denote by $\mathcal S_{NC(\gamma)}$ the set of all permutations
$\alpha\in \mathcal S_p$ which saturate the triangle inequality,
$$|\mathrm{id}^{-1}\alpha|+|\alpha^{-1}\gamma|=|\mathrm{id}^{-1}\gamma|=p-1,$$
and we say that $\alpha$ lies on a geodesic between the identity permutation $\mathrm{id}$ and the
full cycle $\gamma$. We have
\begin{align*}\label{eq:geo}
\mathcal S_{NC(\gamma)} &=\left\{\alpha\in \mathcal S_p\,:\,
|\alpha|+|\alpha^{-1}\gamma|=p-1\right\}.
\end{align*}
If $\alpha$ belongs to $\mathcal S_{NC(\gamma)}$, we denote it by
$\mathrm{id}\rightarrow\alpha\rightarrow\gamma$ and we say that $\alpha$ is
a \emph{geodesic} permutation.

For a given partition $\pi$ of $[p]$ and $i\in[p]$ we define
$\left(t(\pi)\right)(i)\in [p]$ to be the first element of the
sequence $\gamma(i),\gamma^2(i),\ldots,\gamma^p(i)$ which belongs
to the same block of $\pi$ as $i$. Observe that
$t(\pi)\in \mathcal S_p$ and the number of blocks of $\pi$ corresponds to the number of cycles of $t(\pi)$.
It is well known that
the map $\pi\mapsto t(\pi)$ is an isomorphism of posets between
$NC(p)$ and  $\mathcal S_{NC(\gamma)}$  \cite{b}.
According to this result we shall not distinguish between a
non-crossing partition $\pi \in NC(p)$ and its associated geodesic
permutation $t(\pi) \in \mathcal S_{NC(\gamma)}$.

\subsection{Free probability}

In this section we recall some basic facts about free probability
theory needed for the development of the main results of the
paper.

A \emph{non-commutative probability space} is a pair
$(\mathcal{A},\varphi)$, where $\mathcal{A}$ is an algebra over
$\mathbb{C}$  with unit element $1_{\mathcal{A}}$ and
$\varphi:\mathcal{A}\to\mathbb{C}$ is a linear functional such
that $\varphi(1_{\mathcal{A}})=1.$  An element $a\in\mathcal{A}$
is called a  (\emph{non-commutative}) \emph{random variable}.

A non-commutative probability space $(\mathcal{A},\varphi)$ is a
\emph{$C^{*}$-probability space} if $\mathcal{A}$ is a
$C^*$-algebra with involution $a\mapsto a^*$ and $\varphi$ is positive (i.e. $\varphi(a^* a)\geq
0$ for all $a\in\mathcal{A}$). Such a linear functional $\varphi$
is called a \emph{state}. If $a=a^*$ ($a$ is self-adjoint) in a
$C^*$-probability space, then
the \emph{distribution} of $a$ (or the \emph{law} of $a$), denoted by $\mu_{a}$, is the probability measure on the spectrum of $a$ (which is a compact subset of the real line) given by
\begin{equation*}\label{eq:distr}
\int x^p d\mu_{a}(x)=\varphi(a^p), \text{ for }
p\in\mathbb{N}^{*},
\end{equation*}
where $\mathbb{N}^{*}=\mathbb{N}\setminus \{0\}$ denotes the set of all positive integers.
The number $\varphi(a^p)$, $p\in\mathbb{N}^{*}$, is called the
\emph{$p$-th moment} of $a$. In this case, instead of talking
about the  moments of some non-commutative random
variable $a$, we refer to the probability measure $\mu_{a}$ whose
moments $m_p(\mu_a):=\int x^p d\mu_{a}(x)$ are just the moments of $a$.

In this paper we shall be mostly concerned with the
$C^*$-probability space of random matrices $X\in
L^{\infty-}(\Omega,\mathcal{F},\mathbb{P})\otimes
M_n(\mathbb{C})$, endowed with
the operator norm of
matrices $\parallel \cdot \parallel$.  Here $(\Omega,\mathcal{F}, \mathbb{P})$ is a
classical probability space and
$L^{\infty-}(\Omega,\mathcal{F},\mathbb{P})$ is the space of
complex-valued random variables with all moments finite,
$$ L^{\infty-}(\Omega,\mathcal{F},\mathbb{P}):=\bigcap_{1\leq p<\infty}
L^{p}(\Omega,\mathcal{F},\mathbb{P}) .$$
In this case, the state
$\varphi$  is given by the expectation of the \emph{normalized} trace
\begin{equation*}
\varphi(X)=\mathbb{E}\frac{1}{n} \mathrm{Tr} X.
\end{equation*}

We now come to one of the fundamental concepts in the free probability
theory, namely that of \emph{free independence}. A family $\left\{
\mathcal{A}_{i}\right\}_{i\in I}$ of unital subalgebras in a
non-commutative probability space $(\mathcal{A},\varphi)$ is called \emph{freely
independent} if for any positive integer $n$, indices $i(1)\neq
i(2)$, $i(2)\neq i(3)$, \ldots, $i(n-1)\neq i(n)$ in $I$ and any random variables
$a_{j}\in \mathcal{A}_{i(j)}$  with $\varphi(a_j)=0$ ($j=1,\ldots,
n$), it holds that
$\varphi(a_1\cdots a_n)=0.$
Collections of random variables
are called \emph{freely independent} if the unital subalgebras they generate are freely independent.

For a  probability measure $\mu$ on
the real line  with compact
support, we consider its free cumulants $\kappa_{p}(\mu)$,
$p\in\mathbb{N}^{*}$, given by moment-cumulant formula (see Nica and Speicher
\cite[Lecture 11]{nsp}):
\begin{equation}\label{eq:cum}
m_{p}(\mu)=\sum_{\pi\in NC(p)}\prod_{b\in\pi} \kappa_{|b|}(\mu).
\end{equation}
This is completely analogous to the classical cumulants, the only
difference being that the lattice of all partitions has been
replaced by the lattice of non-crossing
partitions. Clearly, the free cumulants $\kappa_{p}(\mu)$ contain
the same information as the moments of $\mu$.

Given $\mu$ and $\nu$  two probability measures on $\mathbb{R}$
with compact support, it is always possible to find  $a_\mu$ and
$a_\nu$  self-adjoint,  freely independent random variables in some
$C^*$-probability space such that $a_\mu$ has distribution $\mu$
and $a_\nu$ has distribution $\nu$. The \emph{additive free
convolution} of $\mu$ and $\nu$ is denoted by $\mu\boxplus\nu$ and
it is defined as the distribution of $a_{\mu}+a_{\nu}$ (see Nica and Speicher
\cite[Lecture 12]{nsp}). For more regularity properties of the
additive free convolution we refer the reader to the paper of Belinschi \cite{bel}. The
atoms of the additive free convolution of two probability measures
have been described by Bercovici and Voiculescu \cite{bvo}:

\begin{proposition}\label{prop:atoms-free-sum}
Let $\mu$ and $\nu$ be two compactly supported probability
measures on the real line, neither of them a point mass. Then, a
real number $a \in \mathbb R$ is an atom of $\mu \boxplus \nu$ if
and only if $a=b+c$, where $b$ and $c$ are atoms of $\mu$ and
$\nu$, respectively, such that $\mu(\{b\}) + \nu(\{c\}) >1$. Moreover, the mass of the atom
$a$ is
\begin{equation*}
[\mu \boxplus \nu](\{a\}) = \mu(\{b\}) + \nu(\{c\}) -1.
\end{equation*}
\end{proposition}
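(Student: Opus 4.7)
The plan is to work in a tracial $W^*$-probability space $(\mathcal{M},\tau)$ and realize $\mu,\nu$ as the spectral distributions of free self-adjoint elements $X,Y$, so that the distribution of $X+Y$ is $\mu\boxplus\nu$. The analysis of atoms then splits into an easy combinatorial lower bound coming from spectral projections, and a harder analytic upper bound coming from subordination functions.

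For the sufficiency direction and the lower bound on atom masses, fix atoms $b$ of $\mu$ and $c$ of $\nu$ with $\alpha:=\mu(\{b\})$, $\beta:=\nu(\{c\})$, and put $p:=\mathbf{1}_{\{b\}}(X)$, $q:=\mathbf{1}_{\{c\}}(Y)$. Freeness of $X$ and $Y$ implies freeness of the projections $p$ and $q$, and the classical formula for the distribution of a sum of two free projections yields
\[
\tau(p\wedge q)=\max(0,\alpha+\beta-1).
\]
On the range of $p\wedge q$ both $X$ and $Y$ act as the scalars $b$ and $c$, so $X+Y$ acts as $(b+c)\mathbf{1}$, and therefore
\[
(\mu\boxplus\nu)(\{b+c\})\ \geq\ \tau(p\wedge q)\ =\ \alpha+\beta-1
\]
whenever $\alpha+\beta>1$. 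A short arithmetic check shows that a given $a\in\mathbb{R}$ decomposes as $b+c$ with $\mu(\{b\})+\nu(\{c\})>1$ for at most one such pair $(b,c)$: two distinct pairs would force $\mu(\{b_1\})+\mu(\{b_2\})+\nu(\{c_1\})+\nu(\{c_2\})>2$, contradicting that $\mu$ and $\nu$ are probability measures. Thus different pairs contribute to different atoms, and the total atomic mass of $\mu\boxplus\nu$ is at least $\sum(\alpha+\beta-1)$.

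For the matching upper bound and the claim that no further atoms occur, I would invoke the subordination representation of $\boxplus$: there exist analytic self-maps $\omega_1,\omega_2:\mathbb{C}^+\to\mathbb{C}^+$ such that
\[
G_{\mu\boxplus\nu}(z)=G_\mu(\omega_1(z))=G_\nu(\omega_2(z)),\qquad \omega_1(z)+\omega_2(z)-z=\frac{1}{G_{\mu\boxplus\nu}(z)}.
\]
The mass of an atom of a compactly supported measure $\rho$ at $x$ is recovered as $-\lim_{y\downarrow 0} iy\,G_\rho(x+iy)$. If $\mu\boxplus\nu$ has an atom at $a$ of mass $\delta>0$, then $G_{\mu\boxplus\nu}(a+iy)\sim\delta/(iy)$, and Julia--Carath\'eodory estimates applied to the self-maps $\omega_j$ force $\omega_1(a+iy)$ and $\omega_2(a+iy)$ to converge non-tangentially to real points $b$ and $c$ at which $G_\mu$ and $G_\nu$ blow up; these points are necessarily atoms of $\mu$ and $\nu$, and comparing residues of the three Cauchy transforms in the identity above gives $b+c=a$ together with $\mu(\{b\})+\nu(\{c\})-1=\delta$.

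The main obstacle is precisely this last analytic step. The spectral-projection argument giving the lower bound is elementary once one is in a tracial $W^*$-setting, but sharpness of the mass formula and the absence of extra atoms both hinge on the half-plane geometry of the subordination functions---Julia--Carath\'eodory boundary derivatives, control of non-tangential limits, and simultaneous matching of the singular parts of three Cauchy transforms---which is the bulk of the Bercovici--Voiculescu argument. Combining the two directions yields that the atomic part of $\mu\boxplus\nu$ is exactly $\sum(\mu(\{b\})+\nu(\{c\})-1)\,\delta_{b+c}$, summed over the uniquely determined pairs of atoms with $\mu(\{b\})+\nu(\{c\})>1$, which is the content of the proposition.
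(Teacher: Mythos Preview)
The paper does not give its own proof of this proposition: it is quoted as a background result and attributed to Bercovici and Voiculescu \cite{bvo}, with no argument supplied. There is therefore nothing in the paper to compare your proposal against.

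That said, your sketch is essentially an outline of the original Bercovici--Voiculescu proof. The lower bound via free spectral projections and the formula $\tau(p\wedge q)=\max(0,\alpha+\beta-1)$ is correct and standard, as is the uniqueness-of-pair observation. The upper bound via subordination and Julia--Carath\'eodory is the right strategy, and you are candid that this is where the real work lies; as written, however, that part is a plan rather than a proof. In particular, you would still need to justify that the non-tangential limits $\omega_j(a+iy)\to b,c$ exist and are real (this uses that $\mathrm{Im}\,\omega_j(z)\geq\mathrm{Im}\,z$ together with the blow-up of $G_{\mu\boxplus\nu}$), and then to match residues carefully to extract the equality $\delta=\mu(\{b\})+\nu(\{c\})-1$. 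None of this is wrong, but it is not complete either; if you intend to include a full proof rather than a citation, those steps must be filled in.
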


In this paper, we shall mostly be concerned with \emph{compound free Poisson
distributions} (for more details we refer the reader to the book of Speicher \cite{spe}): given $\lambda> 0$ and  $\nu$  a
compactly supported probability measure on the real line, a
probability measure $\pi_{\nu}$ on $\mathbb{R}$ with the free
cumulants
\begin{equation*}\label{eq:cum.comp}
\kappa_p(\pi_\nu)=\lambda\cdot m_p(\nu), \; p\in\mathbb{N}^{*},
\end{equation*}
is called a compound free Poisson distribution.

As a particular case of compound free Poisson distributions, when the measure $\nu$ is a Dirac mass, we obtain the \emph{free Poisson distributions}. For  $\lambda\geq 0$ and $\alpha\in\mathbb{R}$, the
probability measure given by
\begin{equation*}\label{eq:Poisson}
\nu_{\lambda,\alpha}=\begin{cases}
(1-\lambda)\delta_{0}+{\tilde{\nu}}_{\lambda,\alpha} &
\text{ if } 0\leq \lambda
\leq 1,\\
{\tilde{\nu}}_{\lambda,\alpha} & \text{ if } \lambda>1,
\end{cases}
\end{equation*}
where ${\tilde{\nu}}_{\lambda,\alpha} $ is the measure supported
on the interval $\left[\alpha (1-\sqrt{\lambda})^2,\alpha
(1+\sqrt{\lambda})^2\right]$ with density
\begin{equation*}\label{eq:dens}
d{\tilde{\nu}}_{\lambda,\alpha} (x)=\frac{1}{2\pi\alpha
x}\sqrt{4\lambda\alpha^2-(x-\alpha(1+\lambda))^2} dx.
\end{equation*}
The above measure is called the \emph{free Poisson distribution} with rate $\lambda$ and
jump size $\alpha$.  The free cumulants of this distribution are
given by
\begin{equation*}\label{eq:cum.po}
\kappa_{p}(\nu_{\lambda,\alpha})=\lambda\cdot\alpha^p,\;p\in\mathbb{N}^{*}.
\end{equation*}
In the particular case when $\alpha=1$, the free Poisson distribution
is called the \emph{Mar\v{c}enko-Pastur distribution} of parameter
$\lambda$.

\section{Wick calculus}\label{sec:Wick}

We present next some elements of Wick (or Gaussian) calculus and a
graphical way of computing Gaussian integrals, originally
introduced by Collins and Nechita \cite{cne11b}. Recall that a \emph{Gaussian family} is a set of random variables such that any linear combination of the random variables inside the family has a Gaussian distribution. The main tool for computing averages of Gaussian families is the Wick formula:

\begin{lemma}\label{l.Wick}
If $\left\{X_1, \ldots, X_n\right\}$ is a Gaussian family of
random variables, then
\begin{equation*}
\mathbb{E}\left[X_{i(1)}\cdots X_{i(p)}\right]=\sum_{\pi \text{ pairing of
$[p]$}}\prod_{(r,s)\in \pi} \mathbb{E}\left[X_{i(r)} X_{i(s)}\right],
\end{equation*}
for all $p\in\mathbb{N}^{*}$ and all $1\leq i(1),\ldots, i(p)\leq
n$.
\end{lemma}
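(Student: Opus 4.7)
The plan is to prove the Wick formula by induction on $p$, using Gaussian integration by parts (Stein's lemma) as the key inductive tool. Note first that the identity as stated, with $\mathbb{E}[X_{i(r)}X_{i(s)}]$ rather than covariances, requires the family to be centered; this is the convention throughout the Wishart/GUE setting the paper works in, so I would either state this restriction explicitly or recenter by an affine shift. For $p$ odd there are no pair partitions of $[p]$ and, by a parallel induction together with the symmetry $X \mapsto -X$ (which preserves centered Gaussian distributions), the left-hand side vanishes as well, so the nontrivial content is at even $p$. The base case $p = 2$ is immediate: the unique pairing contributes precisely $\mathbb{E}[X_{i(1)}X_{i(2)}]$.

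For the inductive step I would invoke Stein's identity: for a centered Gaussian family with covariance $c_{jk} = \mathbb{E}[X_j X_k]$ and any polynomial $f$,
$$\mathbb{E}[X_j f(X_1,\ldots,X_n)] = \sum_{k=1}^{n} c_{jk}\, \mathbb{E}\!\left[\frac{\partial f}{\partial x_k}(X_1,\ldots,X_n)\right].$$
Applying this with $j = i(1)$ and $f(x) = x_{i(2)} \cdots x_{i(p)}$ yields
$$\mathbb{E}[X_{i(1)} \cdots X_{i(p)}] = \sum_{s=2}^{p} \mathbb{E}[X_{i(1)} X_{i(s)}]\, \mathbb{E}\!\left[\prod_{r \in [p]\setminus\{1,s\}} X_{i(r)}\right].$$
Each factor on the right is an expectation of a product of $p-2$ variables from the same Gaussian family, so the inductive hypothesis expands it as a sum over pairings of $[p]\setminus\{1,s\}$. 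Grouping the resulting terms by the choice of partner $s$ of the element $1$ reassembles exactly the sum over all pair partitions of $[p]$, proving the formula.

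The only step requiring real care is Stein's identity itself. In the non-degenerate case it follows from a single integration by parts against the density $\det(2\pi C)^{-1/2}\exp(-\tfrac12 x^{T} C^{-1} x)$: the logarithmic derivative $-C^{-1}x$ pulls out and, when contracted with $C$, produces exactly the covariance $c_{jk}$. The degenerate case, which does arise in practice (for instance when $X_{ij}$ and $X_{ji}$ in a Hermitian Gaussian matrix are linked), is handled by writing each $X_j$ as a linear combination of an independent standard Gaussian vector $Z = (Z_1,\ldots,Z_m)$, applying the non-degenerate version to $Z$, and using the chain rule to recover the covariance structure. This reduction is the main (and only mild) obstacle; the remainder is pure combinatorial bookkeeping.
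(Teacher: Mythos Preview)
Your proof is correct and is the standard inductive argument via Gaussian integration by parts. The paper, however, does not prove this lemma at all: it is stated as a classical tool (``the main tool for computing averages of Gaussian families is the Wick formula'') and used without justification. So there is nothing to compare against; you have supplied a complete proof where the paper simply cites the result. Your remark that the identity, as written with raw moments $\mathbb{E}[X_{i(r)}X_{i(s)}]$ rather than covariances, tacitly assumes the family is centered is a worthwhile observation and is indeed the standing convention in the paper's setting.
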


The Wick formula is an efficient tool for computing the moments of
Gaussian random matrices, based on a graphical formalism developed
for its application. In the following we briefly present some of
the basic rules of the Gaussian graphical calculus \cite{cne11b}.

A diagram is a collection of decorated boxes and possibly wires connecting the boxes along their decorations and
corresponds to an element in a tensor product space. In graphical
language, a tensor corresponds to a box. Graphically, boxes are represented by rectangles with shaped symbols to characterize
the corresponding vector spaces of the tensor. In
addition, these symbols are white (empty) or black (filled) to denote
the primal or dual space. Tensors contractions are represented graphically by wires connecting these symbols.  The connection is
actually set up between two symbols of the same shape (corresponding thus to vector spaces of the same dimension) and different shadings. There exists a conjugate linear involution on the diagrams, denoted by $*$, which reverts the shading of the decorations. Connections
between white and black symbols corresponds to the canonical map
$\mathbb{C}^n\otimes (\mathbb{C}^n)^*\rightarrow \mathbb{C}$. A diagram such that all decorations are connected by wires corresponds to a complex number. See Figure \ref{fig:graphical-Wick} for some examples of diagrams.

\begin{figure}
\centering
\includegraphics{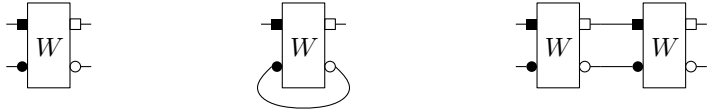}
\caption{From left to right: diagram corresponding to a matrix $W \in M_n(\mathbb C) \otimes M_k(\mathbb C)$, to its partial trace with respect to the second subsystem $W_A = [\mathrm{id} \otimes \mathrm{Tr}](W)$ and to the matrix $W^2$. Square labels correspond to
$\mathbb C^n$ and round labels to $\mathbb C^k$.}
\label{fig:graphical-Wick}
\end{figure}

We now describe how to apply the Wick formula to a diagram  $\mathcal{D}$ containing, among others, boxes $X$ and $\bar{X}$ which correspond to matrices having i.i.d standard, complex Gaussian entries.  A new diagram
$\mathcal{D}_{\alpha}$ is constructed by erasing the boxes $X$ and
$\bar{X}$ and keeping the symbols attached to these boxes. In the new
diagram, the white and black decorations of $i$-th $X$ box are
paired with the decorations of the $\alpha(i)$-th box
$\bar{X}$ in a coherent manner. The resulting diagram
may contain  loops, which correspond to scalars; these scalars are
equal to the dimension of the vector space associated to the
decorations.

In the framework of the graphical calculus, Wick formula can be
reformulated as follows:
\begin{theorem}\label{thm:graphical-Wick}
Let $\mathcal{D}$ be a diagram that contains $p$ boxes $X$ and $p$
boxes $\bar{X}$, which correspond to complex Gaussian $\mathcal{N}(0,1)$
entries. Then
\begin{equation}\label{eq:graphical-Wick}
\mathbb{E}_X[\mathcal{D}]=\sum\limits_{\alpha\in
\mathcal{S}_p}\mathcal{D}_{\alpha}.
\end{equation}
\end{theorem}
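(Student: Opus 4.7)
The plan is to reduce the graphical statement to the scalar Wick formula of Lemma \ref{l.Wick} by unpacking what the diagram $\mathcal{D}$ means as a sum over indices. Since $\mathcal{D}$ is built from tensor contractions of boxes, expanding all non-Gaussian boxes in components presents $\mathcal{D}$ as a (finite) sum indexed by the free indices attached to the $X$ and $\bar X$ decorations; each summand is a product of a constant (coming from the non-Gaussian boxes and from any already-existing wires) times a monomial in the entries $X_{ij}$ and $\overline{X_{kl}}$. Taking the expectation moves inside the sum and reduces the problem to computing
\[
\mathbb{E}\Bigl[\,\prod_{i=1}^{p} X_{a(i)b(i)} \prod_{j=1}^{p} \overline{X_{c(j)d(j)}}\,\Bigr]
\]
for each choice of indices $a,b,c,d$.

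Next I would apply Lemma \ref{l.Wick} to this Gaussian family. Because the entries of $X$ are i.i.d. standard complex Gaussians, only the mixed covariances survive, namely $\mathbb{E}[X_{ij}\,\overline{X_{kl}}] = \delta_{ik}\delta_{jl}$, while $\mathbb{E}[X_{ij}X_{kl}] = \mathbb{E}[\overline{X_{ij}}\,\overline{X_{kl}}] = 0$. Consequently the only pairings contributing in Wick's formula are those that match each $X$-box with a (distinct) $\bar X$-box; such a pairing is exactly a permutation $\alpha \in \mathcal{S}_p$, associating the $i$-th box $X$ to the $\alpha(i)$-th box $\bar X$. This gives
\[
\mathbb{E}\Bigl[\,\prod_{i} X_{a(i)b(i)} \prod_{j} \overline{X_{c(j)d(j)}}\,\Bigr] = \sum_{\alpha \in \mathcal{S}_p} \prod_{i=1}^{p} \delta_{a(i)\,c(\alpha(i))}\,\delta_{b(i)\,d(\alpha(i))}.
\]

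Finally I would interpret this last expression graphically: for a fixed $\alpha$, each Kronecker delta $\delta_{a(i)\,c(\alpha(i))}$ identifies the index on a white decoration of the $i$-th $X$ box with the corresponding decoration of the $\alpha(i)$-th $\bar X$ box, and similarly for the black decorations via $\delta_{b(i)\,d(\alpha(i))}$. Summing the identified indices over their ranges is precisely the operation of connecting by a wire two decorations of the same shape and opposite shading, which is exactly the prescription defining $\mathcal{D}_\alpha$ (including the possible loops that produce the dimensional scalar factors). Swapping the order of the finite sum over indices and the sum over $\alpha$ then yields the announced identity $\mathbb{E}_X[\mathcal{D}] = \sum_{\alpha \in \mathcal{S}_p} \mathcal{D}_\alpha$.

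The routine steps are the expansion in components and the application of Wick's formula; the only place that requires care, and what I would flag as the main bookkeeping obstacle, is to verify cleanly that the index-level operation produced by the Kronecker deltas really coincides with the graphical removal-and-rewiring procedure that defines $\mathcal{D}_\alpha$, in particular that closed loops are counted with the correct dimensional weight and that the shading conventions match the non-vanishing covariance $\mathbb{E}[X\,\overline{X}]$ (rather than $\mathbb{E}[XX]$ or $\mathbb{E}[\bar X\bar X]$, which vanish in the complex Gaussian case).
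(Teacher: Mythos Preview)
Your argument is correct and is exactly the standard derivation of the graphical Wick formula: expand in coordinates, apply Lemma~\ref{l.Wick}, use that for i.i.d.\ standard complex Gaussians only the covariances $\mathbb{E}[X_{ij}\overline{X_{kl}}]=\delta_{ik}\delta_{jl}$ survive so that contributing pairings are bijections $\alpha\in\mathcal S_p$ between $X$- and $\bar X$-boxes, and then read the resulting product of Kronecker deltas as the rewiring defining $\mathcal D_\alpha$. The paper itself does not give a proof of this theorem; it is simply stated as the graphical reformulation of the Wick formula, with a reference to Collins and Nechita \cite{cne10a,cne11b} where the formalism is developed. So there is nothing to compare against: your write-up supplies precisely the proof the paper omits, and the bookkeeping point you flag (that the deltas implement the erase-and-rewire prescription, with loops contributing the correct dimensional factors) is the only thing one needs to check.
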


\section{Random quantum states and Wishart matrices}
\label{sec:Wishart}

In this short section we recall the classical ensemble of Wishart random matrices and the corresponding induced measures on the set of density matrices.

Let $\phi \phi^*$ be a random pure state on the bipartite Hilbert space $\mathbb{C}^{d}\otimes \mathbb{C}^{s}$, where $\phi$ is a random unit vector distributed \emph{uniformly} on the sphere in $\mathbb{C}^{d}\otimes \mathbb{C}^{s}$.  The \emph{induced measure} of parameters $d$ and $s$ is the distribution $\chi_{d,s}$ of the random density matrix
$$\rho=\mathrm{Tr}_{\mathbb{C}^{s}}\left(\phi\phi^*\right),$$
which is  the partial trace of $\phi\phi^*$ with respect to the environment $\mathbb{C}^{s}$. This measure has been introduced by \.Zyczkowski and  Sommers \cite{zso} and its asymptotical properties have been studied by Nechita \cite{nec}.

Let now $X \in M_{d \times s}(\mathbb{C})$ be a random complex Ginibre matrix (a
$d\times s$ matrix with i.i.d. complex Gaussian $\mathcal{N}(0,1)$
entries). The positive semidefinite matrix $W=XX^* \in M_d(\mathbb C)$ is
called  a \emph{Wishart matrix} of parameters $d$ and $s$. Density matrices can be obtain from Wishart matrices \cite{nec}. More precisely, if $W$
is a Wishart matrix of parameters $d$ and $s$, then
\begin{equation*}
\rho = \frac{W}{\mathrm{Tr} W},
\end{equation*}
has distribution $\chi_{d,s}$.

The main goal of this work is to study the spectral properties of $\rho^{red}$, where $\rho=\rho_{AB}$ is a random quantum state having distribution $\chi_{d,s}$, when $d=nk$ is the dimension of the tensor product space $\mathbb{C}^n\otimes \mathbb{C}^k$.
Since the reduction criterion is invariant by rescaling by positive constants, we shall investigate from now reduced Wishart matrices instead of reduced random density matrices:
\begin{equation*}
R:=W^{red} = W_A \otimes \mathrm{I}_{k} - W_{AB}.
\end{equation*}
The main advantage of this approach is that the distribution of $W=W_{AB}$ is much easier to deal with than the probability distribution $\chi_{d,s}$.

In the following sections, we shall focus on the distribution of the eigenvalues of the random matrix $R$, and, ultimately, on evaluating the probability that the matrix $R$ is positive semidefinite.

\section{Moment formula}
\label{sec:moments}

Let $X \in M_{n k \times s}(\mathbb{C})$ be a random  Ginibre matrix. In this section we shall prove an explicit combinatorial formula for the moments of the random matrix $R = W^{red}\in M_{nk}(\mathbb C)$.
We use the graphical Wick formula introduced in Section \ref{sec:Wick}.
In this language, the matrix $R = W_A \otimes \mathrm{I}_k - W_{AB}$ (where $W_{AB} = XX^*$) is presented in Figure \ref{fig:R}.
\begin{figure}
\centering
\includegraphics{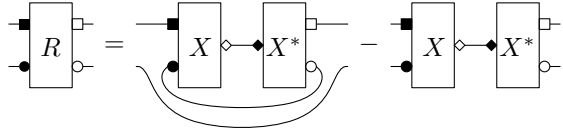}
\caption{Graphical representation of the reduced matrix $R =
(XX^*)^{red} \in M_{nk}(\mathbb C)$. Square labels correspond to
$\mathbb C^n$, round labels to $\mathbb C^k$ and diamond labels
correspond to $\mathbb C^s$.} \label{fig:R}
\end{figure}

Before giving the main result, let us introduce the following notation, which will be essential in what follows:
\begin{definition}\label{def:Pf} \rm
Let $\mathcal F_p$ be the set of all functions $f:\{1,\ldots, p\} \to \{1,2\}$. For a function $f \in \mathcal F_p$, let $P_f \in \mathcal S_p$ be the permutation which behaves like the identity on the set $f^{-1}(1)$ and like $\gamma$ on the set$f^{-1}(2)$.  More precisely,
\begin{equation*}
P_f(i) = \begin{cases}
i, \quad & \text {if } f(i)=1 \text{ or } f^{-1}(2)=\{i\},\\
i-r, \quad & \text {if } f(i)=f(i-r)=2 \text{ and } f(i-1) = \cdots = f(i-r+1) = 1,
\end{cases}
\end{equation*}
where the arguments of $f$ and the values of $P_f$ should be
understood modulo $p$ (identifying $j$ to $p+j$ if
$j\in\{-(p-1),\ldots,0\}$).
\end{definition}

In a similar manner, the definition of $P_f$ can be extended to
functions $f:M \to \{1,2\}$ defined on a totally ordered set
$M=\{a_1,\ldots,a_p\}$  with $a_1<a_2<\cdots < a_p$, identifying
$a_i$ to $i$, $i\in [p]$. In this case we have that
$P_f\in \mathcal S(M)$.

Let us start with some simple examples: for $p=1$, we have
$P_{f(1)=1} = P_{f(1)=2} = (1).$
For $p=2$, we get
\begin{center}
  \begin{tabular}{ | c | c | c | }
    \hline
    $f(1)$ & $f(2)$ & $P_f \in \mathcal S_2$ \\ \hline \hline
    $1$ & $1$ & $(1)(2)$ \\ \hline
    $1$ & $2$ & $(1)(2)$ \\ \hline
    $2$ & $1$ & $(1)(2)$\\ \hline
    $2$ & $2$ & $(2\,1)$\\ \hline
  \end{tabular}
\end{center}
Taking for example  $p=7$ and $f:\{1,\ldots, 7\}\to \{1,2\}$ with
$$f(1)=f(2)=f(4)=f(7)=1 \text{ and } f(3)=f(5)=f(6)=2,$$ we have
$P_f=(1)(2)(4)(7)(6\,5\,3).$

In the following lemma we give the basic properties of the
permutation $P_f$:

\begin{lemma}\label{lem.Pf}
For each $f\in\mathcal{F}_{p}$, we have
\begin{enumerate}
\item[(i)] $P_{f}$ is a geodesic permutation,
\item[(ii)] $\#P_{f}=|f^{-1}(1)|+1-{\bf{1}}_{f\equiv 1}$,
\item[(iii)] $\#\left(P_{f}^{-1}\gamma\right)=p-|f^{-1}(1)|+{\bf{1}}_{f\equiv
1}$,
\end{enumerate}
where ${\bf{1}}_{f\equiv 1}$ denotes a quantity which is equal to
$1$ when $f(i)=1$ for all $i\in [p]$, and $0$ otherwise.
\end{lemma}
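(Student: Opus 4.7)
The plan is to unpack the definition of $P_f$ to expose its cycle structure, from which (ii) follows by counting; then realize $P_f$ as the geodesic permutation of an explicit non-crossing partition to get (i); and finally obtain (iii) from (i) and (ii).

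First I would introduce the explicit labels $f^{-1}(1) = \{b_1 < \cdots < b_q\}$ and $f^{-1}(2) = \{a_1 < \cdots < a_m\}$, with $q + m = p$. From the two cases in the definition of $P_f$, each element of $f^{-1}(1)$ is a fixed point. For elements of $f^{-1}(2)$, when $m \geq 2$ the offset $r$ in the definition is exactly the cyclic gap between consecutive elements of $f^{-1}(2)$, so $P_f(a_i) = a_{i-1}$ for $i \geq 2$ and $P_f(a_1) = a_m$; these $m$ elements thus form a single cycle. When $m = 1$ the lone element of $f^{-1}(2)$ is fixed by the first clause of the definition, and when $m = 0$ there is nothing to do. Splitting on $m \in \{0,1\}$ versus $m \geq 2$ then yields (ii), with the indicator $\mathbf{1}_{f \equiv 1}$ correcting for the absence of the extra non-singleton cycle exactly in the constant-$1$ regime.

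For (i), I would let $\pi_f$ be the partition of $[p]$ whose blocks are the cycles of $P_f$ computed above; its only possible non-singleton block is $f^{-1}(2)$ itself, so no two blocks can interleave and $\pi_f \in NC(p)$ trivially. It remains to check $P_f = t(\pi_f)$ in the notation of the excerpt. This is immediate from the definition of $t$: since $\gamma$ decreases the index by one modulo $p$, starting from $a_i \in f^{-1}(2)$ the first iterate $\gamma^j(a_i)$ lying in the block $f^{-1}(2)$ is precisely $a_{i-1}$ (indices modulo $m$), matching $P_f$; on singleton blocks both maps are the identity. Invoking the poset isomorphism $NC(p) \cong \mathcal{S}_{NC(\gamma)}$ recalled right before the lemma, $P_f$ is geodesic, proving (i).

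Finally, (iii) is just arithmetic: the geodesic identity $|P_f| + |P_f^{-1}\gamma| = p - 1$ rewrites through \eqref{eq.n1} as $\#P_f + \#(P_f^{-1}\gamma) = p + 1$, and substituting (ii) gives
\[
\#(P_f^{-1}\gamma) = p + 1 - \bigl(|f^{-1}(1)| + 1 - \mathbf{1}_{f \equiv 1}\bigr) = p - |f^{-1}(1)| + \mathbf{1}_{f \equiv 1}.
\]
The only mildly delicate points in the whole argument are the two degenerate regimes $m = 0$ and $m = 1$, where the would-be non-trivial cycle collapses into fixed points; the indicator $\mathbf{1}_{f \equiv 1}$ in (ii) and (iii) is precisely the bookkeeping needed to absorb this. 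I do not anticipate any substantive obstacle beyond tracking these edge cases.
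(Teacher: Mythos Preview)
Your proof is correct and follows essentially the same approach as the paper: identify the cycle structure of $P_f$ (fixed points on $f^{-1}(1)$, a single cycle on $f^{-1}(2)$ when $|f^{-1}(2)|\geq 2$), recognize the associated partition as non-crossing, deduce (i) via the isomorphism $NC(p)\cong\mathcal{S}_{NC(\gamma)}$, read off (ii) by counting blocks, and obtain (iii) from the geodesic identity $\#P_f+\#(P_f^{-1}\gamma)=p+1$. Your explicit verification that $P_f=t(\pi_f)$ is a shade more careful than the paper's argument, but the route is otherwise identical.
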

\begin{proof}
(i) If there exist $i_1,\ldots, i_k\in [p]$ such that $f(i_1)=\cdots
=f(i_k)=1$, then the partition associated to the permutation $P_f$
is given by
\begin{equation}\label{eq:part.Pf1}
\pi_{P_f}=\left\{ V_1=\{i_1\},\ldots,
V_k=\{i_k\},V_{k+1}=\{1,\ldots,p\}\setminus \{i_1,\ldots,
i_k\}\right\}.
\end{equation}
In the particular case when $f(i)=1$ for all $i\in [p]$, we have
\begin{equation}\label{eq:part.Pf2}
\pi_{P_f}=\left\{ V_1=\{1\},\ldots, V_p=\{p\}\right\}.
\end{equation}
Otherwise, if $f(i)=2$ for all $i\in [p]$, then
\begin{equation}\label{eq:part.Pf3}
\pi_{P_f}=\left\{V=\{1,\ldots,p\}\right\}.
\end{equation}
It follows that $\pi_{P_f}$ is a non-crossing partition and using
the isomorphism between the geodesic permutations and non-crossing partitions,
we have that $P_f$ is a geodesic permutation.
Combining (\ref{eq:part.Pf1})--(\ref{eq:part.Pf3}), we obtain relation (ii).

(iii) Since $P_f$ is a geodesic permutation, we have that
$ |P_f|+|P_f^{-1}\gamma|=p-1.$
By (\ref{eq.n1}), this is equivalent to
$ p-\#P_f+p-\#(P_f^{-1}\gamma)=p-1.$
Using (ii) and relation above, it follows that (iii) holds.
\end{proof}

We can state now the main result of this section:

\begin{theorem}\label{thm:moments}
The moments of the random matrix $R \in M_{nk}(\mathbb C)$ are given by
\begin{equation}\label{eq:moments-R}
\forall p \geq 1, \quad \mathbb E \mathrm{Tr} (R^p) = \sum_{\alpha \in \mathcal S_p, \, f \in \mathcal F_p} (-1)^{|f^{-1}(2)|} s^{\#\alpha}n^{\#(\gamma^{-1}\alpha)}k^{{\bf 1}_{f \equiv 1} + \#(P_f^{-1}\alpha)},
\end{equation}
where the function $f:\{1, \ldots, p\} \to \{1,2\}$ encodes the choice of the term in each factor in the product (choose the $f(i)$-th term in the $i$-th factor)
\begin{equation*}
R^p = (W_A \otimes \mathrm{I}_k - W_{AB})(W_A \otimes \mathrm{I}_k - W_{AB}) \cdots (W_A \otimes \mathrm{I}_k - W_{AB}).
\end{equation*}
Note that in the case when $f \equiv 1$ (only partial traces over $\mathbb{C}^k$) one needs to add an extra factor of $k$, which corresponds to the indicator function.
\end{theorem}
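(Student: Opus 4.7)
The plan is to expand $R^p$ binomially and then apply the graphical Wick formula to each term. Write $R = M_1 - M_2$ with $M_1 = W_A \otimes \mathrm I_k$ and $M_2 = W_{AB}$, so that
\begin{equation*}
R^p = \sum_{f \in \mathcal F_p} (-1)^{|f^{-1}(2)|}\, M_{f(1)} M_{f(2)} \cdots M_{f(p)}.
\end{equation*}
Taking trace and expectation, each summand is a Wishart-style moment that, by Theorem~\ref{thm:graphical-Wick}, decomposes as a sum over Wick pairings $\alpha \in \mathcal S_p$ of a fully contracted diagram whose value is $n^{L_\square}\, k^{L_\circ}\, s^{L_\diamond}$, where $L_\square, L_\circ, L_\diamond$ count the loops of shape square, round, and diamond respectively. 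The whole proof thus reduces to computing these three loop counts as functions of $\alpha$ and $f$.

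Two of the three counts are independent of $f$ and follow from standard Wishart bookkeeping. Inside every factor $W = XX^*$ the diamond (i.e.\ $\mathbb C^s$) index is contracted internally, so after the Wick pairing the diamond classes are exactly the cycles of $\alpha$, giving $L_\diamond = \#\alpha$. The square index is never affected by an internal contraction (the partial trace defining $W_A$ only touches the round index), so the outer cyclic trace, governed by the full cycle $\gamma$, combined with the Wick pairing $\alpha$ yields the familiar count $L_\square = \#(\gamma^{-1}\alpha)$.

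The round count is the delicate part, and it is where the permutation $P_f$ enters. At a position with $f(i) = 1$ the round ports of $X_i$ and $\bar X_i$ are contracted internally by the partial trace in $W_A$, while the outer round line is carried separately through the identity factor $\mathrm I_k$; at a position with $f(i) = 2$ the outer round line instead passes directly through the ports of $X_i$ and $\bar X_i$. Propagating the $\mathrm I_k$ identifications along the cyclic trace collapses the outer round wires into equivalence classes indexed by the cycles of $P_f$ on $f^{-1}(2)$: each class is a cyclic block of consecutive indices starting just after one $f = 2$ position and ending at the next. A direct bookkeeping then shows that the round port of $X_i$ sits in the class labeled $i$, while the round port of $\bar X_j$ sits in the class labeled $P_f^{-1}(j)$; hence the Wick identification $r_{X_i} = r_{\bar X_{\alpha(i)}}$ becomes $i \sim P_f^{-1}(\alpha(i))$, and $L_\circ = \#(P_f^{-1}\alpha)$. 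The single exceptional case is $f \equiv 1$: with no $f = 2$ position to ``absorb'' the outer line, the chain of $\mathrm I_k$ wires closes into an additional isolated loop, which accounts for the indicator term $\mathbf{1}_{f \equiv 1}$.

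The main obstacle is therefore precisely this round-loop analysis: one must carefully disentangle how the internal partial traces of $W_A$, the identity factors $\mathrm I_k$, the cyclic trace, and the Wick pairings interact, and recognise that the governing combinatorics is exactly captured by the permutation $P_f$ of Definition~\ref{def:Pf} (whose basic structural properties are recorded in Lemma~\ref{lem.Pf} and will be used in the later asymptotic analysis). Once the three loop counts are in hand, multiplying them together and summing over $f$ and $\alpha$ with the signs $(-1)^{|f^{-1}(2)|}$ gives the claimed formula.
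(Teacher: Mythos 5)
Your proof is correct and follows essentially the same route as the paper: binomial expansion over $f\in\mathcal F_p$, graphical Wick expansion over $\alpha\in\mathcal S_p$, and a loop count per shape, with the key observation that the initial round ($\mathbb C^k$) wiring is exactly $P_f$ (so round loops number $\#(P_f^{-1}\alpha)$, plus the isolated $\mathrm I_k$-loop giving $\mathbf 1_{f\equiv 1}$). Your round-loop bookkeeping is in fact slightly more careful than the paper's prose, which misstates the $P_f$-wired decorations as ``square'' rather than ``round'', though its final formula is of course the same.
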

\begin{proof}
The proof is a straightforward application of the graphical Wick formula from Theorem \ref{thm:graphical-Wick}. The first step is to develop $R^p$ using the non-commutative binomial formula:
$$R^p = \sum_{f \in \mathcal F_p} (-1)^{|f^{-1}(2)|} R_f,$$
where $R_f$ denotes the ordered product
$$R_f = R_{f(1)} R_{f(2)} \cdots R_{f(p)} = \overrightarrow{\prod_{1 \leq i \leq p}} R_{f(i)},$$
for the two possible values of the factors
$R_1 =W_A \otimes \mathrm{I}_k$ and $R_2 =W_{AB}.$

We shall now use the formula \eqref{eq:graphical-Wick} to compute $\mathbb E \operatorname{Tr}R_f$. Let us first treat the case when $f \equiv 1$, i.e. when all the factors are equal to $R_1$ above. Before taking the expectation, the diagram for $\operatorname{Tr} R_{f \equiv 1}$ is depicted in Figure \ref{fig:Rf-all-1}.
\begin{figure}
\centering
\includegraphics{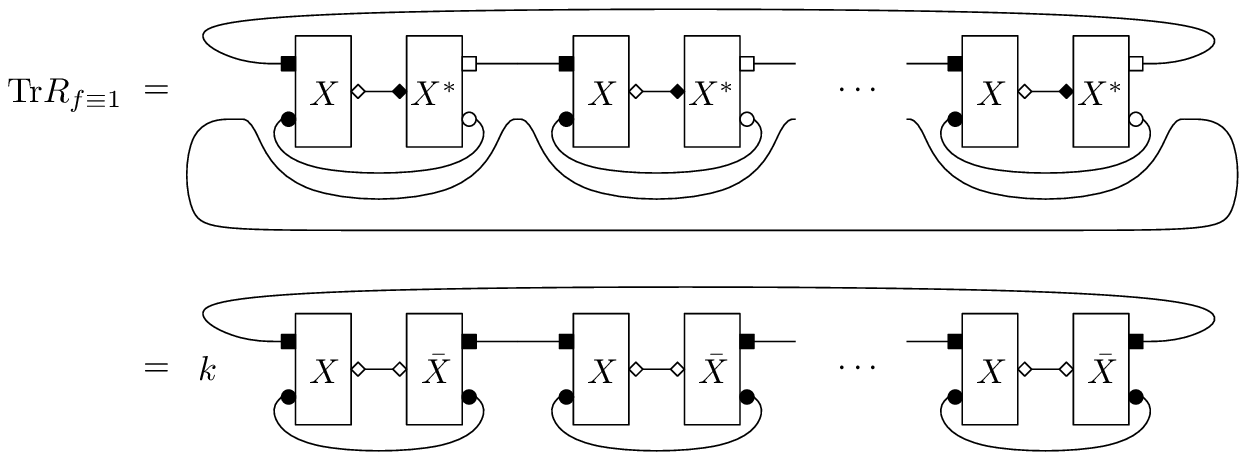}
\caption{Diagram for $\operatorname{Tr} R_{f \equiv 1}$.}
\label{fig:Rf-all-1}
\end{figure}
Using the graphical Wick expansion formula, we write
$$\mathbb E \operatorname{Tr} R_{f \equiv 1} = \sum_{\alpha \in \mathcal S_p} \mathcal D_\alpha,$$
where $\mathcal D_\alpha$ is the diagram obtained by erasing the $X$ and $\bar X$ boxed from Figure \ref{fig:Rf-all-1} and connecting the decorations of the $i$-th $X$ box with the corresponding decorations of the $\alpha(i)$-th $\bar X$ box, as in Figure \ref{fig:Rf-all-1-block-i}.
\begin{figure}
\centering
\includegraphics{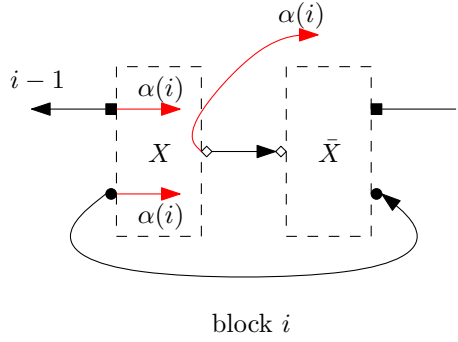}
\caption{Wick diagram expansion: the term corresponding to a permutation $\alpha$ is obtained by erasing the boxes and adding additional wiring according to $\alpha$.}
\label{fig:Rf-all-1-block-i}
\end{figure}

In this way, the resulting diagram $\mathcal D_\alpha$ is a collection of loops (see Figure \ref{fig:Rf-all-1-block-i}):
\begin{enumerate}
\item $\#\alpha$ loops of dimension $s$, corresponding to diamond-shaped decorations. The initial (black) wiring is given by the permutation $\mathrm{id}$ and the additional (red) wiring is given by $\alpha$;
\item $\#\alpha$ loops of dimension $k$, corresponding to round-shaped decorations. The initial (black) wiring is given by the permutation $\mathrm{id}$ and the additional (red) wiring is given by $\alpha$;
\item $\#(\gamma^{-1}\alpha)$ loops of dimension $n$, corresponding to square-shaped decorations. The initial (black) wiring is given by the permutation $\gamma$ ($i \mapsto i-1$) and the additional (red) wiring is given by $\alpha$.
\end{enumerate}
Putting everything together, we get (here, $f \equiv 1$)
\begin{align*}
\mathbb E \operatorname{Tr} R_{f \equiv 1} &= k\sum_{\alpha \in \mathcal S_p} s^{\#\alpha}n^{\#(\gamma^{-1}\alpha)}k^{\#\alpha}\\
&= \sum_{\alpha \in \mathcal S_p} s^{\#\alpha}n^{\#(\gamma^{-1}\alpha)}k^{{\bf 1}_{f \equiv 1} + \#(P_f^{-1}\alpha)}.
\end{align*}

A general term $f \in \mathcal F_p$, not identically equal to $1$, is treated in a similar manner. First, note that there is no longer a factor of $k$ coming from a ``detached'' loop corresponding to square decorations, since there is at least one index $i$ for which $f(i)=2$. For blocks $i$ such that $f(i)=1$, the discussion is the same as in the previous case, the wiring being identical. Blocks with $f(i)=2$ deserve special attention, see Figure \ref{fig:Rf-2}.

\begin{figure}
\centering
\includegraphics{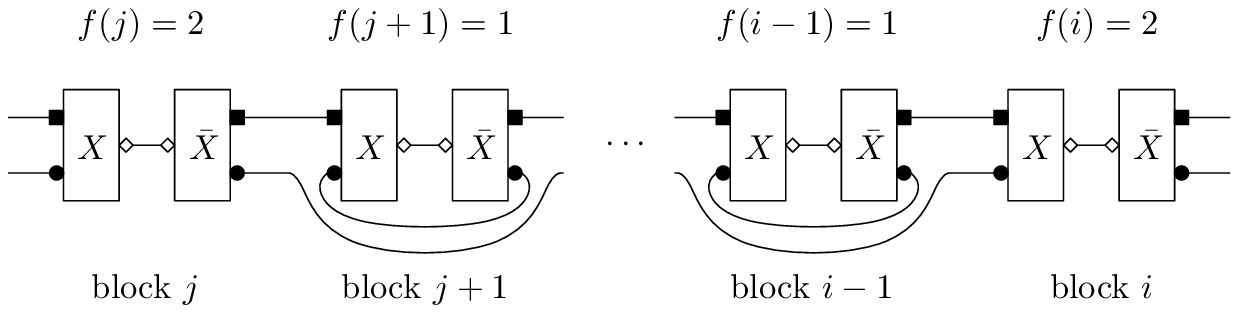}
\caption{Wick diagram expansion for a block with $f(i)=2$. The square black decoration is connected initially to the $\bar X$ box from the block $j=P_f(i)$. All the blocks strictly between $j$ and $i$ are such that $f(j+1) = \cdots = f(i-1)=1$.}
\label{fig:Rf-2}
\end{figure}

The only difference from the situation depicted in Figure \ref{fig:Rf-all-1-block-i} is that the black square label is connected initially to the black label of the $j=P_f(i)$-th $\bar X$ box, with $P_f$ from Definition \ref{def:Pf} (note that it could happen that $j=i$, in the case when $f^{-1}(2) = \{i\}$). The loop counting is identical for diamond-shaped and round-shaped decorations, and we get, for $f$ not identically equal to $1$,
$$\mathbb E \operatorname{Tr} R_{f} = \sum_{\alpha \in \mathcal S_p} s^{\#\alpha}n^{\#(\gamma^{-1}\alpha)}k^{ \#(P_f^{-1}\alpha)}.$$

Putting the two cases together and summing over all $f \in \mathcal F_p$, we obtain the announced formula \eqref{eq:moments-R}.
\end{proof}

As a direct application of the above general formula, the first two moments of $R$ are given by
\begin{align}
\mathbb E \mathrm{Tr} R &= nk(k-1)s, \label{eq:tr1}\\
\mathbb E \mathrm{Tr} \left( R^2 \right) &= (k-2)\left[ (ks)^2n +
ks n^2 \right] + nks^2 + (nk)^2s. \label{eq:tr2}
\end{align}

Understanding the behavior of the combinatorial powers of $n$, $k$ and $s$ in equation \eqref{eq:moments-R} will prove to be key in what follows.

\section{Balanced asymptotics}
\label{sec:balanced}

In this section, we analyze the spectral behavior of the random matrix $R$ in the
``balanced'' asymptotic regime, when both $n$ and $k$ grow, with
linear relative speed.

\noindent\textbf{Balanced asymptotics:} there exist positive constants $c,t>0$ such that
\begin{align}
\label{eq:regime-balanced-first}n &\to \infty; \\
k &\to \infty, \quad k/n \to t;\\
\label{eq:regime-balanced-last}s &\to \infty, \quad s/(nk) \to c.
\end{align}

In this asymptotic regime, we show that the spectrum of the reduced Wishart matrix $R$ becomes trivial when $n \to \infty$, in the sense that $R/(ks) \approx \mathrm{I}_{nk}$. Hence, one can not obtain violations of the reduction criterion by analyzing the global properties of the spectrum of $R$.

\begin{proposition}\label{prop:balanced-moments}
In the balanced asymptotical regime \eqref{eq:regime-balanced-first}--\eqref{eq:regime-balanced-last}, the moments of the rescaled random matrix $R$ converge to $1$:
\begin{equation*}
\forall p \geq 1, \quad \lim_{n \to \infty} \mathbb E \frac{1}{nk}
\mathrm{Tr} \left(\frac{R}{ks}\right)^p = 1.
\end{equation*}
In other words, the empirical eigenvalue distribution
\begin{equation*}
\mu_n = \frac{1}{nk} \sum_{i=1}^{nk}
\lambda_i\left(\frac{R}{ks}\right)
\end{equation*}
converges, in moments, to the Dirac mass at $1$, $\delta_1$.
\end{proposition}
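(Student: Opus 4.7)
The plan is to apply Theorem \ref{thm:moments} directly and perform a careful power-counting in $n$, isolating the dominant combinatorial terms.

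First, I would write
\begin{equation*}
\mathbb E \frac{1}{nk} \mathrm{Tr} \left(\frac{R}{ks}\right)^p = \frac{1}{nk \cdot (ks)^p} \sum_{\alpha \in \mathcal S_p, \, f \in \mathcal F_p} (-1)^{|f^{-1}(2)|} s^{\#\alpha}n^{\#(\gamma^{-1}\alpha)}k^{{\bf 1}_{f \equiv 1} + \#(P_f^{-1}\alpha)},
\end{equation*}
and substitute the scalings $k \sim tn$ and $s \sim ctn^2$. Every factor of $k$ contributes one power of $n$ (with a constant $t$) and every factor of $s$ contributes two powers of $n$ (with a constant $ct$). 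Thus the normalization factor $nk(ks)^p$ is of order $n^{3p+2}$, while the summand indexed by $(\alpha,f)$ is of order $n^{E(\alpha,f)}$ with exponent
\begin{equation*}
E(\alpha,f) = 2\#\alpha + \#(\gamma^{-1}\alpha) + {\bf 1}_{f \equiv 1} + \#(P_f^{-1}\alpha).
\end{equation*}

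The heart of the proof is the combinatorial inequality $E(\alpha,f)\leq 3p+2$ with equality if and only if $\alpha=\mathrm{id}$ and $f\equiv 1$. I would bound $\#(\gamma^{-1}\alpha)\leq p+1-\#\alpha$ using the geodesic inequality from Section \ref{sec:elements}, and bound $\#(P_f^{-1}\alpha)\leq p$ trivially. Combining these yields
\begin{equation*}
E(\alpha,f) \leq (p+1-\#\alpha) + 2\#\alpha + {\bf 1}_{f\equiv 1} + p = 2p+1+\#\alpha+{\bf 1}_{f\equiv 1} \leq 3p+2,
\end{equation*}
using $\#\alpha\leq p$ and ${\bf 1}_{f\equiv 1}\leq 1$. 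Equality forces $\#\alpha=p$ (so $\alpha=\mathrm{id}$), ${\bf 1}_{f\equiv 1}=1$ (so $f\equiv 1$), and indeed also $\#(P_f^{-1}\alpha)=p$, which is consistent since $f\equiv 1$ gives $P_f=\mathrm{id}$ by Definition \ref{def:Pf}. All other terms contribute at most $n^{3p+1}$, hence vanish after division by $n^{3p+2}$.

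The unique surviving term, $\alpha=\mathrm{id}$ and $f\equiv 1$, has sign $(-1)^0=+1$ and contributes
\begin{equation*}
\frac{s^p \cdot n \cdot k^{p+1}}{nk \cdot k^p \cdot s^p} = 1,
\end{equation*}
so the $p$-th moment of $R/(ks)$ converges to $1$ for every $p\geq 1$. Since the Dirac mass $\delta_1$ is determined by its moments, this yields moment convergence $\mu_n \to \delta_1$. The only delicate step is bookkeeping in the combinatorial inequality above, in particular checking that the trivial bound $\#(P_f^{-1}\alpha)\leq p$ is tight enough; the saving factor that makes the $f\equiv 1$ case special is precisely the extra $k$ encoded in the indicator ${\bf 1}_{f\equiv 1}$, which corresponds graphically to the ``detached'' loop over the second tensor factor in Figure \ref{fig:Rf-all-1-block-i}.
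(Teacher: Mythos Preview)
Your proof is correct and follows essentially the same strategy as the paper: apply the moment formula from Theorem~\ref{thm:moments}, substitute the scalings $k\sim tn$, $s\sim ctn^2$, and do a power-count in $n$ to isolate the unique dominant term $(\alpha,f)=(\mathrm{id},1)$. The only cosmetic difference is in how you organize the combinatorial bound: the paper uses two triangle inequalities, $|\alpha|+|\gamma^{-1}\alpha|\geq|\gamma|$ and $|\alpha|+|P_f^{-1}\alpha|\geq|P_f|\geq 0$, whereas you use the geodesic inequality for $\gamma$ together with the cruder bound $\#(P_f^{-1}\alpha)\leq p$ and then cap $\#\alpha\leq p$; both routes arrive at $E(\alpha,f)\leq 3p+2$ with the same equality case.
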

\begin{proof}
Plugging in the asymptotics from equations \eqref{eq:regime-balanced-first}--\eqref{eq:regime-balanced-last} into the moment formula \eqref{eq:moments-R}, we obtain
\begin{equation*}
\mathbb E \mathrm{Tr} (R^p) = (1+o(1))\sum_{\alpha \in \mathcal S_p, \, f \in \mathcal F_p} (-1)^{|f^{-1}(2)|} (ct)^{\#\alpha} t^{{\bf 1}_{f \equiv 1} + \#(P_f^{-1}\alpha)} n^{2\#\alpha + \#(\gamma^{-1}\alpha) + {\bf 1}_{f \equiv 1} + \#(P_f^{-1}\alpha)},
\end{equation*}
for $p\geq 1$.
Let us study the exponent of $n$ in the above relation and try to maximize it in order to find the dominant term (regardless of the sign):
\begin{align*}
\text{exponent of } n &=
2\#\alpha + \#(\gamma^{-1}\alpha) + {\bf 1}_{f \equiv 1} + \#(P_f^{-1}\alpha)\\
&=  {\bf 1}_{f \equiv 1} + 4p-(|\alpha| + |\gamma^{-1}\alpha|) - (|\alpha| + |P_f^{-1}\alpha|)\\
&\leq {\bf 1}_{f \equiv 1} + 4p - |\gamma| - |P_f|\\
&\leq 1+3p+1 = 3p+2,
\end{align*}
where we have used the inequalities
\begin{align*}
|\alpha| + |\gamma^{-1}\alpha| &\geq |\gamma| = p-1,\\
|\alpha| + |P_f^{-1}\alpha| &\geq |P_f| \geq 0,\\
{\bf 1}_{f \equiv 1} &\leq 1,
\end{align*}
which are simultaneously saturated if and only if $\alpha = \mathrm{id}$ and $f \equiv 1$. Thus, there is only one dominating term, and we get
\begin{equation*}
\forall p \geq 1, \quad \mathbb E \mathrm{Tr} (R^p) = (1+o(1)) n^{3p+2}(ct)^pt^{1+p}
\end{equation*}
and the conclusion follows by properly renormalizing $R$ and the trace.
\end{proof}

Let us now make some remarks about Proposition \ref{prop:balanced-moments} and its relation to violations of the reduction criterion. First, note that we only prove convergence in moments of the (random) empirical eigenvalue distribution. This does not imply that there do exist any negative eigenvalues of $R$. What Proposition \ref{prop:balanced-moments} shows is that there do not exist, on average, a finite, strictly positive, fraction of eigenvalues away from $1$. What we show, is that all states pass the reduction entanglement criterion, but in a weak sense: the empirical eigenvalue distribution converges, in moments, to a positively supported probability measure, $\delta_1$. In Theorem \ref{thm:balanced-strong}, we show that a stronger convergence holds, which will allow us to settle the question of the asymptotic positivity of the random matrix $R$. The result makes use of norm-convergence results for Wishart matrices, that we recalled in Appendix \ref{A-convergence}.

\begin{theorem}\label{thm:balanced-strong}
For every $\varepsilon > 0$, the following norm convergence holds:
$$\lim_{n \to \infty} \mathbb P\left(\left\|\frac{R}{ks} - \mathrm{I}_{nk}\right\| > \varepsilon \right) = 0.$$
In particular, the threshold for the reduction criterion in the balanced asymptotical regime is trivial, $c_{red}=0$: with large probability, almost all quantum states satisfy the reduction criterion.
\end{theorem}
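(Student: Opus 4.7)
The strategy is to decompose $R/(ks) - \mathrm{I}_{nk}$ into the two contributions coming from the defining identity $R = W_A \otimes \mathrm{I}_k - W_{AB}$, and control each one separately via the operator-norm convergence results for Wishart matrices collected in Appendix \ref{A-convergence}. The triangle inequality together with multiplicativity of the spectral norm under tensoring with an identity gives
\begin{equation*}
\left\|\frac{R}{ks} - \mathrm{I}_{nk}\right\| \leq \left\|\frac{W_A}{ks} - \mathrm{I}_n\right\| + \left\|\frac{W_{AB}}{ks}\right\|,
\end{equation*}
so it suffices to show that both summands on the right tend to zero in probability.

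The key observation for the first term is that $W_A$ is itself a Wishart matrix. Indeed, writing the rows of $X$ as pairs $(i,\ell) \in [n] \times [k]$ and the columns as $m \in [s]$, the entries of the partial trace $W_A = [\mathrm{id} \otimes \mathrm{Tr}](XX^*)$ are
\begin{equation*}
(W_A)_{ij} = \sum_{\ell, m} X_{(i,\ell),m}\, \overline{X_{(j,\ell),m}},
\end{equation*}
so grouping $(\ell, m)$ into a single column index of size $ks$ shows that $W_A$ has the distribution of $\widetilde X \widetilde X^*$, with $\widetilde X \in M_{n \times ks}(\mathbb{C})$ a matrix of i.i.d.\ standard complex Gaussian entries. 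In the balanced regime we have $n/(ks) \to 0$, which is the ``very tall'' Wishart regime, and the standard estimate recalled in the appendix gives $\|W_A/(ks) - \mathrm{I}_n\| \to 0$ in probability.

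For the second term, $W_{AB}$ is Wishart of parameters $nk$ and $s$ with aspect ratio $nk/s \to 1/c$, and the standard soft-edge bound yields $\|W_{AB}\| \leq s(1 + \sqrt{nk/s})^2(1+o(1))$ with probability tending to one. Dividing by $ks$ produces an extra factor $1/k$, which tends to zero. Combining the two estimates gives the announced norm convergence. The positivity claim then follows immediately: once $\|R/(ks) - \mathrm{I}_{nk}\| < 1$, the matrix $R$ is positive definite, so $\rho^{red} \geq 0$ and the reduction criterion is trivially satisfied.

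The only delicate point I expect is verifying that the appendix results apply uniformly in the extreme aspect-ratio regime $n/(ks) \to 0$, rather than only in a fixed-aspect-ratio limit; this should reduce to a standard concentration inequality for the extremal singular values of a rectangular Gaussian matrix whose width diverges much faster than its height, which is classical, but needs to be stated in the right form in the appendix.
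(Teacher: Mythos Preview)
Your approach is correct and in fact cleaner than the paper's. Both proofs use the same triangle-inequality decomposition
\[
\left\|\frac{R}{ks} - \mathrm{I}_{nk}\right\| \leq \left\|\frac{W_A}{ks} - \mathrm{I}_n\right\| + \frac{1}{k}\left\|\frac{W_{AB}}{s}\right\|,
\]
and both handle the $W_{AB}$ term identically via the Bai--Yin result (Proposition~\ref{prop:Bai-Yin}). The difference lies in the treatment of the $W_A$ term. The paper writes $W_A = \sum_{i=1}^k W_{ii}$ as the sum of the $k$ diagonal $n\times n$ blocks of $W_{AB}$, bounds $\|W_A/(ks)-\mathrm{I}_n\|$ by an average of $\|W_{ii}/s-\mathrm{I}_n\|$, and then applies the exponential tail estimate of Proposition~\ref{prop:Z-d} to each $W_{ii}$ together with a union bound over the $k$ (growing) terms. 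Your observation that $W_A$ is \emph{itself} a Wishart matrix of parameters $(n,ks)$ bypasses this entirely: since $ks/n \sim ctn^2 \to \infty$, Proposition~\ref{prop:Z-d} applies directly to $W_A$ and gives $\|W_A/(ks)-\mathrm{I}_n\|\to 0$ almost surely, with no need for the exponential concentration or the union bound. Your concern in the last paragraph is therefore already answered: the regime $s_n/n \to \infty$ is precisely the hypothesis of Proposition~\ref{prop:Z-d}. What the paper's route buys is that it only uses the block structure and never needs to recognize $W_A$ as Wishart, but at the cost of a more involved probabilistic argument; your route is shorter and uses the appendix in a more natural way.
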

\begin{proof}
We start by estimating the quantity in the statement
\begin{align*}
\left\|\frac{R}{ks} - \mathrm{I}_{nk}\right\| &= \left\|\frac{W_A \otimes \mathrm{I_k}}{ks} -\frac{W_{AB}}{ks} - \mathrm{I}_{nk}\right\|\\
&\leq \frac{1}{k}\left\|\frac{W_{AB}}{s}\right\| + \left\|\left(\frac{W_A}{ks} - \mathrm{I}_{n}\right) \otimes \mathrm{I_k}\right\|\\
&\leq \frac{1}{k}\left\|\frac{W_{AB}}{s}\right\| + \left\|\frac{W_A}{ks} - \mathrm{I}_{n}\right\|,
\end{align*}
where $W_A = [\operatorname{id} \otimes \operatorname{Tr}](W_{AB}) \in M_n(\mathbb C)$ denotes the partial trace over the second subspace. We shall use the two results cited in Appendix \ref{A-convergence} to deal with the two terms in the last inequality. First, using Proposition \ref{prop:Bai-Yin}, we obtain
$$ \text{almost surely,} \qquad \lim_{n \to \infty}\left\|\frac{W_{AB}}{s}\right\| = (\sqrt c +1)^2$$
so the first term vanishes. To deal with the second term, recall that $W_A$ is the sum of the diagonal blocks $W_{ii}$ of $W_{AB}$, so
$$\left\|\frac{W_A}{ks} - \mathrm{I}_{n}\right\| = \left\|\sum_{i=1}^k\frac{W_{ii}}{ks} - \mathrm{I}_{n}\right\| \leq \sum_{i=1}^k\frac{1}{k}\left\|\frac{W_{ii}}{s} - \mathrm{I}_{n}\right\|.$$
Note that the random matrices $W_{ii}\in M_n(\mathbb C)$ are {i.i.d.} Wishart matrices of parameter $s \sim cnk \gg n$, so, by Proposition \ref{prop:Z-d} with $\varepsilon=1$, for each $1 \leq i \leq k$, we have that
$$\mathbb P\left(\sqrt\frac{s}{n}\left\|\frac{W_{ii}}{s} - \mathrm{I}_{n}\right\|>3\right) \leq C\exp(-an^{1/3}),$$
for some fixed positive constants $C,a$. In particular, since $s/n \to \infty$, for $n$ large enough, we have that
\begin{equation*}\label{eq:exp-bound-P}
\mathbb P\left(\left\|\frac{W_{ii}}{s} - \mathrm{I}_{n}\right\|>\varepsilon\right) \leq C\exp(-an^{1/3}).
\end{equation*}
We bound now
\begin{align*}
\mathbb P\left(\sum_{i=1}^k\frac{1}{k}\left\|\frac{W_{ii}}{s} - \mathrm{I}_{n}\right\|>\varepsilon\right) &\leq \mathbb P\left(\max_{1 \leq i \leq k}\left\|\frac{W_{ii}}{s} - \mathrm{I}_{n}\right\|>\varepsilon\right)\\
&= 1-\prod_{i=1}^k\mathbb P\left(\left\|\frac{W_{ii}}{s} - \mathrm{I}_{n}\right\| \leq \varepsilon\right)\\
&= 1-\left[1-\mathbb P\left(\left\|\frac{W_{11}}{s} - \mathrm{I}_{n}\right\| > \varepsilon\right)\right]^k\\
&\leq 1-\left[1-C\exp(-an^{1/3})\right]^{k}.
\end{align*}

We can conclude, using  the fact that $C\exp(-an^{1/3})k \to 0$ as $n\to\infty$ and $k\sim tn$.
\end{proof}

From the above result, one can easily infer the threshold for the simultaneous reduction criterion:

\begin{proposition}\label{prop:simultaneous-balanced}
The threshold for the \emph{simultaneous} reduction criterion in the balanced asymptotical regime is trivial, $c_{red}=0$: with large probability, almost all quantum states satisfy the simultaneous reduction criterion.
\end{proposition}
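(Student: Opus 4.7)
The plan is to bootstrap Theorem \ref{thm:balanced-strong} by exploiting the symmetry of the balanced regime under the exchange $n \leftrightarrow k$. Recall that the simultaneous reduction criterion demands positivity of both $\rho^{red} = [\mathrm{id}\otimes\varphi](\rho_{AB})$ and $\tilde\rho^{red} = [\varphi\otimes\mathrm{id}](\rho_{AB})$. Theorem \ref{thm:balanced-strong} already takes care of the first condition, since it shows
\[
\lim_{n\to\infty} \mathbb{P}\bigl(\|R/(ks) - \mathrm{I}_{nk}\| > \varepsilon\bigr) = 0,
\]
so with high probability $R \succeq (ks/2)\mathrm{I}_{nk} \geq 0$. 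It therefore suffices to establish the analogous norm convergence for
\[
\tilde R := \mathrm{I}_n \otimes W_B - W_{AB},
\]
appropriately rescaled.

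First I would observe that $\tilde R$ is, up to relabeling the two tensor factors, exactly of the form handled in Theorem \ref{thm:balanced-strong}. Writing the tensor swap $\sigma : \mathbb{C}^n \otimes \mathbb{C}^k \to \mathbb{C}^k \otimes \mathbb{C}^n$ and setting $\widehat W := \sigma W_{AB} \sigma^*$, the matrix $\widehat W$ is again a Wishart matrix of parameters $nk$ and $s$ on $\mathbb{C}^k \otimes \mathbb{C}^n$ (Wishart distributions are invariant under unitary conjugation). Moreover, $\sigma \tilde R \sigma^* = \widehat W_{B'} \otimes \mathrm{I}_n - \widehat W$, which is precisely the reduced matrix built from $\widehat W$ with the roles $n' = k$ and $k' = n$.

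Next I would verify that the new parameters $(n', k', s)$ still satisfy the balanced asymptotic regime \eqref{eq:regime-balanced-first}--\eqref{eq:regime-balanced-last}: $n' = k \to \infty$, $k'/n' = n/k \to 1/t > 0$, and $s/(n'k') = s/(nk) \to c > 0$. Theorem \ref{thm:balanced-strong} then applies verbatim to $\widehat W$ and yields
\[
\lim_{n\to\infty} \mathbb{P}\bigl(\|\tilde R/(ns) - \mathrm{I}_{nk}\| > \varepsilon\bigr) = 0.
\]

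Finally a union bound finishes the job: with probability tending to $1$, both $\|R/(ks) - \mathrm{I}_{nk}\| < 1/2$ and $\|\tilde R/(ns) - \mathrm{I}_{nk}\| < 1/2$, in which case both $R$ and $\tilde R$ are positive semidefinite and $\rho_{AB}$ satisfies the simultaneous reduction criterion. I do not anticipate any real obstacle here; the only point that deserves a careful sentence is the unitary invariance of the Wishart law under the tensor swap, so that the assumptions of Theorem \ref{thm:balanced-strong} can legitimately be invoked with the roles of the two subsystems interchanged.
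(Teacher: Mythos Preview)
Your proposal is correct and follows essentially the same approach as the paper: apply Theorem~\ref{thm:balanced-strong} twice, once as stated and once with the roles of the two tensor factors swapped (equivalently, with $t$ replaced by $1/t$), then combine via a union bound. Your write-up is in fact slightly more explicit than the paper's about the swap mechanism and the correct normalization $\tilde R/(ns)$, but the underlying argument is identical.
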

\begin{proof}
The result follows from applying Theorem \ref{thm:balanced-strong} twice. We first apply it the usual setting to obtain that
$$\lim_{n \to \infty} \mathbb P\left(\left\|\frac{R}{ks} - \mathrm{I}_{nk}\right\| > \varepsilon \right) = 0.$$
Then, writing $\tilde R = \mathrm{I}_n \otimes W_B - W_{AB}$ (see also Eq. \eqref{eq:def-tilde-rho-red}) and applying Theorem \ref{thm:balanced-strong} with $t:=1/t$, we get
$$\lim_{n \to \infty} \mathbb P\left(\left\|\frac{\tilde R}{ks} - \mathrm{I}_{nk}\right\| > \varepsilon \right) = 0.$$
We conclude that
$$\lim_{n \to \infty} \mathbb P\left(R \geq 0 \text{ and } \tilde R \geq 0 \right) = 1.$$
\end{proof}

\section{Unbalanced asymptotics, first case}
\label{sec:unbalanced-A}

In this section and in the next one, we focus on \emph{unbalanced} asymptotical regimes, where the smallest of $n$ and $k$ is being kept fixed, while the largest dimension grows to infinity. We start with the case where $n$ is fixed, while $k$, the dimension of the space on which the reduction map is applied, grows.

\noindent\textbf{Unbalanced asymptotics, first case:} there exists a positive constant $c>0$ such that
\begin{align}
\label{eq:regime-unbalanced-A-first}n &\text{ is fixed}; \\
k &\to \infty;\\
\label{eq:regime-unbalanced-A-last}s &\to \infty, \quad s/(nk) \to c.
\end{align}

\begin{proposition}\label{prop:unbalanced-A-moments}
In the first unbalanced asymptotical regime \eqref{eq:regime-unbalanced-A-first}--\eqref{eq:regime-unbalanced-A-last}, the  moments of the rescaled random matrix $R$ converge to $1$:
\begin{equation*}
\forall p \geq 1, \quad \lim_{k \to \infty} \mathbb E \frac{1}{nk} \mathrm{Tr} \left(\frac{R}{ks}\right)^p = 1.
\end{equation*}
In other words, the empirical eigenvalue distribution
\begin{equation*}
\mu_k = \frac{1}{nk} \sum_{i=1}^{nk} \lambda_i\left(\frac{R}{ks}\right)
\end{equation*}
converges, in moments, to the Dirac mass at $1$, $\delta_1$.
\end{proposition}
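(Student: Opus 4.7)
The plan is to repeat the moment-computation strategy of the balanced case (Proposition \ref{prop:balanced-moments}), but with the key asymptotic bookkeeping adapted to the regime where $n$ is a fixed constant and only $k$ (and consequently $s \sim cnk$) tends to infinity. Starting from the combinatorial moment formula of Theorem \ref{thm:moments}, I would substitute $s = cnk(1+o(1))$ to rewrite
\begin{equation*}
\mathbb E \mathrm{Tr}(R^p) = (1+o(1))\sum_{\alpha \in \mathcal S_p,\, f \in \mathcal F_p} (-1)^{|f^{-1}(2)|} c^{\#\alpha} n^{\#\alpha + \#(\gamma^{-1}\alpha)} k^{\#\alpha + \mathbf 1_{f \equiv 1} + \#(P_f^{-1}\alpha)}.
\end{equation*}
Since $n$ is fixed, only the exponent of $k$ drives the asymptotics, so the task reduces to maximizing
\begin{equation*}
E(\alpha,f) := \#\alpha + \mathbf 1_{f \equiv 1} + \#(P_f^{-1}\alpha)
\end{equation*}
over $(\alpha,f) \in \mathcal S_p \times \mathcal F_p$.

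Next, I would apply the standard triangle-inequality bound $|\alpha| + |P_f^{-1}\alpha| \geq |P_f|$, which via \eqref{eq.n1} gives $\#\alpha + \#(P_f^{-1}\alpha) \leq 2p - |P_f|$. Combined with Lemma \ref{lem.Pf}(ii)--(iii), which implies $|P_f| = |f^{-1}(2)| - \mathbf 1_{f\equiv 1}$, this yields
\begin{equation*}
E(\alpha,f) \leq 2p - |f^{-1}(2)| + \mathbf 1_{f \equiv 1} + \mathbf 1_{f\equiv 1} \leq 2p+1.
\end{equation*}
Equality forces simultaneously $\mathbf 1_{f\equiv 1} = 1$ (so $f \equiv 1$, $P_f = \mathrm{id}$) and $|\alpha| + |P_f^{-1}\alpha| = 2|\alpha| = 0$, hence $\alpha = \mathrm{id}$. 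Thus the maximizer is unique.

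Plugging the unique dominant pair $\alpha = \mathrm{id}$, $f \equiv 1$ into the formula, using $\#\mathrm{id} = p$ and $\#(\gamma^{-1}) = 1$, I get
\begin{equation*}
\mathbb E \mathrm{Tr}(R^p) = (1+o(1))\, c^p n^{p+1} k^{2p+1},
\end{equation*}
and dividing by $nk \cdot (ks)^p \sim nk \cdot c^p n^p k^{2p} = c^p n^{p+1} k^{2p+1}$ gives the announced limit $1$ for every $p \geq 1$. Since all moments of the empirical spectral distribution of $R/(ks)$ converge to the moments of $\delta_1$, which is determined by its moments, the convergence in moments to $\delta_1$ follows. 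The main combinatorial obstacle is the uniqueness of the maximizer, which is handled cleanly by the geodesic/triangle-inequality argument together with the explicit formulas of Lemma \ref{lem.Pf}; beyond that, the proof is structurally parallel to that of Proposition \ref{prop:balanced-moments}, with $n^{3p+2}$ replaced by $k^{2p+1}$ in the role of the leading power.
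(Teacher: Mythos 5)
Your strategy is exactly the paper's: isolate the exponent of the growing parameter $k$, bound it via the triangle inequality on $\mathcal S_p$, and identify the unique maximizing pair $(\alpha,f)$. There is, however, an arithmetic slip in your auxiliary formula for $|P_f|$: from Lemma \ref{lem.Pf}(ii) and Eq.~\eqref{eq.n1} one gets
$$|P_f| = p - \#P_f = p - |f^{-1}(1)| - 1 + \mathbf 1_{f\equiv 1} = |f^{-1}(2)| - 1 + \mathbf 1_{f\equiv 1},$$
not $|f^{-1}(2)| - \mathbf 1_{f\equiv 1}$. As a result, your displayed inequality chain $E(\alpha,f) \leq 2p - |f^{-1}(2)| + 2\,\mathbf 1_{f\equiv 1} \leq 2p+1$ is false at $f\equiv 1$, where the middle expression evaluates to $2p+2$. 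With the corrected formula the chain becomes $E(\alpha,f) \leq \mathbf 1_{f\equiv 1} + 2p - |P_f| = 2p + 1 - |f^{-1}(2)| \leq 2p+1$, with equality forcing $f\equiv 1$ (hence $P_f=\mathrm{id}$) and then $\alpha=\mathrm{id}$; the remaining asymptotics and the final limit are exactly as you wrote. Note that the paper sidesteps the explicit $|P_f|$ formula entirely, bounding $\mathbf 1_{f\equiv 1} + 2p - |P_f| \leq 1 + 2p$ directly from $|P_f|\geq 0$ and $\mathbf 1_{f\equiv 1}\leq 1$, which is shorter and avoids this pitfall.
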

\begin{proof}
The proof follows closely the one of Proposition \ref{prop:balanced-moments}. After replacing $s$ by $cnk(1+o(1))$ into formula \eqref{eq:moments-R}, we obtain
\begin{equation*}
\forall p \geq 1, \quad \mathbb E \mathrm{Tr} (R^p) = (1+o(1))\sum_{\alpha \in \mathcal S_p, \, f \in \mathcal F_p} (-1)^{|f^{-1}(2)|} c^{\#\alpha}n^{\#\alpha+\#(\gamma^{-1}\alpha)}k^{{\bf 1}_{f \equiv 1} + \#\alpha + \#(P_f^{-1}\alpha)}.
\end{equation*}
The power of the growing parameter $k$ above can be bounded as follows:
\begin{align*}
\text{exponent of } k &=
{\bf 1}_{f \equiv 1} + \#\alpha + \#(P_f^{-1}\alpha)\\
&=  {\bf 1}_{f \equiv 1} + 2p- (|\alpha| + |P_f^{-1}\alpha|)\\
&\leq {\bf 1}_{f \equiv 1} + 2p - |P_f|\\
&\leq 1+2p,
\end{align*}
with equality at every step if and only if $\alpha = P_f = \mathrm{id}$ and $f \equiv 1$. From this we obtain the dominating term and thus
\begin{equation*}
\forall p \geq 1, \quad \mathbb E \mathrm{Tr} (R^p) = (1+o(1))k^{2p+1}c^pn^{p+1},
\end{equation*}
which, after the proper renormalization  allows to conclude (note that $ks \sim cnk^2$).
\end{proof}

One can make the same remarks as for the balanced case: the above result shows that all states satisfy the reduction criterion, in the current asymptotic regime, in a  weak sense (on average, for limiting empirical eigenvalue distributions). It is not excluded that negative eigenvalues remain undetected by our method of moments. As before, we need a stronger type of convergence to conclude.

\begin{theorem}\label{thm:unbalanced-A-strong}
For every $\varepsilon > 0$, the following norm convergence holds:
\begin{equation*}
\lim_{k \to \infty} \mathbb P\left(\left\|\frac{R}{ks} - \mathrm{I}_{nk}\right\|
> \varepsilon \right) = 0.
\end{equation*}
In particular, the threshold for the reduction criterion in the first unbalanced asymptotical regime is trivial, $c_{red}=0$: with large probability, almost all quantum states satisfy the reduction criterion.
\end{theorem}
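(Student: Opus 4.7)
The plan is to mimic the proof of Theorem \ref{thm:balanced-strong}, with the simplification that one of the factor dimensions is now fixed. I would start from the same triangle-inequality decomposition
\[
\left\|\frac{R}{ks}-\mathrm{I}_{nk}\right\|\;\leq\;\frac{1}{k}\left\|\frac{W_{AB}}{s}\right\|+\left\|\frac{W_A}{ks}-\mathrm{I}_n\right\|,
\]
which was the starting point in the balanced case, and control the two summands under the regime \eqref{eq:regime-unbalanced-A-first}--\eqref{eq:regime-unbalanced-A-last}.

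For the first summand, $W_{AB}=XX^{*}$ with $X$ of size $nk\times s$, so $nk\to\infty$ and $(nk)/s\to 1/c$ as $k\to\infty$. By the Bai--Yin-type Proposition \ref{prop:Bai-Yin} cited in the appendix, $\|W_{AB}/s\|$ converges almost surely to a finite constant depending only on $c$. Dividing by $k\to\infty$, the first summand tends to $0$ almost surely.

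The second summand is the place where the argument genuinely differs from the balanced one: Proposition \ref{prop:Z-d} was used there with $n\to\infty$, which is no longer available. The key observation is that $W_A$ is itself a Wishart matrix. Indeed, viewing $X\in M_{nk\times s}(\mathbb C)$ as a tensor with entries $X_{(a,b),c}$ ($a\in[n]$, $b\in[k]$, $c\in[s]$), define $Y\in M_{n\times ks}(\mathbb C)$ by $Y_{a,(b,c)}:=X_{(a,b),c}$; then $Y$ has i.i.d.\ complex Gaussian entries and a direct computation shows $W_A=YY^{*}$. Thus $W_A$ is Wishart with parameters $n$ (fixed) and $ks\to\infty$. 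For a Wishart matrix in this asymptotic regime, the entry-wise strong law of large numbers gives $W_A/(ks)\to\mathrm{I}_n$ almost surely, and since $n$ is fixed, entry-wise convergence is equivalent to operator-norm convergence. (Alternatively, Proposition \ref{prop:Bai-Yin} applied to $Y$ yields the same conclusion since $n/(ks)\to 0$.)

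Combining the two estimates gives $\|R/(ks)-\mathrm{I}_{nk}\|\to 0$ in probability, which is the desired strong convergence. In particular, $\lambda_{\min}(R)\geq (1-\varepsilon)ks>0$ with probability tending to one, so for every $c>0$ the reduction criterion is satisfied asymptotically, i.e.\ $c_{red}=0$. The only delicate point is the identification of $W_A$ as Wishart of parameters $(n,ks)$; once this is in place the analytic inputs are elementary and no fine concentration estimate is needed.
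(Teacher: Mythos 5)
Your proof is correct, but it follows a genuinely different route from the paper's. The paper bypasses the triangle-inequality decomposition entirely and instead uses the elementary bound $\|M\|^{2}\leq\mathrm{Tr}(M^{2})$ for Hermitian $M$, together with Markov's inequality, to reduce the claim to the vanishing of
\begin{equation*}
\mathbb{E}\,\mathrm{Tr}\left(\frac{R}{ks}-\mathrm{I}_{nk}\right)^{2}
=\frac{2n^{2}}{s}-\frac{2n^{2}}{ks}+\frac{n}{k},
\end{equation*}
which follows at once from the explicit moments $\mathbb{E}\,\mathrm{Tr}\,R$ and $\mathbb{E}\,\mathrm{Tr}(R^{2})$ computed in \eqref{eq:tr1}--\eqref{eq:tr2}. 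This second-moment argument is short and self-contained, but it relies on already having the moment formula of Theorem \ref{thm:moments}; note also that it works because $n$ is fixed, so a crude $\|\cdot\|\leq\|\cdot\|_{HS}$ bound is enough, and the same trick would fail in the balanced regime where the Hilbert--Schmidt norm carries an extra factor $\sqrt{nk}$.

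Your argument buys something different: the identification of $W_A$ as a bona fide Wishart matrix of parameters $(n,ks)$ (via the reshaping $Y_{a,(b,c)}=X_{(a,b),c}$) is clean and structural, and it makes the convergence of $W_A/(ks)$ to $\mathrm{I}_n$ transparent without any moment computation. Combined with the Bai--Yin bound on $\|W_{AB}/s\|$, this mimics the proof of Theorem \ref{thm:balanced-strong} but with a simpler concentration input, since $n$ fixed reduces the matrix convergence to an entry-wise strong law of large numbers. One small caveat: the parenthetical alternative invoking Proposition \ref{prop:Bai-Yin} for $Y$ is not directly licensed, since that proposition as stated assumes $s_n/n\to c\in(0,\infty)$, whereas here $ks/n\to\infty$; the entry-wise SLLN argument you give as the primary route is the one that actually applies. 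Both proofs establish the required convergence in probability, so the conclusion $c_{red}=0$ follows either way.
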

\begin{proof}
Since $\| M \| ^{2}\leq \mathrm{Tr}\left(M^{2}\right)$ for
every Hermitian matrix $M$, we have, by
Markov's inequality,
\begin{align*}
\mathbb P\left(\left\|
\frac{R}{ks}-\mathrm{I}_{nk}\right\|>\varepsilon\right)&=\mathbb P\left(\left\|
\frac{R}{ks}-\mathrm{I}_{nk}\right\|^2>\varepsilon^2\right) \\
&\leq \mathbb P\left(\mathrm{Tr} \left(
\frac{R}{ks}-\mathrm{I}_{nk}\right)^2>\varepsilon^2\right)\\
&\leq\frac{\mathbb{E}\mathrm{Tr}\left(
\frac{R}{ks}-\mathrm{I}_{nk}\right)^2}{\varepsilon^2}.
\end{align*}
Using relations \eqref{eq:tr1} and \eqref{eq:tr2}, we have
\begin{align*}
\mathbb{E}\mathrm{Tr}\left(
\frac{R}{ks}-\mathrm{I}_{nk}\right)^2&=\frac{1}{k^2s^2}\mathbb{E}\mathrm{Tr}(R^2)-\frac{2}{ks}\mathbb{E}\mathrm{Tr}R+nk\\
&=\frac{2n^2}{s}-\frac{2n^2}{ks}+\frac{n}{k},
\end{align*}
which goes to zero as $k\to\infty$ and $s \sim cnk$. The claim now follows from the
relations above.
\end{proof}

\section{Unbalanced asymptotics, second case}
\label{sec:unbalanced-B}

In this section, we study the second unbalanced asymptotical regime, where $k$ is fixed and $n \to \infty$. Recall that this corresponds to the situation where one of the two subsystems of the partition (called $B$, on which the reduction map is applied) has fixed dimension $k$ and the other one $A$ has a very large dimension $n \to \infty$. As it turns out, we will obtain a non-trivial asymptotic eigenvalue distribution, whose support will be analyzed in the next section.

\noindent\textbf{Unbalanced asymptotics, second case:} there exists a positive constant $c>0$ such that
\begin{align}
\label{eq:regime-unbalanced-B-first}n &\to \infty;\\
k &\text{ is fixed}; \\
\label{eq:regime-unbalanced-B-last}s &\to \infty, \quad s/(nk) \to c.
\end{align}

\begin{theorem}\label{thm:unbalanced-B}
In the second unbalanced asymptotical regime \eqref{eq:regime-unbalanced-B-first}--\eqref{eq:regime-unbalanced-B-last}, the moments of the rescaled random matrix $R$ converge to the following combinatorial quantity:
\begin{equation}\label{eq:asympt-moments-unbalanced-B}
\forall p \geq 1, \quad \lim_{n \to \infty} \mathbb E \frac{1}{nk} \mathrm{Tr} \left(\frac{R}{n}\right)^p = \sum_{\alpha \in NC(p)} \prod_{b \in \alpha} c\left[ (1-k)^{|b|} + k^2-1 \right].
\end{equation}
In other words, the empirical eigenvalue distribution
\begin{equation*}
\mu_n = \frac{1}{nk} \sum_{i=1}^{nk} \lambda_i\left(\frac{R}{n}\right)
\end{equation*}
converges, in moments, to a \emph{compound free Poisson distribution} $\mu_{k,c} = \pi_{\nu_{k,c}}$, where
\begin{equation*}
\nu_{k,c} = c\delta_{1-k} + c(k^2-1) \delta_1.
\end{equation*}
Moreover, the above convergence holds in a \emph{strong} sense: the extremal eigenvalues of the random matrix $R/n$ converge, almost surely when $n \to \infty$, to the edges of the support of the limiting measure $\mu_{k,c}$.
\end{theorem}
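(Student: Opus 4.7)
The plan is to start from the explicit moment formula \eqref{eq:moments-R} of Theorem \ref{thm:moments} and to extract, via an asymptotic analysis tailored to the present regime, a combinatorial limit which I will then recognize as the moment--cumulant expansion of $\pi_{\nu_{k,c}}$. Substituting $s=cnk(1+o(1))$ into \eqref{eq:moments-R}, each summand indexed by $(\alpha,f)$ has magnitude
\[
c^{\#\alpha}\, n^{\#\alpha + \#(\gamma^{-1}\alpha)}\, k^{\#\alpha + \mathbf 1_{f\equiv 1} + \#(P_f^{-1}\alpha)},
\]
so the first step is to identify the dominant exponent of $n$. Using the triangle inequality on the Cayley graph, $|\alpha|+|\alpha^{-1}\gamma|\geq |\gamma|=p-1$, together with \eqref{eq.n1}, I obtain $\#\alpha+\#(\gamma^{-1}\alpha)\leq p+1$, with equality precisely when $\alpha$ is geodesic, i.e.\ $\alpha \in \mathcal S_{NC(\gamma)}\cong NC(p)$. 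After dividing by $n^{p+1}k$ only these $\alpha$ survive, and the limiting $p$-th moment of $R/n$ takes the shape
\[
\sum_{\alpha \in NC(p)} c^{\#\alpha}\, k^{\#\alpha-1}\sum_{f \in \mathcal F_p}(-1)^{|f^{-1}(2)|}\, k^{\mathbf 1_{f\equiv 1} + \#(P_f^{-1}\alpha)}.
\]

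The crux is then the combinatorial identity, for each $\alpha \in NC(p)$ with blocks $B_1,\ldots,B_m$,
\[
k^{m-1}\sum_{f\in\mathcal F_p}(-1)^{|f^{-1}(2)|}\, k^{\mathbf 1_{f\equiv 1} + \#(P_f^{-1}\alpha)} = \prod_{j=1}^m \bigl[(1-k)^{|B_j|}+k^2-1\bigr].
\]
My approach is to parametrize $f$ by $T=f^{-1}(2)\subseteq [p]$ and to use the explicit description from the proof of Lemma \ref{lem.Pf}: $P_T$ is the geodesic whose associated partition consists of the single block $T$ together with singletons on $T^c$. Since $\alpha$ is non-crossing and its cycles are precisely the blocks $B_j$ endowed with the cyclic order induced by $\gamma$, I expect $\#(P_T^{-1}\alpha)$ to split according to the intersections $T\cap B_j$, which would factor the sum over $T$ as a product over the blocks $B_j$ of block-local sums; the one-block case (i.e.\ $\alpha=\gamma$) would then follow from a direct cyclic-group computation using $|P_T|=|T|-\mathbf 1_{T\neq\emptyset}$ and Kreweras complementation. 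Assuming the identity, the limiting $p$-th moment reduces to $\sum_{\pi\in NC(p)}\prod_{b\in\pi} c\bigl[(1-k)^{|b|}+k^2-1\bigr]$, which by \eqref{eq:cum} is precisely $m_p(\pi_{\nu_{k,c}})$, so compactness of the support yields moment determination and weak convergence to $\mu_{k,c}$.

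For the strong convergence of the extreme eigenvalues, I plan to couple moment convergence with Gaussian concentration. Since $R$ is a quadratic polynomial in the Gaussian entries of $X$ and $\|X\|=O(\sqrt s)$ with overwhelming probability by Proposition \ref{prop:Bai-Yin}, the map $X\mapsto\|R/n\|$ is Lipschitz with constant $O(\sqrt{k/n})$ on a high-probability event. Gaussian concentration therefore localizes $\|R/n\|$ within $o(1)$ of its mean at a rate $\exp(-\Omega(n))$, and, combined with the moment method applied to even powers $\mathbb E \operatorname{Tr}(R/n)^{2p}$ (and symmetrically to $-R$, to handle the smallest eigenvalue), Borel--Cantelli upgrades this to almost sure convergence of the extreme eigenvalues to the edges of $\operatorname{supp}(\mu_{k,c})$. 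The main obstacle is the combinatorial identity above: because $P_f$ is defined via the global full cycle $\gamma$, its interaction with a given geodesic $\alpha$ is not manifestly local, and the block-by-block factorization will likely require either induction on $p$ exploiting that every non-crossing partition contains an interval block, or a systematic use of Kreweras complementation $\tau\mapsto \tau^{-1}\gamma$ together with formulas counting cycles of products of geodesic permutations.
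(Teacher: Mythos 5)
Your moment computation follows the same route as the paper: isolate the exponent of $n$, observe that $\#\alpha+\#(\gamma^{-1}\alpha)\leq p+1$ with equality iff $\alpha$ is geodesic, and reduce the limit to the non-crossing sum. The combinatorial identity you isolate is precisely what the paper proves as Lemma~\ref{lem:Pfm-alpha} (in the appendix), rewritten so that the sum over $f$ factors over the blocks of $\alpha$; the per-block sum then collapses to $c\bigl[(1-k)^{|b|}+k^2-1\bigr]$. Your guess for how to prove that identity --- induction on the number of blocks using that a non-crossing partition always contains an interval block --- is exactly the paper's strategy (it decomposes $\alpha=\hat 1_r\oplus\beta$ and tracks $\#(P_{g\oplus h}^{-1}\alpha)$ via a single transposition using equation~\eqref{eq:tr}). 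So for the weak convergence to $\mu_{k,c}$ your proposal is essentially the paper's argument with the key lemma deferred.

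For the strong convergence of the extreme eigenvalues, however, your argument has a real gap and departs from the paper, which invokes Male's strong-asymptotic-freeness theorem (Corollary 2.2 of \cite{mal}) after writing $R$ as a noncommutative polynomial $\sum_{i\neq j}E_{ii}WE_{jj}-\sum_{i\neq j}E_{ii}WE_{ij}$ in $W/n$ and the block-matrix units $E_{ij}$. Your two ingredients --- Gaussian concentration of $\|R/n\|$ around its mean, and the moment method applied to $\mathbb E\operatorname{Tr}(R/n)^{2p}$ --- do not by themselves close the argument. Concentration tells you $\|R/n\|$ is close to $\mathbb E\|R/n\|$, but says nothing about what that mean is; and the moment convergence of Theorem~\ref{thm:moments} is an asymptotic $(1+o(1))$ statement \emph{for each fixed $p$}, where the $o(1)$ and the number of summands both degrade as $p$ grows. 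To conclude $\limsup\lambda_{\max}(R/n)\leq\sup\operatorname{supp}\mu_{k,c}$ via Markov you need $\mathbb E\operatorname{Tr}(R/n)^{2p}\leq C(nk)(\sup\operatorname{supp}\mu_{k,c}+\varepsilon)^{2p}$ with $p$ growing like $\log n$ (or faster), and nothing in your proposal provides the required uniform-in-$p$ control of the error terms. Weak convergence only gives the easy inequality $\liminf\lambda_{\max}(R/n)\geq\sup\operatorname{supp}\mu_{k,c}$; a vanishing fraction of outlier eigenvalues is perfectly consistent with weak convergence yet would spoil the norm limit. This is precisely the difficulty that Male's theorem is designed to bypass, and without it you would need to redo a Haagerup--Thorbj\o{}rnsen-type linearization or a uniform trace-moment estimate from scratch.
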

\begin{proof}
Let us start by computing the asymptotic moment formula \eqref{eq:asympt-moments-unbalanced-B}. To do this, plug equations \eqref{eq:regime-unbalanced-B-first}--\eqref{eq:regime-unbalanced-B-last} into the general moment formula \eqref{eq:moments-R} to get
\begin{equation*}
\forall p \geq 1, \quad \mathbb E \mathrm{Tr} (R^p) = (1+o(1))\sum_{\alpha \in \mathcal S_p, \, f \in \mathcal F_p} (-1)^{|f^{-1}(2)|} c^{\#\alpha}n^{\#\alpha + \#(\gamma^{-1}\alpha)}k^{{\bf 1}_{f \equiv 1} +\#\alpha+ \#(P_f^{-1}\alpha)}.
\end{equation*}
The exponent of $n$ in the equation above is
\begin{equation*}
\#\alpha + \#(\gamma^{-1}\alpha) = 2p-(|\alpha| + |\gamma^{-1}\alpha|) \leq 2p - |\gamma| = p+1,
\end{equation*}
with equality if and only if $\alpha$ lies on the geodesic between the identity permutation $\mathrm{id}$ and the full cycle $\gamma$: $\mathrm{id} \to \alpha \to \gamma$. Hence, dropping the vanishing terms, we get
\begin{equation*}
\mathbb E \mathrm{Tr} (R^p) = (1+o(1))n^{p+1}\sum_{\alpha \in NC(p), \, f \in \mathcal F_p} (-1)^{|f^{-1}(2)|} c^{\#\alpha}k^{{\bf 1}_{f \equiv 1} +\#\alpha+ \#(P_f^{-1}\alpha)}.
\end{equation*}

Using the formula for $\#(P_f^{-1}\alpha)$ proved in Lemma \ref{lem:Pfm-alpha}, we have
\begin{align}
\mathbb E \mathrm{Tr} (R^p) &= (1+o(1))n^{p+1}k\sum_{\alpha \in NC(p), \, f \in \mathcal F_p} (-1)^{|f^{-1}(2)|} c^{\#\alpha}k^{p-|f^{-1}(1)|+2\sum_{b \in \alpha} \mathbf{1}_{f_b  \equiv 1}} \notag\\
&= (1+o(1))n^{p+1}k\sum_{\alpha \in NC(p)} \sum_{f \in \mathcal F_p} (-1)^{\sum_{b \in \alpha}|f_b^{-1}(2)|} c^{\sum_{b \in \alpha} 1}
k^{\sum_{b \in \alpha}\left(|b|-|f_b^{-1}(1)|+2 \mathbf{1}_{f_b  \equiv 1}\right)} \notag\\
\label{eq:final-E-Tr-Rp} &=
(1+o(1))n^{p+1}k\sum_{\alpha \in NC(p)} \prod_{b
\in \alpha} \sum_{f_b \in \mathcal F_{|b|}}(-1)^{|f_b^{-1}(2)|}
ck^{|b|-|f_b^{-1}(1)|+2 \mathbf{1}_{f_b  \equiv 1}},
\end{align}
where we denote by $f_b$ the restriction of a function $f \in \mathcal F_p$ to a block $b$ of $\alpha$. We compute now
\begin{align*}
S_b &= \sum_{f_b \in \mathcal F_{|b|}}(-1)^{|f_b^{-1}(2)|} ck^{|b|-|f_b^{-1}(1)|+2 \mathbf{1}_{f_b  \equiv 1}}\\
&= ck^{|b|}\left[ k^{2-|b|} - k^{-|b|} + \sum_{f_b \in \mathcal F_{|b|}}(-1)^{|f_b^{-1}(2)|} k^{-|f_b^{-1}(1)|}  \right] \\
&= ck^{|b|}\left[ k^{2-|b|} - k^{-|b|} + \left( \frac{1}{k}-1\right)^{|b|} \right] \\
&= c\left[ (1-k)^{|b|} + k^2-1 \right].
\end{align*}
Plugging the last expression into \eqref{eq:final-E-Tr-Rp}, we obtain
\begin{equation*}
\mathbb E \mathrm{Tr} (R^p) = (1+o(1))n^{p+1}k
\sum_{\alpha \in NC(p)} \prod_{b \in \alpha} c\left[ (1-k)^{|b|} +
k^2-1 \right],
\end{equation*}
which is equivalent to Eq. \eqref{eq:asympt-moments-unbalanced-B} announced in the statement.
We conclude that the empirical eigenvalue distribution of the
random matrix $R/n$ converges, in moments, to a measure
$\mu_{k,c}$, having moments
\begin{equation*}
\int x^p d\mu_{k,c}(x) = \sum_{\alpha \in NC(p)} \prod_{b \in \alpha} c\left[ (1-k)^{|b|} + k^2-1 \right].
\end{equation*}
One identifies the above expression with the moment-cumulant formula
\eqref{eq:cum}, hence the free cumulants of the probability measure $\mu_{k,c}$ are
\begin{equation*}
\kappa_p(\mu_{k,c}) = c\left[ (1-k)^{p} + k^2-1 \right].
\end{equation*}
In the right hand side of the above expression, one recognizes the moments of the measure $\nu_{k,c} = c \delta_{1-k} + c(k^2-1)\delta_1$, and thus $\mu_{k,c}$ is a compound free Poisson distribution,  $\mu_{k,c} = \pi_{\nu_{k,c}}$.

\medskip

We show now the \emph{strong convergence} result. The idea here is to use the general theory developed by Male \cite{mal}, which builds up on the seminal strong convergence result of Haagerup and Thorbj{\o}rnsen for polynomials in independent GUE matrices \cite{hth}.

For a fixed basis $\{e_i\}_{i=1}^k$ of $\mathbb C^k$, denote by $E_{ij}$ the block-matrix units
$$E_{ij} = \mathrm{I}_n \otimes e_ie_j^* \in M_n(\mathbb C) \otimes M_k(\mathbb C).$$

For the sake of simplicity, we denote $W_{AB}$ by $W$.
The random matrix $W/n$ and the constant matrices $E_{ij}$ satisfy the hypotheses of Corollary 2.2 from the paper of Male \cite{mal}. Moreover, one can write the reduced matrix $R$ as a polynomial in $(W, \{E_{ij}\})$, as follows:
$$R = \sum_{i \neq j} W_{jj}^{ii} - \sum_{i \neq j} W_{ij}^{ij},$$
where $W_{ij}^{xy}$ is the block matrix having the $(i,j)$-block $W_{ij}\in M_{n}(\mathbb{C})$ of $W$ in position $(x,y)$ and zeros everywhere else (the indices here run from $1$ to $k$),
$$W_{ij}^{xy} = E_{xi} W E_{jy}=W_{i j}\otimes e_{x} e_{y}^{*}.$$
Indeed, we have
\begin{align*}
R&=\sum_{i} W_{i i} \otimes I_{k}-\sum_{i,j} W_{i j}\otimes e_{i} e_{j}^{*}
= \sum_{i} W_{i i} \otimes \left( \sum_{j} e_{j} e_{j}^{*}\right)-\sum_{i,j} W_{i j}\otimes e_{i} e_{j}^{*}\\
&= \sum_{i\neq j} W_{i i}\otimes e_{j} e_{j}^{*}-\sum_{i \neq j} W_{i j}\otimes e_{i} e_{j}^{*}
= \sum_{i \neq j} W_{jj}^{ii} - \sum_{i \neq j} W_{ij}^{ij}.
\end{align*}
Hence, we obtained the desired strong convergence: the extremal eigenvalues of $R/n$ converge, as $n \to \infty$, to the extreme points of the support of the measure $\mu_{k,c}$.
\end{proof}

\begin{remark}
Similar convergence results have been obtained by Banica and Nechita \cite{bne12, bne13} in the case of the partial transposition map (with respect to the second system) and, respectively, in the more general case of block-modified Wishart matrices. In the setting of the latter work, the (non-unital) reduction map $\varphi$ defined in \eqref{eq:def-phi} acting on the blocks of a Wishart matrix $W$ is not of the type of those obtained by Banica and Nechita  \cite{bne13} in Theorem 4.3, so the combinatorial derivation above was necessary.
\end{remark}

\section{Support positivity and the reduction threshold}
\label{sec:positivity}

In this  section we shall study properties of the limiting measure $\mu_{k,c}$ appearing in the previous section. The main result here is a criterion for the positivity of the support of $\mu_{k,c}$. Note that in the definition of $\mu_{k,c}$ as a compound free  Poisson measure, one can consider $k$ to be an arbitrary real number and we shall assume $k \geq 2$.
\begin{theorem}\label{thm:positivity-mu-kc}
Let $k,c \in \mathbb R$ satisfying $k \geq 2$ and $c > 0$. The probability measure $\mu_{k,c}$ has the following properties:
\begin{enumerate}
\item It has at most one atom, at $0$, of mass $\max(0, 1-ck^2)$.

\item Its support is contained in $(0,\infty)$ if and only if \begin{equation}\label{eq:threshold-red}
c>c_{red} := \frac{(1+\sqrt{k+1})^2}{k(k-1)}.
\end{equation}
\end{enumerate}
\end{theorem}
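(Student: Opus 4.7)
The plan is to identify $\mu_{k,c}$ as a free convolution of two free Poisson distributions, dispatch the atom via Proposition~\ref{prop:atoms-free-sum}, and then locate the left edge of $\mu_{k,c}$ by inverting the $R$-transform on the real line. Since the cumulants split as $\kappa_p(\mu_{k,c}) = c(1-k)^p + c(k^2-1)\cdot 1^p$ and $\boxplus$ adds free cumulants, we have $\mu_{k,c} = \nu_{c,1-k} \boxplus \nu_{c(k^2-1),1}$. Each free Poisson $\nu_{\lambda,\alpha}$ with $\lambda>0$ and $\alpha\ne 0$ has at most one atom, located at $0$ with mass $\max(0,1-\lambda)$, and is never a Dirac mass; hence, by Proposition~\ref{prop:atoms-free-sum}, any atom of $\mu_{k,c}$ sits at $0+0=0$, and a short case analysis on $c$ collapses its mass to $\max(0, 1-ck^2)$, giving~(1).

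For (2), compute $R(z) = c(1-k)/[1-(1-k)z] + c(k^2-1)/(1-z)$ and set $K(z) := R(z)+1/z$. Since $K$ is the functional inverse of $G_{\mu_{k,c}}$ on the component of $\mathbb{R}\setminus\mathrm{supp}(\mu_{k,c})$ immediately to the left of the support, and since $K\to-\infty$ at both endpoints of the interval $I:=(-1/(k-1),0)$, the left edge $a$ of $\mu_{k,c}$ equals the maximum of $K$ on $I$. After the change of variables $u:=-(k-1)z\in(0,1)$, one has $K=-(k-1)h$ with
$$h(u) = \frac{1}{u} + c\,g(u), \qquad g(u) = \frac{1}{1-u} - \frac{k^2-1}{k-1+u},$$
and a direct differentiation shows that $u\mapsto u^2g'(u)$ is strictly increasing on $(0,1)$, so the critical equation $c=1/[u^2 g'(u)]$ admits a unique root $u_0=u_0(c)\in(0,1)$, whence $a=-(k-1)h(u_0)$.

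The threshold condition $a=0$ is the simultaneous system $h(u_0)=h'(u_0)=0$. Eliminating $c$ produces a quadratic in $u_0$ whose admissible root in $(0,(k-1)/k)$ is $u_0=(k-1)(k-\sqrt{k+1})/(k^2-k-1)$, and back-substitution yields exactly $c_{\mathrm{red}}=(1+\sqrt{k+1})^2/[k(k-1)]$. To upgrade to the iff statement, the envelope theorem gives $H(c):=h(u_0(c);c)$ with $H'(c)=g(u_0(c))$; since $u_0(c)$ is strictly decreasing in $c$ while $g$ is strictly increasing on $(0,1)$, we have $H''(c)=g'(u_0(c))\,u_0'(c)<0$, so $H$ is strictly concave. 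Combined with $H(c_{\mathrm{red}})=0$ and $H(c)\to 1$ as $c\to 0^+$ (via the asymptotics $u_0(c)\to 1$ and $c\,g(u_0(c))\to 0$), strict concavity forces $H>0$ on $(0,c_{\mathrm{red}})$ and $H<0$ on $(c_{\mathrm{red}},\infty)$, equivalently $a(c)>0$ iff $c>c_{\mathrm{red}}$. Since $c_{\mathrm{red}}>1/k^2$, at $c>c_{\mathrm{red}}$ the atom at $0$ vanishes and the continuous part has left edge strictly positive, whence $\mathrm{supp}(\mu_{k,c})\subset(0,\infty)$; at $c\leq c_{\mathrm{red}}$ either the atom or the negative portion of the continuous part obstructs inclusion in $(0,\infty)$.

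The main obstacle I anticipate is the rigorous identification of $a$ with the maximum of $K$ on $I$: this needs careful branch analysis of the Cauchy transform on the unbounded component of $\mathbb{R}\setminus\mathrm{supp}(\mu_{k,c})$ to the left of the support, verifying a square-root edge for $G$ there. Everything else reduces to polynomial algebra and envelope-theorem monotonicity.
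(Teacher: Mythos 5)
Your approach is correct and genuinely different from the paper's. For Part~(1), both you and the authors decompose $\mu_{k,c}$ as a free convolution of two free Poisson distributions and invoke Proposition~\ref{prop:atoms-free-sum}, so that part is essentially the same. For Part~(2), the paper reads off the degree-$3$ algebraic equation for $G$, forms its discriminant $\Delta(z)$ (a quartic in $z$ with coefficients that are polynomials in $c$ and $k$), and then determines the signs of $\Delta_2$, $\Delta(0)$, and the individual coefficients $\beta,\gamma,\delta$ through a rather heavy case analysis aided by \textsc{Mathematica} and Descartes' rule of signs. You instead locate the left edge of the support as the critical value of $K$ on the interval $I=(-1/(k-1),0)$ between two poles of $K$, reduce to the monotone function $u\mapsto u^2 g'(u)$ to get uniqueness of the critical point $u_0(c)$, solve the degenerate system $h(u_0)=h'(u_0)=0$ to extract $c_{\mathrm{red}}$, and then use the envelope theorem plus strict concavity of $H(c)=h(u_0(c);c)$ to upgrade the threshold to the full iff. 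This is considerably more conceptual, avoids the large discriminant computations (and the six-region phase diagram of the paper), and gives a one-line reason for the sign flip of the left edge across $c_{\mathrm{red}}$. I checked the explicit algebra: with $s=\sqrt{k+1}$, one finds $u_0 = (k-1)/(k+\sqrt{k+1})$, $1-u_0 = (1+\sqrt{k+1})/(k+\sqrt{k+1})$, $g(u_0) = -k\sqrt{k+1}(k+\sqrt{k+1})/[(1+\sqrt{k+1})(k+1+\sqrt{k+1})]$, and $c_{\mathrm{red}} = -1/[u_0 g(u_0)] = (1+\sqrt{k+1})^2/[k(k-1)]$, confirming the announced threshold.

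The one place that deserves a more careful write-up is the step you already flagged: that the leftmost edge of $\operatorname{supp}(\mu_{k,c})$ is $K(w_0)$ for the unique critical point $w_0\in I$. This is the standard picture (a square-root vanishing of the density at the edge corresponds to a simple critical point of $K$, and the functional inverse of $G$ on the unbounded left component of $\mathbb{R}\setminus\operatorname{supp}(\mu_{k,c})$ lands in $I$ because it cannot cross the pole of $K$ at $-1/(k-1)$), but it should be stated with a reference or a short argument for the analyticity/bijectivity of $G$ on that component. With that filled in, your proof is a clean and self-contained alternative to the paper's discriminant/phase-diagram analysis.
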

\begin{proof}

(1) To study atoms of $\mu_{k,c}$, write $\mu_{k,c} = \pi'
\boxplus \pi''$, where $\pi'$ is a free Poisson distribution with rate $c$ and jump size  $1-k$
and
$\pi''$ is a Mar\v{c}enko-Pastur distribution of parameter $c(k^2-1)$.
The probability distributions $\pi'$ and $\pi''$ have at most one atom,
at $0$, of respective masses $1-c$, $1-c(k^2-1)$. From Proposition
\ref{prop:atoms-free-sum} it follows that $\mu_{k,c}$ can have at
most one atom, at $0$, of mass $\max(0,1-c+1-c(k^2-1)-1)$ and the
conclusion follows.

(2) Let us now study the properties of the support of the
absolutely continuous part of $\mu_{k,c}$. The main tool here will
be the Cauchy transform $G$ of $\mu_{k,c}$, for which we will
derive an implicit equation. Start from the $R$-transform of
$\mu_{k,c}$ and derive the $K$-transform \cite{vdn}
\begin{align*}
\mathcal{R}(z) &= \sum_{p=0}^\infty \kappa_{p+1}(\mu_{k,c})z^p = \frac{c(k^2-1)}{1-z} - \frac{c(k-1)}{1+(k-1)z}\\
K(z) &= \mathcal{R}(z) + \frac{1}{z} = \frac{1}{z} + \frac{c(k^2-1)}{1-z} - \frac{c(k-1)}{1+(k-1)z}.
\end{align*}
As the inverse of $K$, $K(G(z))=G(K(z))=z$, the Cauchy transform $G$ of $\mu_{k,c}$ satisfies a degree $3$ polynomial equation
\begin{equation*}
\frac{1}{G(z)} + \frac{c(k^2-1)}{1-G(z)} - \frac{c(k-1)}{1+(k-1)G(z)} = z.
\end{equation*}

We follow now closely the method used in the paper of Banica and Nechita \cite{bne12}. The support
of the absolutely continuous part of $\mu_{k,c}$ is a union of
disjoint intervals; the endpoints of these intervals are the
points on the real line where the analyticity of the Cauchy
transform $G$ breaks. These points are also roots of the
discriminant $\Delta$ of the polynomial equation satisfied by $G$.

With the help of the \textsc{Mathematica} computer algebra system \cite{mat}, we find
\begin{equation*}\label{eq:delta-G}
\Delta =k( \alpha z^4 + \beta z^3 + \gamma z^2 + \delta z + \varepsilon),
\end{equation*}
where
\begin{align*}
\alpha &= k \\
\beta &= 2[k(k-2) - 2c(k-1)^2(k+1)]\\
\gamma &=    2 c^2 k( k-1)^2  ( 3 k^2-4) -
   c (6 k^4 - 8 k^3 - 4 k^2 + 18k -12) + k (k^2-6k+6) \\
\delta &= -2 ( k -1)  [2 c^3 k^2 (k+1)(k-1)^3  -c^2 k (3k^4+k^3-8k^2-6k+10)\\
  \nonumber &\qquad      +c (k^4-k^3-k^2+6k-6)+k(k-2)]\\
\varepsilon &=( k- 1)^2  (c k^2-1)^2 [c^2 k  (k-1)^2  -
   2 c (k^2+k-2)+k].
\end{align*}
We would first like to understand the number of real solutions of the  degree 4 equation $\Delta(z)=0$, i.e. twice the number of intervals of the support of $\mu_{k,c}$. The nature of the roots of a quartic is given by the sign of its discriminant: the discriminant is negative iff the equation has exactly 2 real and 2 complex solutions. For the equation above, the discriminant reads
\begin{equation*}
\Delta_2 = -256 c^2  k^2 (k+1)(k-1)^3 \cdot f^3,
\end{equation*}
where
\begin{equation*}
f = 8 k(k-1)^3 c^3  +  3  (k-1)^2 (5 k^2-9) c^2+ 6  k^3(k-1) c- k^4,
\end{equation*}
hence the sign of the discriminant $\Delta_2$ of the equation $\Delta(z)=0$ is the opposite of the sign of $f$. Let us consider, once more, the discriminant of the cubic equation $f(c)=0$:
\begin{equation*}
\Delta_f = -78732  k^4  (k+1)^2 (k-1)^8,
\end{equation*}
which is negative and thus the equation $f(c)=0$ admits a unique real solution $c=c_0(k)$. Moreover, since $f(0) = -k^4 <0$ and $f(c)\to \infty$ as $t\to\infty$, this solution must be positive, $c_0(k)>0$, for all $k>1$. When $k\to 1^+$ or $k \to \infty$, the equation degenerates, and
\begin{equation*}
\lim_{k \to 1^+} c_0(k) = \infty, \qquad \lim_{k \to \infty} c_0(k) = \frac 1 8.
\end{equation*}
One can compute explicitly the solution of the cubic
\begin{equation*}
c_0(k) = \frac{3 k^3-k^2(5 v-3 u+9)+ 3k(2u-1)+  9(v-u+1)}{8 k (k-1)v},
\end{equation*}
where
\begin{align*}
u &= \sqrt[3]{(k-1) (k+1)^2}\\
v &= \sqrt[3]{(k-1)^2 (k+1)}.
\end{align*}

To sum up, we have shown that the absolutely continuous part of the support of $\mu_{k,c}$ contains two intervals if $c<c_0(k)$ and one interval if $c > c_0(k)$. It remains to determine the position of these intervals with respect to the origin. The idea here is to look at the value of $\Delta$ at $z=0$:
\begin{equation*}
\Delta(0) =  k \cdot\varepsilon =( k- 1)^2  k (c k^2-1)^2 [c^2 k  (k-1)^2  -
   2 c (k^2+k-2)+k].
\end{equation*}
The degree 4 polynomial (in $c$) $\Delta(0)$ has the following roots: $c=1/k^2$ (double root, corresponding to the atom at $0$),
\begin{align*}
c_1 = \frac{(\sqrt{k+1}-1)^2}{k(k-1)}, \\
c_2 = \frac{(\sqrt{k+1}+1)^2}{k(k-1)} .
\end{align*}
We conclude that, for $c \in [c_1(k),c_2(k)]$, $\Delta(0) \leq 0$, and thus $0$ belongs to the support of $\mu_{k,c}$. Direct computation shows that $1/k^2 \leq c_1 < c_2$ for all $k$. Also, one can show that $c_0 > c_1$, again for all $k>1$. Finally, the curves of $c_0$ and $c_2$ intersect at $k_0 \approx 13.637$. The plots of the functions $c_0,c_1$ and $c_2$ are presented in Figure \ref{fig:c-k}.
\begin{figure}
\centering
\subfigure[]{\includegraphics[width=0.45\textwidth]{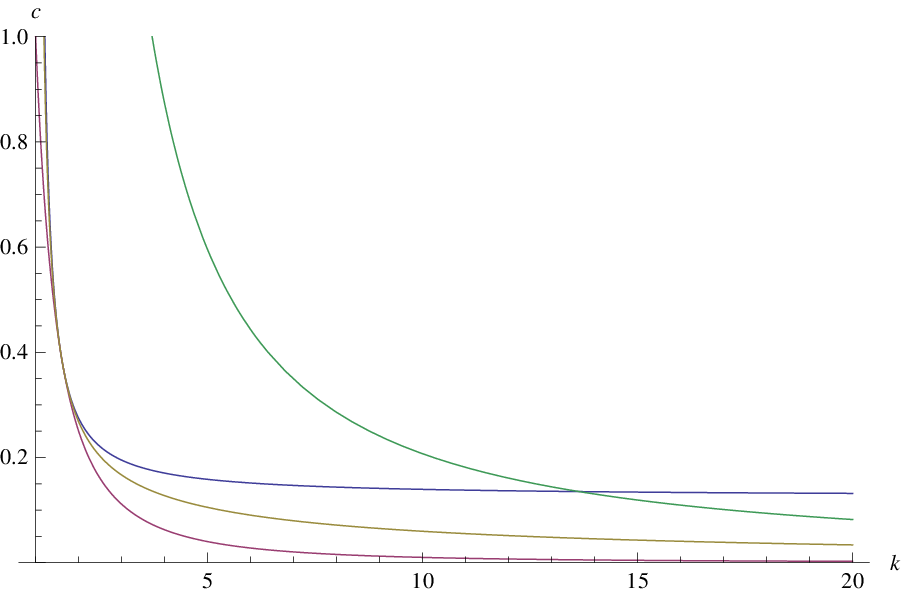}}\quad
\subfigure[]{\includegraphics[width=0.45\textwidth]{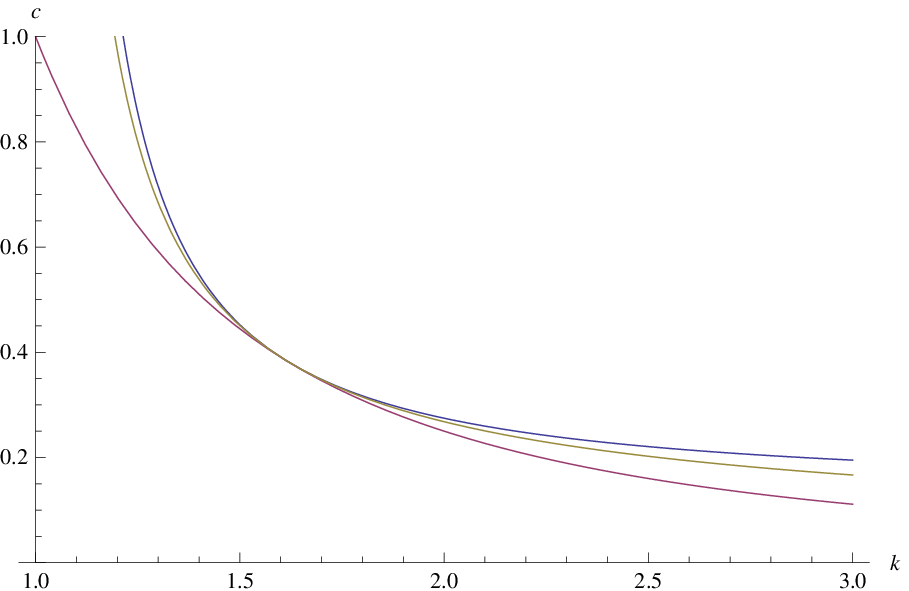}}
\caption{Graphs of the functions $1/k^2, c_0, c_1$ and $c_2$. On the left, from bottom to top at $k=20$, we have plotted $1/k^2$ (red), $c_1$ (yellow), $c_2$ (green) and $c_0$ (blue). On the right, a detail of the graph around the touching point $k=(\sqrt 5 +1)/2$.}
\label{fig:c-k}
\end{figure}

A schematic representation of the regions delimited by the four curves $c_{0,1,2}$ and $k \mapsto 1/k^2$ is presented in Figure \ref{fig:phase-diagram}.

\begin{figure}
\centering
\includegraphics{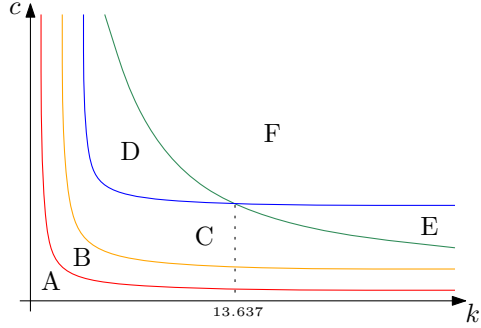}
\caption{Schematic phase diagram for the spectrum of $\mu_{k,c}$. From bottom to top at the right hand side of the graph, we have plotted $1/k^2$ (red), $c_1$ (yellow), $c_2$ (green) and $c_0$ (blue).}
\label{fig:phase-diagram}
\end{figure}

Note that we do not represent the touching point $k=(\sqrt 5 +1)/2$ (which is the golden ratio) appearing in Figure \ref{fig:c-k}. Six different regions are delimited by the curves and we summarize the properties of the support of the measure $\mu_{k,c}$ for $(k,c)$ lying in different regions in Table \ref{tbl:regions}. The sign of $\Delta_2$ is given by the position of the region with respect to the curve $c_0$, which translates in turn to the support having one or two intervals. The sign of $\Delta(0)$ is negative iff $0$ belongs to the support of $\mu_{k,c}$ and this happens for regions situated between $c_1$ and $c_2$. Finally, only measures having parameters situated below the graph of $k \mapsto 1/k^2$ have an atom at $0$.

From the above considerations, it follows that the regions $A$, $C$ and $D$ correspond to parameters $(k,c)$ for which the support of $\mu_{k,c}$ is not strictly positive. Moreover, since for region $F$, the support has a single interval which does not contained $0$, this interval must be situated on the positive half line (otherwise, the measure would be supported on the negative half line, which is impossible, since the average of $\mu_{k,c}$ is positive); thus, region $F$ corresponds to measures having positive support.

Measures with parameters in regions $B$ and $E$ have support made of two intervals, none of which contains $0$. By the same argument as before, these intervals cannot be both contained in the negative half line, so either they are both positive or $0$ separates them.
In the following we study the positivity of the roots of $\Delta=\Delta(z)$ in these regions. We note that its leading coefficient $\alpha$ is positive and since $c<c_1(k)$ in region $B$ and $c>c_2(k)$ in region $E$, we have that $\varepsilon>0$.
Moreover, in these regions $\Delta$ has four real roots ($c<c_0(k)$), all different from zero ($\varepsilon>0$). We
prove that  the roots are positive if and only if $\beta<0$, $\gamma>0$ and $\delta<0$.
Indeed, the necessity follows by using Vi\`{e}te's formulas.  Conversely,
we have that
$$\Delta(-z)=k( \alpha z^4 - \beta z^3 + \gamma z^2 - \delta z + \varepsilon)$$
has no sign differences between consecutive coefficients and thus, by Descartes' rule of signs, it has no negative root.  This implies that all the four real roots must be positive. Therefore, the study of the positivity of the roots of $\Delta$ is reduced to the study of the sign of the coefficients $\beta$, $\gamma$ and $\delta$.

A simple computation shows that $\beta<0$ if and only if $c>p_0(k)$, where
$$p_0(k)=\frac{k(k-2)}{2(k-1)^2(k+1)}$$ is the solution of the equation $\beta=0$.

The equation (in $c$) $\gamma=0$ is quadratic, with positive dominant term, and
$$\gamma(p_0) = -\frac{k \left(k^6+10 k^5-24 k^4-2 k^3+16 k^2-12 k+12\right)}{2 (k-1)^2 (k+1)^2}.$$
The degree $6$ polynomial in the numerator has $4$ real roots, all smaller than $2$ (as it can be checked by a computer \cite{mat}), hence $\gamma(p_0)<0$ for all $k \geq 2$. It follows that
\begin{equation*}\label{eq:gam}
p_1(k)<p_0(k)<p_2(k), \text{ for  all } k\geq 2,
\end{equation*}
where $p_{1,2}$ are the solutions of the equation $\gamma=0$:
\begin{equation*}
p_{1,2}(k)=\frac{3 k^3-k^2-3 k+6\pm\sqrt{3 k^6+30 k^5-45 k^4-6 k^3+45 k^2-36 k+36}}{2 k (3k^2-4)(k-1)}.
\end{equation*}
Finally, the discriminant of the cubic equation (in $c$) $\delta=0$ is
$$\Delta_{\delta}=16 (k-1)^6 k^2 (k^2-k-1)^2 P_{10}(k),$$
where $P_{10}$ is a polynomial of degree 10 with positive leading coefficient, which has six real roots, all smaller than 2 and thus $\Delta_{\delta}>0$ for $k\geq 2$. This fact implies that the equation $\delta=0$ has three real solutions, denoted by $f_1$, $f_2$ and $f_3$. We assume that $f_1<f_2<f_3$. Since $f_1+f_2+f_3>0$ and $f_1 f_2 f_3 \leq 0$ for all $k\geq 2$, it follows that $f_1\leq 0$ and $f_1,f_2>0$. We conclude that $\delta<0$ if and only if  $c\in(0,f_2(k))\cup(f_3(k),\infty)$.

In the following we prove that $f_2<p_2<f_3$. Since $p_2$ is positive, this is equivalent to $\delta(p_2)>0$. However, the expression for $\delta(p_2)$ is too complicated, so we shall lower and upper bound $p_2$ by simpler quantities
\begin{equation}\label{eq:dd0}
0<p_2^{s}(k)<p_2(k)<p_2^l(k), \text{ for all } k\geq 2,
\end{equation}
where
$$p_2^s(k):=\frac{3k^3-k^2-3k+6}{2k(3k^2-4)(k-1)} \text{ and } p_2^l:=\frac{6k^3-k^2-3k+6}{2k(3k^2-4)(k-1)}.$$
The values of $\delta$ at the points $p_2^s$ and $p_2^l$ are given by
$$\delta(p_2^m)=\frac{1}{2k(3k^2-4)^3}P_{10}^m(k),\;m=s,l,$$
where $P_{10}^s$ and $P_{10}^l$ are polynomials of degree 10 with positive leading coefficient, which have four real roots, all smaller than 2.
Therefore,
\begin{equation}\label{eq:dd2}
\delta(p_2^m)>0, \text{ for } k\geq 2, \,m=s,l.
\end{equation}
By \eqref{eq:dd0}, \eqref{eq:dd2} and taking into account that $\delta(c)\to -\infty$ as $c\to\infty$, for $k\geq 2$, we have that $p_0<f_2<p_2^s<p_2<p_2^l<f_3$. Now, we can conclude that $\Delta$ has four  positive roots if and only if $c>f_3$.

In a similar manner, we can prove that
$$c_1<\frac{k+1}{k(k-1)}<f_3<\frac{k+4}{k(k-1)}<c_2, \text{ for } k\geq 2.$$

It proves that the region $B$ contains parameters for which the support of $\mu_{k,c}$ consists of two intervals, one negative and the other one positive, and in the region $E$ the support of $\mu_{k,c}$ consists of two positive intervals.

\begin{table}[htdp]
\caption{Properties of the support of $\mu_{k,c}$ in terms of the parameters.}
\begin{center}
\begin{tabular}{|c|c|c|c|c|c|}
\hline
Region & $\Delta_2$ & \# of intervals & $0 \in \text{support}$ & Atom at $0$ & $\text{Support} \subset (0,\infty)$\\
\hline\hline
A & + & 2 & No  & Yes & No  \\  \hline
B & + & 2 & No  & No  & No  \\  \hline
C & + & 2 & Yes & No  & No  \\  \hline
D & - & 1 & Yes & No  & No  \\  \hline
E & + & 2 & No  & No  & Yes \\  \hline
F & - & 1 & No  & No  & Yes \\  \hline
\end{tabular}
\end{center}
\label{tbl:regions}
\end{table}

We conclude that the regions of parameters corresponding to measures $\mu_{k,c}$ supported on the open positive half line are $E$ and $F$, which correspond to values of $c$ satisfying
$$c>c_2=:c_{red}.$$

\end{proof}

\begin{remark}
The eventual atom at $0$ in the above theorem can be understood via Proposition \ref{prop:rank}: if $s = cnk$ and $c <1/k^2$ (which is equivalent to $s<n/k$), then the matrix $R$ will have at least $nk-k^2s$ zero eigenvalues, i.e. a fraction $1-ck^2$ of its total number of eigenvalues. Thus, its empirical eigenvalue distribution will have a Dirac mass at $0$, of mass at least $1-ck^2>0$.
\end{remark}

Going back to our original motivation, we state now, as a consequence of Theorem \ref{thm:positivity-mu-kc}, a result about a threshold for the reduction criterion, in the spirit of Section \ref{sec:Wishart}. Recall that a \emph{threshold} for the value $c$ of the parameter giving the scaling of the environment $s \sim cnk$ is the value at which a sharp phase transition occurs: for values of $c$ larger than the threshold, when $n \to \infty$, quantum states will ``satisfy'' the reduction criterion with probability close to one, whereas for values of $c$ smaller than the threshold, the probability of a quantum state satisfying the criterion will be close to zero. The case of the simultaneous reduction criterion follows along the lines of Proposition \ref{prop:simultaneous-balanced}.

\begin{proposition}\label{prop:red-threshold}
Consider random quantum states distributed along the probability measure of random induces states $\rho_n \sim \chi_{nk,cnk}$ for some constants $c$ and $k$ and some growing parameter $n \to \infty$. Then, the value
$$c_{red}(k)= \frac{(1+\sqrt{k+1})^2}{k(k-1)} $$
 is a \emph{threshold for the reduction criterion}, in the following sense (below, $\mathbb P_n$ denotes the probability distribution of $\rho_n$):
\begin{enumerate}
\item For all $k$ and $c<c_{red}(k)$,
$$\lim_{n \to \infty} \mathbb P_n\left[ \rho_n^{red} \geq 0 \right] =0.$$
\item For all $k$ and $c>c_{red}(k)$,
$$\lim_{n \to \infty} \mathbb P_n\left[ \rho_n^{red} \geq 0 \right] =1.$$
\end{enumerate}

In the case of the \emph{simultaneous} reduction criterion, let $m:=\min(n,k)$ be a fixed parameter and $\max(n,k) \to \infty$. Then, for the same linear scaling of the environment $s \sim cnk$, we have that the value
$$c_{red}(m) =  \frac{(1+\sqrt{m+1})^2}{m(m-1)} $$
is a threshold for the simultaneous reduction criterion, in the same sense as above.
\end{proposition}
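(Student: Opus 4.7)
The plan is to derive Proposition \ref{prop:red-threshold} as a direct consequence of the strong convergence from Theorem \ref{thm:unbalanced-B} combined with the support analysis from Theorem \ref{thm:positivity-mu-kc}. First, I would reduce the question about $\rho_n$ to a question about the (unnormalized) Wishart-reduced matrix $R = W^{red}$: since $\rho_n = W / \operatorname{Tr} W$ with $W \sim $ Wishart$(nk,s)$ and $\operatorname{Tr} W > 0$ almost surely, the condition $\rho_n^{red} \geq 0$ is equivalent to $R \geq 0$, which in turn is equivalent to $\lambda_{\min}(R/n) \geq 0$.

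For the single-reduction threshold, I would invoke the strong convergence statement in Theorem \ref{thm:unbalanced-B}: under the second unbalanced regime with $k$ fixed and $s \sim cnk$, the extreme eigenvalues of $R/n$ converge almost surely to the edges of the support of $\mu_{k,c}$. If $c > c_{red}(k)$, Theorem \ref{thm:positivity-mu-kc} gives $\operatorname{supp}(\mu_{k,c}) \subset (0,\infty)$, so the left edge $\ell(k,c)$ is strictly positive and $\lambda_{\min}(R/n) \to \ell(k,c) > 0$ almost surely; hence $\mathbb{P}_n[\rho_n^{red} \geq 0] \to 1$. If $c < c_{red}(k)$, I need the left edge to be \emph{strictly} negative, not merely nonpositive; this is the delicate part. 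Reading off the regions $A,B,C,D$ in Table \ref{tbl:regions} of the proof of Theorem \ref{thm:positivity-mu-kc}, each corresponds to a support that either has a component in $(-\infty, 0)$ or crosses $0$, so the left edge $\ell(k,c) < 0$. Then strong convergence forces $\lambda_{\min}(R/n) \to \ell(k,c) < 0$ almost surely, so $R$ has a negative eigenvalue with probability tending to one, giving $\mathbb{P}_n[\rho_n^{red} \geq 0] \to 0$.

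For the simultaneous reduction, I would apply the single-reduction result to both $\rho^{red}$ and $\tilde\rho^{red} = [\varphi\otimes \mathrm{id}](\rho)$, noting that the two reductions act on subsystems of dimension $k$ and $n$ respectively. Assume $m = \min(n,k)$ is fixed and $\max(n,k)\to\infty$. In the case $n=m$ fixed and $k\to\infty$, the matrix $\rho^{red}$ falls in the first unbalanced regime (Theorem \ref{thm:unbalanced-A-strong}) and is positive with probability $\to 1$ for every $c>0$, while $\tilde\rho^{red}$, by a relabelling that swaps the roles of $n$ and $k$, falls in the second unbalanced regime with fixed parameter $n=m$, giving the threshold $c_{red}(m)$. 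The case $k=m$ fixed, $n\to\infty$ is symmetric. In both cases the effective threshold is $c_{red}(m)$, and the conclusion follows by intersecting the two high-probability events (as in Proposition \ref{prop:simultaneous-balanced}).

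The main obstacle is the $c < c_{red}(k)$ direction of the single-reduction threshold. The support-positivity characterization in Theorem \ref{thm:positivity-mu-kc} only distinguishes strict inclusion in $(0,\infty)$, whereas the criterion $\rho^{red} \geq 0$ tolerates eigenvalues at $0$ (for instance, those produced by the rank bound of Proposition \ref{prop:rank} when $c<1/k^2$). I therefore need to extract from the case-by-case region analysis (regions $A$--$D$) the strictly stronger statement that the left edge of the continuous support is genuinely negative, not merely that $0$ is touched. Once this strict negativity is in hand, the strong convergence of the extreme eigenvalues closes the argument cleanly.
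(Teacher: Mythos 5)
Your proposal follows exactly the route the paper intends (the paper itself does not spell out a proof; Proposition~\ref{prop:red-threshold} is announced as ``a consequence of Theorem~\ref{thm:positivity-mu-kc}'' and, for the simultaneous case, ``along the lines of Proposition~\ref{prop:simultaneous-balanced}''): reduce $\rho_n^{red}\ge 0$ to $R\ge 0$ via $\rho_n^{red}=R/\operatorname{Tr}W$, invoke the almost-sure convergence of the extremal eigenvalues of $R/n$ to the edges of the support of $\mu_{k,c}$ from Theorem~\ref{thm:unbalanced-B}, and apply the support-positivity characterization of Theorem~\ref{thm:positivity-mu-kc}. Your handling of the simultaneous case, swapping the roles of the two subsystems and intersecting the two high-probability events, is also the intended argument.

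The subtlety you correctly isolate — that for $c<c_{red}(k)$ one needs the \emph{left edge} of the absolutely continuous support to be \emph{strictly} negative, not merely that the support is not contained in $(0,\infty)$ — is genuine, since $\rho^{red}\ge 0$ tolerates zero eigenvalues (and $\mu_{k,c}$ can have an atom at $0$). However, it is fully settled by the region analysis inside the proof of Theorem~\ref{thm:positivity-mu-kc}, not just the statement. In regions~C and~D, $\Delta(0)<0$ strictly (the boundary cases $c=c_1,c_2,1/k^2$ are excluded), so $0$ is an interior point of an interval of the support and the left edge is $<0$. In regions~A and~B, $\varepsilon>0$ and $\alpha>0$ force an even number of negative real roots of $\Delta$; since the paper shows that all four roots are positive iff $c>f_3$ (and in these regions $c<c_1<f_3$), there are at least two negative roots, and all four cannot be negative because the mean $ck(k-1)>0$ excludes an entirely negative continuous support; hence exactly two roots are negative and the left edge is strictly negative. (The paper spells this out for region~B; the same parity-and-mean argument covers region~A.) With this, the almost-sure convergence of $\lambda_{\min}(R/n)$ to a strictly negative number gives $\mathbb{P}_n[\rho_n^{red}\ge 0]\to 0$ for $c<c_{red}(k)$, and your argument is complete.
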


\section{Comparing entanglement criteria via thresholds}
\label{sec:vs-other}

In this final section, we would like to compare the reduction criterion with other entanglement criteria, via thresholds. We gather in Table \ref{tbl:threshold-comp} results about different thresholds for entanglement and entanglement criteria, taken from the existent literature \cite{aub,ane,asy}. In the table, the scaling of the environment dimension is $s \sim cnk$, and the threshold are defined to be the sharp values of $c$ for which there is a phase transition between zero and unit asymptotic probabilities. The thresholds in the first line of the table have been obtained by Aubrun et al. \cite{asy}: the size $s_0$ of the environment for which the phase transition occurs has been bounded as follows:
$$C_1 nk \min(n,k) \leq s_0 \leq C_2 nk \log^2 (nk) \min(n,k).$$

The contribution of this work is the last line of the table, where thresholds have been obtained in the three regimes. Only in the second unbalanced regime, a non-trivial threshold has been obtained (this is to be contrasted with the case of the stronger PPT criterion).

The threshold for the partial transposition criterion, in the unbalanced cases, follows from the paper of Banica and Nechita \cite{bne12}:
\begin{equation*}
c_{PPT} = 2+2\sqrt{1 - \frac{1}{k^2}}.
\end{equation*}
Note that this value is always larger than $c_{red}$ (see \eqref{eq:threshold-red}), with equality if and only if $k=2$. This is in agreement with the fact that the reduction criterion is in general weaker that the PPT criterion (since the reduction map is co-completely positive) and the fact that for quantum states in $\mathbb{C}^n \otimes \mathbb{C}^2$  quantum systems, the two criteria are equivalent ($c_{red}(2) = c_{PPT}(2)$). The fact that, for $k \geq 3$, $c_{red} < c_{PPT
}$ implies the following striking phenomenon: for any intermediate value $c_{red} < c < c_{PPT}$, for fixed $k$ and large $n$, with large $\chi_{nk,cnk}$-probability, a random density matrix will satisfy the reduction criterion, while the PPT criterion will detect its entanglement. It is also worthwhile to remark how the thresholds for the two criteria scale for large values of $k$: $c_{red} \to 0$, while $c_{PPT} \to 4$ as $k \to \infty$. This hints to the situation in the balanced regime, where $c_{red} = 0$, while for the PPT criterion, it has been shown by Aubrun \cite{aub} that $c_{PPT}=4$, see Table \ref{tbl:threshold-comp}.

\begin{table}[htdp]
\begin{center}
\renewcommand{\arraystretch}{1.7}
\begin{tabular}{|c||c|c|c|}
\hline
Criterion \textbackslash \,  Regime & Balanced ($n = k \to \infty$) & First unbalanced ($k \to \infty$) &  Second unbalanced  ($n \to \infty$) \\ \hline \hline
Entanglement & $\infty$, $(\sim n\log^q n)$&  $\sim n k$ & $\sim n k$ \\ \hline
Partial transp. & $4$ & $2 + 2\sqrt{1-\frac{1}{n^2}}$ & $2 + 2\sqrt{1-\frac{1}{k^2}}$\\ \hline
Realignment & $(8/3\pi)^2 \approx 0.72$ & 0  & 0\\ \hline
Reduction & $0$ & $0$ & $\frac{(\sqrt{k+1}+1)^2}{k(k-1)}$ \\ \hline
\end{tabular}
\end{center}
\caption{Comparing thresholds for entanglement  and entanglement criteria in different asymptotical regimes.}
\label{tbl:threshold-comp}
\end{table}

We present in Figure \ref{fig:simulation} some numerical simulations results for the second unbalanced regime. We consider two values for the parameter $c$, one below and the other above the threshold \eqref{eq:threshold-red}. The plots are eigenvalue histograms for $R=W^{red}$ for one realization of a random matrix $W$.

\begin{figure}
\centering
\subfigure[]{\includegraphics[width=0.45\textwidth]{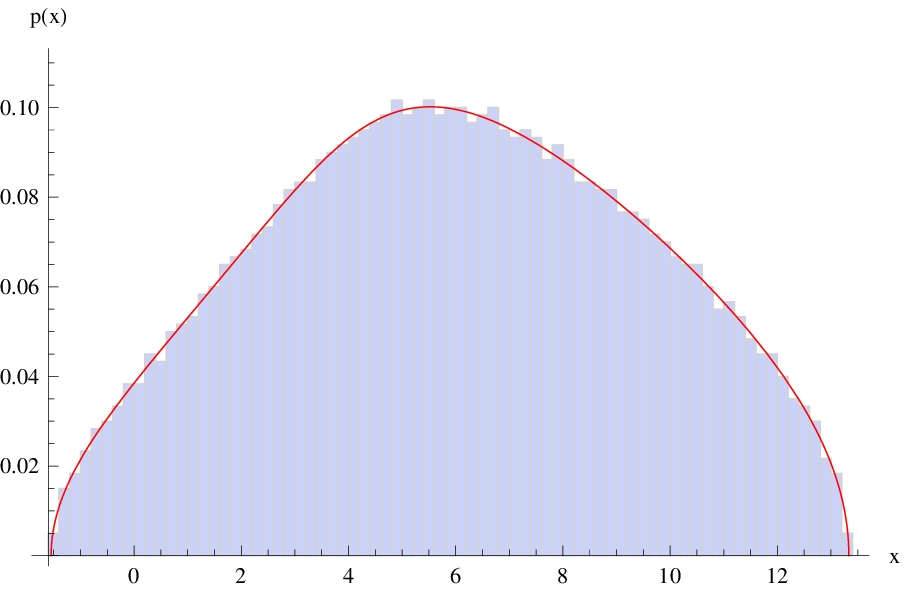}}\qquad
\subfigure[]{\includegraphics[width=0.45\textwidth]{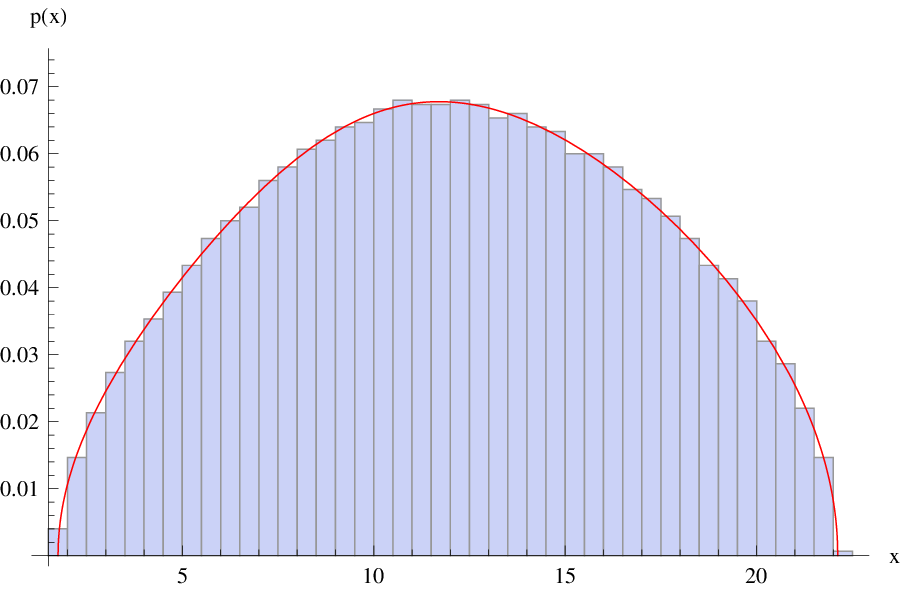}}
\caption{Numerical simulations for the (normalized) eigenvalues of a reduced Wishart matrix. We consider only the second unbalanced regime, with $n=1000$ and $k=3$, and different environment scaling. On the left, $c=1 < c_{red}=3/2 $; on the right, $c=2>c_{red}$. The numerics (blue histogram) are compared with the theoretical curve for the density (red curve) obtained by Stieltjes inversion from the corresponding Cauchy transform.}
\label{fig:simulation}
\end{figure}

\appendix
\section{A combinatorial lemma}
\label{sec:lemma-combinatorial}

In this appendix, we prove the following combinatorial lemma, used in the proof of Theorem \ref{thm:unbalanced-B}.

\begin{lemma}\label{lem:Pfm-alpha}
For any geodesic permutation $\alpha\in \mathcal S_{NC(\gamma)}$ and any
choice function $f \in \mathcal F_p$, we have
\begin{equation*}
\#(P_f^{-1}\alpha) = p-|f^{-1}(1)| + 2\sum_{b \in \alpha}
\mathbf{1}_{f_b  \equiv 1} + 1 -\# \alpha -\mathbf 1_{f \equiv 1},
\end{equation*}
where we denote by $f_b$ the restriction of  $f$ to a cycle $b$ of
$\alpha$.
\end{lemma}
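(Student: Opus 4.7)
The plan is to reorganize the count $\#(P_f^{-1}\alpha)$ according to the cycle structure of $\alpha$ and then reduce the nontrivial part to a single application of the geodesic identity $\#\sigma + \#(\sigma^{-1}\gamma') = n + 1$, valid for any geodesic $\sigma$ on $n$ points with respect to a full cycle $\gamma'$.

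First I would partition the cycles of $\alpha$ into $B^0 := \{b : f_b \equiv 1\}$ and $B^* := \{b : f_b \not\equiv 1\}$. Each block $b \in B^0$ is contained in $f^{-1}(1)$, on which $P_f^{-1}$ acts as the identity, so $P_f^{-1}\alpha$ preserves $b$ and restricts there to the single cycle $\alpha|_b$. Writing $S := \bigcup_{b \in B^*} b$, this yields
\[
\#(P_f^{-1}\alpha) = |B^0| + \#(P_f^{-1}\alpha|_S).
\]
Next I would observe that every $P_f^{-1}\alpha$-orbit on $S$ must visit $J := f^{-1}(2)$, since each block $b \in B^*$ contains a $2$-element and $\alpha|_b$ is a single cycle. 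Hence the number of cycles of $P_f^{-1}\alpha|_S$ equals that of the first-return map $\tilde\phi : J \to J$, which by construction factorizes as $\tilde\phi = \sigma_J \circ \rho_b$, where $\sigma_J := P_f^{-1}|_J$ is the cyclic shift on $J$ in the order inherited from $[p]$ and $\rho_b(x)$ is the first element of $b_x \cap J$ reached by iterating $\alpha$ from $x$.

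The crucial structural step is to identify $\rho_b$ itself as a geodesic permutation on $J$. The restriction $\pi_J := \{b \cap J : b \in B^*\}$ is non-crossing for the cyclic order on $J$ inherited from $[p]$ (restrictions of NC partitions to subsets remain non-crossing), and since $\alpha$ sends each element of a block to its predecessor in that block with respect to $\gamma$, the map $\rho_b$ is precisely the geodesic permutation on $J$ associated with $\pi_J$ relative to the full cycle $\sigma_J^{-1}$. The geodesic identity then gives $\#\rho_b + \#(\rho_b^{-1}\sigma_J^{-1}) = |J| + 1$; using $\#\rho_b = |B^*|$ and $\#(\rho_b^{-1}\sigma_J^{-1}) = \#(\sigma_J \rho_b) = \#\tilde\phi$, I would deduce $\#\tilde\phi = |J| + 1 - |B^*|$.

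Combining everything and substituting $|J| = p - |f^{-1}(1)|$ and $|B^*| = \#\alpha - |B^0|$, the formula drops out in the non-degenerate case; the degenerate case $f \equiv 1$ (where $J = \emptyset$, the first-return step is vacuous, and one loses the $+1$ from the geodesic identity) accounts for the $-\mathbf{1}_{f \equiv 1}$ correction. The hardest step will be the identification of $\rho_b$ as a geodesic on $J$: one must check carefully that restricting a non-crossing partition of $[p]$ to the subset $J$ preserves the non-crossing property, and verify that the \emph{predecessor-in-block} rule defining $\alpha$ translates correctly to the \emph{predecessor-in-$J$} rule defining $\rho_b$ relative to the reverse cyclic order $\sigma_J^{-1}$.
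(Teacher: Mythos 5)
Your argument is correct, and it takes a genuinely different route from the paper's. The paper proceeds by induction on $\#\alpha$: it writes the non-crossing permutation as $\alpha = \hat 1_r \oplus \beta$, splits $f = g \oplus h$ accordingly, and establishes the recursion
$\#(P_{g\oplus h}^{-1}\alpha) = \#(P_g^{-1}\gamma_r) + \#(P_h^{-1}\beta) - (1 - \mathbf 1_{g\equiv 1})(1 - \mathbf 1_{h\equiv 1})$
by a three-way case analysis (using the transposition identity \eqref{eq:tr} in the mixed case), then unrolls the recursion and invokes Lemma \ref{lem.Pf}(iii) block by block. Your proof is a single global argument: you strip off the blocks where $f_b\equiv 1$, then compute the remaining cycle count by a first-return map on $J = f^{-1}(2)$, factor that map as $\sigma_J \rho$ with $\sigma_J = P_f^{-1}|_J$ a full cycle and $\rho = t'(\pi_J)$ the geodesic permutation of the (still non-crossing) restriction $\pi_J$, and apply the geodesic identity $\#\rho + \#(\rho^{-1}\sigma_J^{-1}) = |J|+1$ exactly once. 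The upshot is that the whole formula is recognized as a single geodesic identity on a subset, which I find more transparent than the recursion; the price is that several structural facts (stability of $S$ under $P_f^{-1}\alpha$, that \emph{every} cycle of $(P_f^{-1}\alpha)|_S$ meets $J$ — note these cycles can traverse several blocks of $\alpha$, so the justification should track the dynamics, not just cite that each $b\in B^*$ meets $J$ — that $\pi_J$ is non-crossing in the cyclic order of $J$, and that $\rho$ is indeed $t'(\pi_J)$ relative to $\sigma_J^{-1}$) would need to be spelled out. Your treatment of the degenerate case $f\equiv 1$ (where $J=\emptyset$ and the ``$+1$'' from the geodesic identity disappears) correctly accounts for the $-\mathbf 1_{f\equiv 1}$ term.
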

\begin{proof}
We use a recurrence over the number of cycles of $\alpha$. If
$\alpha$ has just one cycle (i.e. $\alpha=\gamma$), then  we have
that
$$\sum_{b \in \alpha}\mathbf{1}_{f_b  \equiv
1}=\mathbf{1}_{f\equiv 1}$$ and by Lemma \ref{lem.Pf} (iii) we
obtain the conclusion.

For partitions $\alpha$ with more than one cycle it is convenient
to identify $\alpha$ with a non-crossing partition and we can
assume without loss of generality that $\alpha = \hat 1_r \oplus
\beta$, where $\hat 1_r$ is the contiguous block of size $r$,
$r\in\{1,\ldots,p-1\}$.

Let us first introduce some notations, which we used in what
follows. For a function $f\in\mathcal F_p$, we denote by $g$ the
restriction of $f$ to the set $\{1,\ldots,r\}$ and by $h$ the
restriction of $f$ to  $\{r+1,\ldots,p\}$. We  simply write
$f=g\oplus h$. By Definition \ref{def:Pf} we  have that $P_g\in
\mathcal S_r$ and $P_h\in \mathcal S(\{r+1,\ldots,p\})$. Now, we can define a
permutation $P_{g}\oplus P_{h}\in \mathcal S _p$  as follows
\begin{equation*}
\left(P_{g}\oplus P_{h} \right)(i)=\begin{cases}
P_{g}(i), &\text{for } i\in\{1,\ldots,r\},\\
P_{h}(i), &\text{for } i\in\{r+1,\ldots,p\}.
\end{cases}
\end{equation*}

In the following we want to obtain a formula for
$\#\left(P_{g\oplus h}^{-1}\alpha\right)$ in terms of
$\#\left(P_{g}^{-1}\gamma_{r}\right)$ and
$\#\left(P_{h}^{-1}\beta\right)$, where $\gamma_r$ is the
restriction of the full cycle $\gamma$ to a block of $\alpha$
containing  $r$ elements.
We distinguish the following three cases: \\
1)  If $g(i)= 1$, for all $i\in\{1,\ldots,r\}$,  then by
Definition \ref{def:Pf}, we have that
\begin{align} \label{eq:c1}
 \#\left(P_{f}^{-1}\alpha\right)&=\#\left(\left(\mathrm{id}\oplus P_{h}^{-1}\right)\left(\gamma_r\oplus
 \beta\right)\right) \notag\\
 &=\#\left(\gamma_r\oplus P_{h}^{-1}\beta\right)\\
 &= 1+\#\left(P_{h}^{-1}\beta\right).\notag
\end{align}
2) If $h(i)= 1$, for all $i\in\{r+1,\ldots,p\}$,  then we get
\begin{align} \label{eq:c2}
 \#\left(P_{f}^{-1}\alpha\right)&=\#\left(\left(P_{g}^{-1}\oplus \mathrm{id}\right)\left(\gamma_r\oplus
 \beta\right)\right) \notag\\
 &=\#\left(P_{g}^{-1}\gamma_{r}\right)+\#\beta.
\end{align}
3) If there exist $i_{0}\in\{1,\ldots,r\}$ and
$j_{0}\in\{r+1,\ldots,p\}$ such that $g(i_0)\neq 1$ and
$h(j_0)\neq 1$, then we prove that
\begin{equation}\label{eq:c3}
 \#\left(P_{f}^{-1}\alpha\right)=\#\left(P_{g}^{-1}\gamma_{\,r}\right)+\#\left(P_{h}^{-1}\beta\right)-1.
\end{equation}
Indeed, since $g^{-1}(2)\neq\emptyset$ and
$h^{-1}(2)\neq\emptyset$, the constants
\begin{align*}
&a=\min g^{-1}(2),\;b=\max g^{-1}(2)\\
&c=\min h^{-1}(2),\;d=\max h^{-1}(2)
\end{align*}
are well defined and we have that $a\leq b<c\leq d$. Moreover, a
simple computation shows that
$P_{g}=(a\, b \ldots)$ and $P_{h}=(c\, d \ldots)$ (if $a=b$ or/and $c=d$, then $P_g=\mathrm{id}$ or/and $P_h=\mathrm{id}$).
Since  $P_{g\oplus h}=(a\, d \ldots c\, b \ldots)$, it follows
that
\begin{equation*}\label{eq:ac}
P_{g\oplus h}=\left(P_g\oplus P_h\right) (a\, c).
\end{equation*}
Using the relation above and Eq. \eqref{eq:tr}, we
have
\begin{align*}
 \#\left(P_{g\oplus h}^{-1}\alpha\right)&=\#\left((a\,c) \left(P_{g}^{-1}\oplus P_{h}^{-1}\right)\left(\gamma_{\, r}\oplus \beta\right)
 \right)\\
 &=\#\left(\left(P_{g}^{-1}\oplus P_{h}^{-1}\right)\left(\gamma_{\, r}\oplus
 \beta\right)\right)\pm 1.
\end{align*}

Since $a$ and $c$  belong to different cycles of
$\left(P_{g}^{-1}\oplus P_{h}^{-1}\right)\left(\gamma_{\, r}\oplus
\beta\right)$, it follows that
\begin{align*}
  \#\left(P_{g\oplus h}^{-1}\alpha\right)&=\#\left(P_{g}^{-1}\gamma_{\, r}\oplus
 P_{h}^{-1}\beta\right)-1\\
 &=\#\left(P_{g}^{-1}\gamma_{\, r}\right)+\#\left(
 P_{h}^{-1}\beta\right)-1,
\end{align*}
and thus \eqref{eq:c3} hods.

Combining relations \eqref{eq:c1}, \eqref{eq:c2} and
\eqref{eq:c3}, we observe that
\begin{equation*}\label{eq.f1}
\#\left(P_{g\oplus h}^{-1}\alpha\right)=
\#\left(P_{g}^{-1}\gamma_{r}\right)+\#\left(P_{h}^{-1}\beta\right)-
\left(1-\textbf{1}_{g\equiv 1}\right)\left(1- \textbf{1}_{h\equiv
1}\right),
\end{equation*}
which is equivalent to
\begin{equation}\label{eq.f2}
\#\left(P_{g\oplus h}^{-1}\alpha\right)=
\#\left(P_{g}^{-1}\gamma_{r}\right)+\#\left(P_{h}^{-1}\beta\right)-
\left(1-\textbf{1}_{g\equiv 1}- \textbf{1}_{h\equiv
1}+\textbf{1}_{g\oplus h\equiv 1} \right).
\end{equation}

Successively applying Eq. \eqref{eq.f2}, we get
\begin{align*}
\#\left(P_{f}^{-1}\alpha\right)&=
\#\left(P_{g}^{-1}\gamma_{r}\right)+\#\left(P_{h}^{-1}\beta\right)-
\left(1-\textbf{1}_{g\equiv 1}- \textbf{1}_{h\equiv
1}+\textbf{1}_{f\equiv 1} \right)\\
&=\#\left(P_{g}^{-1}\gamma_{r}\right)+\#\left(P_{g'\oplus
h'}^{-1}\left( \hat{1}_{r'} \oplus \beta'\right)\right)-
\left(1-\textbf{1}_{g\equiv 1}-
\textbf{1}_{h\equiv 1}+\textbf{1}_{f\equiv 1} \right)\\
&=\#\left(P_{g}^{-1}\gamma_{r}\right)+\#\left(P_{g'}^{-1}\gamma_{r'}\right)+\#\left(P_{h'}^{-1}\beta'\right)-
\left(2-\textbf{1}_{g\equiv 1}-\textbf{1}_{g'\equiv 1}-
\textbf{1}_{h'\equiv 1}+\textbf{1}_{f\equiv 1} \right)\\
&\;\;\vdots\\
&=\sum\limits_{b\in\alpha}\#\left(P_{f_b}^{-1}\gamma_{|b|}\right)-\left(\#\alpha
-1 -\sum\limits_{b\in\alpha}\textbf{1}_{f_b\equiv
1}+\textbf{1}_{f\equiv 1}\right).
\end{align*}
Therefore, we obtain
\begin{equation}\label{eq.f3}
\#\left(P_{f}^{-1}\alpha\right)=\sum\limits_{b\in\alpha}\#\left(P_{f_b}^{-1}\gamma_{|b|}\right)-\left(\#\alpha
-1 -\sum\limits_{b\in\alpha}\textbf{1}_{f_b\equiv
1}+\textbf{1}_{f\equiv 1}\right).
\end{equation}
Using Lemma \ref{lem.Pf} (iii) in relation \eqref{eq.f3} and taking
into account the identities
$$ \sum\limits_{b\in\alpha}|b|=p \text{ and }
\sum\limits_{b\in\alpha}|f_b^{-1}(1)|=|f^{-1}(1)|,$$
the conclusion follows.
\end{proof}

\begin{remark}
The statement of the theorem above is not valid if $\alpha$ is not
a geodesic permutation.  Indeed, setting $\alpha=(1\,3)(2\, 4)\in
\mathcal S_4$ which is not a geodesic permutation and considering the function
$f\in \mathcal{F}_{4}$ given by $f(1)=f(3)=1$ and $f(2)=f(4)=2$,
it follows that $P_f=P_f^{-1}=(2 \, 4)$ and thus
$\#\left(P_f^{-1}\alpha\right)=\# (1\, 3)=1$. On the other hand,
we have
$$p-|f^{-1}(1)| + 2\sum_{b \in\alpha}
\mathbf{1}_{f_b  \equiv 1} + 1 -\# \alpha -\mathbf 1_{f \equiv
1}=4-2+2(1+0)+1-2-0=3.$$
\end{remark}

\section{Some convergence results}
\label{A-convergence}

We start with the classical Bai-Yin result \cite{byi}:
\begin{proposition}\label{prop:Bai-Yin}
Let $W_n$ be a  Wishart matrix of  parameters $n$ and $s_n$, such that $s_n / n \to c$ as $n\to\infty$, for some positive constant $c$.  Then, almost surely,
$$\lim_{n \to \infty} \left\|\frac{W_n}{s_n}\right\| = (\sqrt c +1)^2.$$
\end{proposition}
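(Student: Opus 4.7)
The plan is to establish this via the moment method, which is the classical route to results of Bai--Yin type \cite{byi}. The first step would be to compute $\mathbb{E}[\mathrm{Tr}(W_n^p)]$ using the graphical Wick calculus from Theorem \ref{thm:graphical-Wick}, applied to the single Ginibre block $X \in M_{n \times s_n}(\mathbb{C})$ whose product $X X^*$ defines $W_n$. This yields exactly the simple formula
$$\mathbb{E}[\mathrm{Tr}(W_n^p)] = \sum_{\alpha \in \mathcal{S}_p} n^{\#\alpha}\, s_n^{\#(\gamma^{-1}\alpha)}.$$
After dividing by $n s_n^p$ and using $\#\alpha + \#(\gamma^{-1}\alpha) \leq p+1$ with equality iff $\alpha$ is geodesic, the exponent counting together with the assumption $s_n/n \to c$ identifies the limiting normalized moments as $\sum_{\pi \in NC(p)} c^{\#\pi - p}\cdot (\text{combinatorial factor})$; these match the moments of the Mar\v{c}enko--Pastur law of the appropriate parameter, whose upper edge is $(1+\sqrt{c})^2$. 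This step already gives \emph{weak} convergence of the empirical spectral distribution and, in particular, the almost-sure lower bound $\liminf_{n} \|W_n/s_n\| \geq (1+\sqrt{c})^2$.

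For the matching upper bound, I would exploit positivity of $W_n/s_n$ via the inequality $\|W_n/s_n\|^{p} \leq \mathrm{Tr}((W_n/s_n)^{p})$, combined with Markov:
$$\mathbb{P}\Bigl(\|W_n/s_n\| > (1+\sqrt{c})^2 + \varepsilon\Bigr) \leq \frac{\mathbb{E}[\mathrm{Tr}((W_n/s_n)^{p_n})]}{\bigl((1+\sqrt{c})^2 + \varepsilon\bigr)^{p_n}}$$
for a sequence $p_n \to \infty$ to be chosen. A Borel--Cantelli argument then yields the almost-sure upper bound provided the right-hand side is summable in $n$. The freedom to pick $p_n$ is what allows the moment method, which only sees averages, to nail down the supremum of the spectrum.

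The main obstacle, and the technically delicate part, will be to push the moment estimate to $p = p_n \to \infty$ with explicit control of the error terms. Asymptotic enumeration of geodesic permutations only gives the leading order for each \emph{fixed} $p$; for Borel--Cantelli one needs a non-asymptotic inequality of the form
$$\mathbb{E}[\mathrm{Tr}((W_n/s_n)^{p_n})] \leq n \cdot \bigl((1+\sqrt{c})^2 + o(1)\bigr)^{p_n}$$
with $o(1)$ uniform in $p_n$ up to some growth rate (e.g.\ $p_n \sim n^{1/6}$, as in the classical treatment). The standard way to achieve this is to group permutations by their genus defect $g(\alpha) := p - 1 - |\alpha| - |\alpha^{-1}\gamma| \geq 0$, so that the contribution of permutations with defect $g$ carries a suppression factor $\min(n, s_n)^{-2g}$, and then to bound the total number of such permutations combinatorially (this is the Wishart analogue of the Harer--Zagier type counts used for Wigner matrices).

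An alternative, and often simpler, route would be to use Gaussian concentration: since $X \mapsto \|XX^*/s_n\|^{1/2} = \|X\|/\sqrt{s_n}$ is $1/\sqrt{s_n}$-Lipschitz as a function of the Gaussian vector of entries of $X$, one obtains sub-Gaussian concentration around $\mathbb{E}\|W_n/s_n\|^{1/2}$ with variance proxy $O(1/s_n)$. Combined with the moment-method estimate that $\mathbb{E}\|W_n/s_n\| \to (1+\sqrt{c})^2$, this gives the almost-sure convergence directly via Borel--Cantelli without needing sharp uniform control of high moments; the delicate combinatorial counting is replaced by Lipschitz concentration plus a mean estimate, which I would suggest as the cleaner route to implement in full.
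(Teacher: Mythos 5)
The paper does not prove this proposition; it is cited verbatim to Bai--Yin \cite{byi}, so there is no internal proof to compare against and your proposal is, by necessity, a full re-derivation. As a plan, the architecture you sketch --- Wick/genus expansion for the fixed-$p$ moments and the weak limit (hence the almost-sure lower bound), then either uniform high-moment bounds with $p_n\to\infty$ plus Borel--Cantelli, or Gaussian concentration, for the almost-sure upper bound --- is the standard and correct route, and you rightly locate the hard step in the uniform control of $\mathbb E\,\mathrm{Tr}\bigl((W_n/s_n)^{p_n}\bigr)$ for growing $p_n$.

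There is, however, a concrete problem you carried along without checking. Your own first step gives
$$\frac{1}{n}\mathbb E\,\mathrm{Tr}\bigl((W_n/s_n)^p\bigr)=\sum_{\alpha\in\mathcal S_p}n^{\#\alpha-1}\,s_n^{\#(\gamma^{-1}\alpha)-p}\ \longrightarrow\ \sum_{\pi\in NC(p)}(1/c)^{\#\pi-1},$$
which are the moments of the Mar\v{c}enko--Pastur law with rate $c$ and jump size $1/c$; its upper edge is $\frac{(1+\sqrt c)^2}{c}=(1+1/\sqrt c)^2$, not $(1+\sqrt c)^2$. (A sanity check: if $c\to\infty$ then $W_n/s_n\to\mathrm I_n$ by the law of large numbers and the norm must tend to $1$; only $(1+1/\sqrt c)^2$ does.) So the statement as printed carries a normalization slip, and your assertion that the limiting moments ``match the moments of the Mar\v{c}enko--Pastur law ... whose upper edge is $(1+\sqrt c)^2$'' does not follow from your calculation --- you fit the announced conclusion instead of deriving the edge. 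Separately, the concentration alternative in your last paragraph is circular as written: it presupposes $\mathbb E\|W_n/s_n\|\to(\text{edge})$, which weak convergence of the empirical spectral distribution does not deliver (it controls bulk averages, not the expectation of the largest eigenvalue). For Gaussian entries the clean fix is Gordon's or Davidson--Szarek's comparison inequality, which yields $\mathbb E\|X\|\le\sqrt n+\sqrt{s_n}$ directly; combined with Lipschitz concentration and the lower bound from weak convergence, this closes the argument with no high-moment analysis at all.
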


We recall now a result of Collins et al.  \cite{cny} and we prove a useful bound:

\begin{proposition}\label{prop:Z-d}
Let $W_n$ be a   Wishart matrix of  parameters $n$ and $s_n$, such that $s_n / n \to \infty$ as $n\to\infty$. Put
$$Z_n = \sqrt{ns_n}\left(\frac{W_n}{ns_n} - \frac{\mathrm{I}_n}{n} \right),\,n=1,2,\ldots$$
Then, the sequence $(Z_n)$ converges in moments to the standard semicircular distribution
$$d\sigma = \frac{1}{2\pi}\sqrt{4-x^2} \, \mathbf{1}_{[-2,2]}(x) dx,$$
where  $\mathbf{1}_{[-2,2]}$ is the indicator function of the interval $[-2,2]$.
Moreover, for any $\varepsilon > 0$, there exist some constants $C,a>0$, such that
$$\forall n, \qquad \mathbb P(\|Z_n\|>2+\varepsilon) \leq C\exp(-an^{1/3}).$$
In particular, almost surely,
$$\lim_{n \to \infty} \|Z_n\| = 2.$$
\end{proposition}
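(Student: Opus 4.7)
The statement packages three distinct claims—convergence of moments to the semicircle, an exponential concentration on the operator norm, and the resulting almost-sure limit—and the plan is to address them in this order. The first is a direct Wick computation in the spirit of Theorem \ref{thm:moments}; the second follows from Gaussian concentration; the third is a straightforward Borel--Cantelli consequence together with a matching moment-based lower bound.

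For the moment convergence, I write $Z_n = (W_n - s_n \mathrm{I}_n)/\sqrt{ns_n}$ and exploit $\mathbb{E} W_n = s_n \mathrm{I}_n$: in the Wick expansion of $\mathbb{E} \mathrm{Tr}\bigl((W_n - s_n \mathrm{I}_n)^p\bigr)$, the $-s_n \mathrm{I}_n$ subtraction exactly cancels the self-contraction inside each $W_n$ factor, so only permutations $\alpha \in \mathcal{S}_p$ with no fixed points contribute:
\begin{equation*}
\mathbb{E} \mathrm{Tr}\bigl((W_n - s_n \mathrm{I}_n)^p\bigr) = \sum_{\alpha \in \mathcal{S}_p,\ \alpha\text{ derangement}} n^{\#(\gamma^{-1}\alpha)} s_n^{\#\alpha}.
\end{equation*}
Dividing by $n(ns_n)^{p/2}$, each term becomes $n^{\#(\gamma^{-1}\alpha)-p/2-1}(s_n/n)^{\#\alpha - p/2}$. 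Every cycle of a derangement has length at least two, so $\#\alpha \leq p/2$; since $s_n/n \to \infty$, only $\#\alpha = p/2$ can survive in the limit, forcing $\alpha$ to be a product of disjoint transpositions. The geodesic inequality $\#\alpha+\#(\gamma^{-1}\alpha)\leq p+1$ then restricts $\alpha$ further to non-crossing pair partitions, each contributing a factor of $1$. Their number is $C_{p/2}$ for even $p$ and $0$ for odd $p$, matching the moments of the standard semicircle $\sigma$.

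For the concentration estimate $\mathbb{P}(\|Z_n\|>2+\varepsilon) \leq C\exp(-an^{1/3})$, I would apply the Borell--Sudakov--Tsirelson Gaussian concentration inequality to the map $X \mapsto \|Z_n\|$, viewed as a function of the $2ns_n$ real Gaussian coordinates of $X_n$. Because this map is only locally Lipschitz, one has to truncate to the event $\{\|X_n\| \leq K\sqrt{s_n}\}$, which holds with probability exponentially close to $1$ by Gaussian concentration for the $1$-Lipschitz functional $X \mapsto \|X\|$. On this event the Lipschitz constant of $\|Z_n\|$ is of order $1/\sqrt n$, and balancing the resulting Gaussian tail against the probability of the truncation failing, together with the input $\mathbb{E}\|Z_n\| \to 2$ (a consequence of the first part via a standard moment-to-norm argument), yields the stated bound. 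The main technical obstacle is the delicate optimisation of the truncation radius that produces the specific $n^{1/3}$ rate, an argument carried out in detail by Collins, Nechita and Youssef \cite{cny}.

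The almost sure convergence $\|Z_n\| \to 2$ follows in two halves. The summability of $\sum_n C\exp(-an^{1/3})$ together with Borel--Cantelli gives $\limsup_n \|Z_n\| \leq 2+\varepsilon$ almost surely for every $\varepsilon > 0$, and hence $\limsup_n \|Z_n\| \leq 2$ almost surely. For the matching lower bound, use $\|Z_n\|^{2p} \geq n^{-1}\mathrm{Tr}(Z_n^{2p})$; Gaussian concentration applied to the Lipschitz trace functional upgrades the first part to almost sure convergence, so $\liminf_n \|Z_n\| \geq C_p^{1/(2p)}$ almost surely for every $p$, and letting $p \to \infty$ produces the desired limit $2$.
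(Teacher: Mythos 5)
Your proposal proves considerably more than the paper does, but for the one step the paper actually writes out — the exponential tail bound — you take a genuinely different route. The paper's proof is very short: it imports from Collins et al.\ \cite{cny} the moment estimate $\mathbb{E}\,\mathrm{Tr}(Z_n^p)\leq C(2+\varepsilon)^p$ valid for all $p<\sqrt n$ (once $\sqrt{n/s_n}<\varepsilon/2$), then applies $\|Z_n\|^{2p}\leq\mathrm{Tr}(Z_n^{2p})$ and Markov's inequality, obtaining $\mathbb{P}(\|Z_n\|>2+2\varepsilon)\leq C\bigl((2+\varepsilon)/(2+2\varepsilon)\bigr)^{2p}$, and finally sets $p=\lceil n^{1/3}\rceil$. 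The $n^{1/3}$ rate is thus an artefact of the moment method — you need $p<\sqrt n$, and $n^{1/3}$ is a comfortable choice that leaves room for the prefactors — and has nothing to do with optimising a truncation radius in a Gaussian concentration argument. Your attribution of ``the delicate optimisation of the truncation radius that produces the specific $n^{1/3}$ rate'' to \cite{cny} is therefore misplaced: a concentration-of-measure argument, carried out as you describe (truncate $\|X_n\|\lesssim\sqrt{s_n}$, then use Lipschitz constant $O(1/\sqrt n)$), would actually give a \emph{stronger} bound of order $\exp(-cn)$, which of course implies the $n^{1/3}$ statement but makes the exponent in the claim look arbitrary rather than forced.

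Two smaller remarks. Your Wick computation for the semicircular limit is correct in spirit — the $-s_n\mathrm{I}_n$ subtraction does cancel precisely the self-contractions, leaving a sum over derangements, and only non-crossing pairings survive in the limit — but this is exactly the content the paper delegates to \cite{cny} rather than reproving. In your last paragraph, the trace functional $X\mapsto\frac1n\mathrm{Tr}(Z_n^{2p})$ is a polynomial, not Lipschitz, so invoking Gaussian concentration directly is not legitimate; one would again need a truncation (or a variance bound via Poincar\'e) to upgrade moment convergence to almost-sure convergence. The conclusion $\liminf\|Z_n\|\geq C_p^{1/(2p)}\to 2$ is fine once that is fixed. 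In summary: the plan is essentially sound and more self-contained than the paper, but the specific mechanism you propose for the tail bound is not the paper's mechanism, and your explanation of where the $n^{1/3}$ comes from is wrong.
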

\begin{proof}
We only prove the upper bound, the other results being taken from the paper of Collins et al. \cite{cny} It was shown %in the proof of Theorem 2.7 from  the paper of Collins et al. \cite{cny}
that, for all $\varepsilon>0$, there exists a constant $C>0$ such that, for all $n$ large enough for $\sqrt{n/s} < \varepsilon/2$ to hold, and for all $p < \sqrt n$,
$$\mathbb E \operatorname{Tr}(Z_n^p) \leq C (2+\varepsilon)^p.$$
Fix such a positive $\varepsilon$, and write
\begin{align*}
\mathbb P\left(\|Z_n\|> 2+2\varepsilon\right) &\leq \mathbb P\left(\operatorname{Tr}(Z_n^{2p}) > (2+2\varepsilon)^{2p}\right) \\
&\leq \frac{\mathbb E \operatorname{Tr}(Z_n^{2p})}{(2+2\varepsilon)^{2p}} \leq \frac{C(2+\varepsilon)^{2p}}{(2+2\varepsilon)^{2p}},
\end{align*}
where we used the bound from the paper of Collins et al. \cite{cny} for $\varepsilon/2$. The conclusion follows easily, by letting $p=\lceil n^{1/3} \rceil$ and adjusting the constant $C$ to accommodate for small values of $n$.
\end{proof}

\end{document}